\def\E {\rm E}
\newcommand{\tblue}[1]{\textcolor{blue}{{#1}}}
\providecommand{\minimize}{\mathop{\rm minimize}}
\renewcommand{\P}{\mathbb{P}}
\newcommand{\mc}{\mathcal{C}}
\newcommand{\tL}{\tilde{L}}
\newtheorem{thm}{Theorem}
\newtheorem{lmm}{Lemma}
\newtheorem{cor}{Corollary}
\theoremstyle{definition}
\newcommand{\cov}{\mathrm{Cov}}
\newcommand{\md}{\mathcal{D}}
\newcommand{\R}{\mathbb{R}}
\newcommand{\var}{\mathrm{Var}}
\newcommand{\sign}{\operatorname{sign}}
\newcommand{\tx}{\tilde{X}}
\newcommand{\aols}{\hat{\beta}_0^{\mathrm{OLS}}}
\newcommand{\aur}{\hat{\beta}_0^{\mathrm{UR}}}
\newcommand{\bols}{\hat{\beta}^{\mathrm{OLS}}}
\newcommand{\bur}{\hat{\beta}^{\mathrm{UR}}}
\newcommand{\buni}{\hat{\beta}^{\mathrm{uni}}}
\newcommand{\bbuni}{\beta^{\mathrm{uni}}}
\newcommand{\muols}{\hat{\mu}^{\mathrm{OLS}}}
\newcommand{\muur}{\hat{\mu}^{\mathrm{UR}}}
\newcommand{\gols}{\hat{\gamma}^{\mathrm{OLS}}}
\newcommand{\gur}{\hat{\gamma}^{\mathrm{UR}}}
\newcommand{\hl}{\hat{\lambda}}
\begin{document}

\title{Univariate-Guided  Sparse Regression }
\author{Sourav Chatterjee$^{1,3} $  \and  Trevor Hastie$^{2,3}$ \and 
 Robert Tibshirani$^{2,3}$}
 
\date
{\normalsize
$^1$Department of Mathematics \\
$^2$Department of Biomedical Data Science \\
$^3$Department of Statistics \\
Stanford University\\
}

\maketitle

\begin{abstract}
In this paper, we introduce ``uniLasso''--- a novel statistical method for regression.  
This  two-stage approach preserves the signs of the univariate coefficients and leverages their magnitude. Both of these properties are attractive for stability and interpretation of the model.
Through comprehensive simulations and applications to real-world datasets, we demonstrate that uniLasso outperforms lasso in various settings, particularly in terms of sparsity and model interpretability. We prove asymptotic support recovery and mean-squared error
consistency under a set of conditions different from the well-known irrepresentability conditions for the lasso.
Extensions to generalized linear models (GLMs) and Cox regression are also discussed.

\end{abstract}

\section{Introduction}
\label{sec:intro}
High-dimensional regression and classification problems are ubiquitous across fields such as genomics, finance, and social sciences. In these settings, the lasso (Least Absolute Shrinkage and Selection Operator) has emerged as a widely-used methodology due to its ability to simultaneously perform variable selection and regularization, yielding sparse and interpretable models. Despite its widespread use, lasso has certain limitations, including sensitivity to correlated predictors and inclusion of spurious features in the final model. There have been many related proposals
including MC+ \citep{zhang2010}, the elastic net  \citep{ZH2005},  the adaptive lasso \citep{zou2006a}, and SparseNet \citep{MFH2010}.

To address these challenges, we propose uniLasso, a novel regression methodology.  In this work, we aim to achieve two primary goals: predictive accuracy and (especially)  interpretability. 
To achieve this,  uniLasso integrates marginal  (univariate) information into a coherent multivariate framework: our method preserves the signs of the univariate coefficients and leverages their magnitude. This approach not only enhances predictive accuracy but also provides insights into the relative importance of predictors.

Here is an example, taken from unpublished work with collaborators of the third author.
(Since the work is unpublished, we do not give  background details). The dataset is from cancer proteomics, with 559 biomarkers measured on 81 patients, 20 healthy and 61 with cancer. We divided the data into 70\% training and 30\% test, and applied
both the lasso and uniLasso to the training set.
Both methods performed well, giving cross-validation error of about 5\%, and 4 and 3 errors respectively out of 30 test samples.
Table~\ref{tab:prot} shows the results of lasso and uniLasso applied to these data, using cross-validation to choose the model. In the top table there are two sign changes from
univariate to multivariate: \#12 goes from mildly protective to a negative indicator, while  \#11 goes in the opposite direction.

By design, uniLasso in the bottom table
has no sign changes. Further, its univariate coefficients are much larger (in absolute value) than those of the lasso. This makes it a more credible and stable model.
%We investigate stability in Section \ref{subsec:adverse}.

The remainder of this paper is organized as follows. Section \ref{sec:uni} provides a detailed description of the uniLasso methodology. We examine the relationship
between uniLasso and the adaptive lasso in Section \ref{sec:adapt}, and in Section \ref{sec:sparsity} we study the underlying reason for the strong sparsity produced by uniLasso. We compare uniLasso to lasso on a car price data set in Section \ref{sec:car}: here we show the utility
of unregularized uniLasso as an alternative to least squares.   In
Section \ref{sec:ortho} we look at the orthonormal design case, where
we can derive an explicit expression for the uniLasso
solution. Section \ref{sec:theory} presents some theoretical results
on support recovery and mean square error consistency. Section
\ref{sec:sim} presents simulation studies comparing uniLasso to lasso
and other benchmark methods, while Section \ref{sec:unireg} examines
uniReg, the unregularized ($\lambda=0$) case. Section \ref{sec:real}
applies uniLasso to real-world datasets, and in
Section~\ref{sec:multiclass} we adapt uniLasso to multiclass
classification problems.
% Section \ref{sec:moreexamples}  we present an ``adversarial attack''
% view, and multiclass and GWAS examples.
Section \ref{sec:external} explores the setting where additional data is available: not the complete data,  but just the univariate scores.  In Section \ref{sec:cv}
we study whether cross-validation works properly in the uniLasso setting. We discuss uniLasso for GLMs and the Cox Survival Model, as well as computation in 
Sections \ref{sec:UniLassocox} and \ref{sec:effic-comp-unin}. % NOTE
%Section 11.
Lastly,  Section \ref{sec:discussion} %NOTE
%Section 12
 concludes with a discussion of future directions.

\begin{table}[h]
\centering
    \begin{tabular}{cccc}
    \multicolumn{3}{c}{\em Model chosen by lasso }\\
 
Biomarker & Univariate LS Coefficient & Lasso Coefficient \\
   \hline\\
  1 & -49.24 & -18.92 \\ 
  2 & -37.65 & -5.42 \\ 
  3 & 25.27 & 37.49 \\ 
  4 & -24.74 & -11.26 \\ 
  5 & 22.91 & 14 \\ 
  6 & -22.38 & -12.75 \\ 
  7 & -17.97 & -11.91 \\ 
  8 & 13.02 & 0.79 \\ 
  9 & -12.71 & -1.85 \\ 
  10 & -10.14 & -2.32 \\ 
  11 & \tblue{1.81} & \tblue{-6.27}\\ 
  12 & \tblue{-0.34 }& \tblue{6.25} \\ 
\end{tabular}

 \begin{tabular}{ccc}
 \hline
 \hline\\
    \multicolumn{3}{c}{\em Model chosen by uniLasso }\\
 
Biomarker & Univariate LS Coefficient & uniLasso Coefficient\\
   \hline\\
  A & 93.13 & 13.82 \\ 
  B & -84.26 & -8.17 \\ 
  C & 77.69 & 15.86 \\ 
  D & 73.84 & 27.66 \\ 
  E & 69.15 & 16.21 \\ 
  F & 58.92 & 1.38 \\ 
  G & -53.46 & -10.5 \\ 
  H & -49.24 & -20.76 \\ 
  I & 25.27 & 10.48 \\ 
\hline
  \hline

 \hline
\end{tabular}
\vskip .2in
\caption{\em Proteomics study. Top table: Univariate least squares coefficients and lasso coefficients, for model chosen by lasso. The two sign changes from univariate to multivariate are shown in blue. Bottom table: same for model chosen by uniLasso.
Biomarkers are ordered by decreasing absolute value.
Most biomarkers are different in the two tables, with the exception of 1 and 3 in the first table corresponding to
H and I in the second table.}
\label{tab:prot}
\end{table}

\newpage
\section{Univariate-guided lasso}
\label{sec:uni}
\subsection{Our proposal}
\label{sec:our-proposal}
We assume the standard supervised learning setup:
we have  training features  and target $X_{n\times p},\; y_{n\times 1}$.
%, and test features $X_{test}$ of size $n_{test} \times  p$.
For now we assume that $y$  is quantitative and we fit a linear model using squared error loss;
later we discuss the binomial and other GLM families, as well as the
Cox model.

Our procedure has three simple steps, which we motivate here.
For interpretability and prediction accuracy,  we preprocess the
features in Step~1,  multiplying them by a robust version of their
univariate least-squares coefficient estimates. In Step~2 we fit a
non-negative lasso tuned by cross-validation, and Step~3 combines the
components of Steps~1 and 2 to produce a final model. Together these ensure that:
\begin{enumerate}
\item[(a)]the signs of the final coefficients agree with the
    signs of the univariate coefficients (or they are zero);
\item[(b)] features with larger univariate coefficients will tend to have larger coefficients in the final model.
\end{enumerate} 

\begin{algorithm}[H]
  \smallskip  
\centerline{\bf UniLasso algorithm}

\medskip
\begin{enumerate}
\item  For $j=1,2,\ldots,p$ compute the univariate intercepts and slopes   $(\hat\beta_{0j}, \hat\beta_j)$  and their leave-one-out (LOO) counterparts
 $(\hat\beta_{0j}^{-i}, \hat\beta_j^{-i}),\;i=1,\ldots,n.$
\item  {Fit the lasso} --- with an intercept, no standardization, and
  {non-negativity constraints  } --- to target $y$ and the
  univariate LOO fits as features

 \begin{equation*}
\minimize_{\theta} \;\left\{\frac{1}{n} \sum_{i=1}^n
  \Bigl(y_i-\theta_0-\sum_{j=1}^p {(\hat\beta_{0j}^{-i}
    +\hat\beta_j^{-i}x_{ij})} \theta_j \Bigr)^2 +\lambda
  \sum_{j=1}^p\theta_j\right\}\quad\mbox{ with  $\theta_j\geq 0
  \;\forall j $}.
% \label{eqn:nnlasso}
\end{equation*}
Select $\lambda$ by cross-validation.

%  \begin{equation*}
 %   \hat{\eta}(x) = \hat\theta_0 + \sum_{j=1}^p\hat\theta_j\hat\eta_j(x_j).
 % \end{equation*}
  
  \bigskip
%\rule{\textwidth}{1pt}. 

\item The final model can be written as  
%\begin{equation*}
$ \hat\eta(x)=\hat\gamma_0 + \sum_{j=1}^p\hat\gamma_j x_j$,
%\end{equation*}
with $\hat\gamma_j=\hat\beta_j\hat\theta_j$, and $\hat\gamma_0=\hat\theta_0+\sum_{\ell=1}^p\hat\beta_{0\ell}\hat\theta_\ell$.

\end{enumerate}

\end{algorithm}

  This procedure is computationally convenient: in Step~1 we can use
  efficient LOO formulas, and in Step~2 we can apply any efficient
  $\ell_1$ solver. Here we used the R language package {\tt glmnet}. Specifically, we use the 
  function {\tt cv.glmnet} to estimate the
  lasso path parameter, and have all of the functionality of {\tt
    glmnet} at our disposal. We provide a function {\tt cv.UniLasso}
  in the R package {\tt uniLasso}
  that implements this approach.

Note that  we {\em do not} standardize the features before applying
 the non-negative lasso in Step 2; the univariate LOO fits are all on the scale of the response.  From our knowledge  of  multiple
 linear regression, the first constraint --- agreement between
 univariate and multivariate signs --- may seem like an unreasonable restriction. However our belief is that in high-dimensions
  there are  likely to be a multitude of different models  that have about the same MSE as the ``optimal'' one chosen by the lasso. Hence it can make sense to choose one that is interpretable and sparser than that of lasso.

  These properties mean, for example, if the feature ``age'' has a
  positive univariate coefficient --- indicating increasing risk of the outcome variable (such as Alzheimer's disease), it will have a positive (or zero) coefficient in the final lasso model.
  And if age is strongly significant on its own,  it is more likely to be chosen in the multivariate model.

These conjectures are borne out by our numerical studies.
In simulations with varying problems sizes and SNR, and a number of
real datasets (see Section~\ref{sec:sim}), in  almost every case
uniLasso did no worse than the lasso in terms of out-of-sample MSE, and 
  delivered a substantially  sparser model. 
 %These real datasets were not ``cherry picked''   but were the first group of datasets that we tested.
We note that \cite{mein2012} studies sign-constrained least squares estimation for high-dimensional regression, which relates to our non-negativity constraint in our second step.

\begin{enumerate}[label={\bf Remark~\Alph*}., leftmargin=0pt, itemindent=!, labelwidth=*, align=left]
%\item In equation (\ref{eqn:nnlasso}), the first term has a multiplier of $1/n$. In other descriptions of the lasso,  the factor may be  set to  $1/2$; the {\tt glmnet package} uses $1/2n$.
% The different choices lead to equivalent problems and just differ in the scaling of the path parameter $\lambda$.
% In the theory of Section~\ref{sec:theory} we assume that the $1/n$ multiplier is used.

  \item
  The uniLasso procedure applies seamlessly to other GLMs
as well the Cox model. Indeed, all of the  models covered by the {\tt glmnet} package are at our disposal.
All we need are the separate fitted linear models and their LOO fit vectors in step 1, and then {\tt glmnet} can be directly applied.
The only challenge is find an (approximate)~LOO formula for the GLM
families. We give details of these computations in  Section~\ref{sec:effic-comp-unin}.
\item
  UniLasso can be thought of  as a version of stacked regression, or \emph{Stacking} \citep{Wo92,Br97a}.
Stacking  is a two-stage procedure for combining the predictions from  a set of models (``learners''),
in order to get improved predictions.
It works as follows.
A set of base models is trained on the training data; these models can be of the same type (e.g. gradient boosting)
 or different types (e.g., linear regression, decision trees, etc.).
Each model generates predictions, capturing specific aspects of the relationship between the predictors and the target variable.
A meta-model is then trained to combine the predictions of the base
models into a final prediction. This meta-model learns how to weigh
and integrate the LOO outputs of the base models to minimize the overall prediction error.  UniLasso is a special case of stacking where the individual learners are simple univariate regressions.
\item
  Why do we use the LOO univariate estimates $\hat\eta_j^{-i}=\hat\beta_{0j}^{-i} +\hat\beta_j^{-i}x_{ij}$ as
features in step~2, instead of  the univariate estimates
$\hat\eta_j^i=\hat\beta_{0j}+\hat\beta_jx_{ij}$?\footnote{In this case
  we could simply scale the features by
  their univariate slope coefficients and ignore the intercepts, which would
  be handled by the overall intercept in the model.} 
In traditional stacking  this is essential because the individual learners  can be of very different complexity.
In uniLasso, one would think that the learners (univariate regression) all have the same complexity so that the LOO versions are not needed.
Indeed, the theory in Section~\ref{sec:theory} is based on the LOO estimates but holds equally well for the usual (non-LOO) estimates.
However in practice we have found  that the LOO estimates lead to  greater sparsity and better performance, and hence we use them.
With the use of the usual (non-LOO)   univariate  estimates in uniLasso, the resulting estimator is closely related to the adaptive lasso and we explore this connection in Section \ref{sec:adapt}.
\item
  The non-negative garotte \citep{Br95} is another closely related method. It minimizes $\sum_i(y_i- \sum_{j}  c_j \hat\beta_j x_{ij})^2$ subject to $c_j \geq 0$ for all $j$,  and $\sum c_j \leq s$,
where $\hat\beta_j$ are the usual (multivariable) LS estimates. This is different from our proposal in an important way: our use of univariate LOO estimates in the first step.
The univariate coefficients lead to materially different solutions and also allow application to the $p>n$ scenario.
\end{enumerate}
\subsection{UniLasso with no-regularization}

We get an interesting special case of uniLasso if we set $\lambda= 0$, so that there is no $\ell_1$ regularization in Step~2.
That is, we solve the following::
 \begin{equation}
{\rm minimize}_{\theta} \;\left\{\frac{1}{n} \sum_{i=1}^n \Bigl(y_i-\theta_0-\sum_{j=1}^p \theta_j  {(\hat\beta_{0j}^{-i} +\hat\beta_j^{-i}x_{ij})}\Bigr)^2\right\}\quad\mbox{ with  $\theta_j\geq 0 \;\forall j $}.
\label{eqn:unireg}
 \end{equation}

 We call this ``uniReg'' for univariate-guided regression and it represents an interesting alternative to the usual least squares estimates.
For example, the non-negative constraint can still provide sparsity even though there is no $\ell_1$ penalty. If $p>n$ there can be multiple solutions; in this case we look for the sparsest solution by computing the limiting uniLasso solution as $\lambda\downarrow 0$.
%And it gives a unique solution even if $p>n$.
The standard error and distribution of the estimated coefficients can be estimated via the bootstrap.
We study uniReg  in detail in Section~\ref{sec:unireg}.

\subsection{Two examples --- one good, one bad}
\label{sec:twoexamples}
{\bf Homecourt. } We generated data with 100 observations, 30 standard
normal features with AR(1) correlation 0.8, 20\% sparsity and
non-negative coefficients in two stages, as follows:
\begin{eqnarray}
y'&\leftarrow&\sum_jx_j\beta_j+\sigma' z'
\end{eqnarray}
From these data we compute the $p=30$ univariate least squares coefficients
$\hat\beta_j^{uni}$ separately for each~$j$, and then generate $y$ as 
\begin{eqnarray}
y&\leftarrow&\sum_jx_j \hat\beta_j^{uni}\beta_j+\sigma z.
\label{eqn:homecourt}
\end{eqnarray}
The variance terms $\sigma'$ and $\sigma$ were chosen so that at both stages the  SNR
was 1. The idea here is that (\ref{eqn:homecourt}) roughly mimics the  model fit by uniLasso in its second stage.
\begin{figure}[hbtp]
\begin{center}
\includegraphics[width=5in]{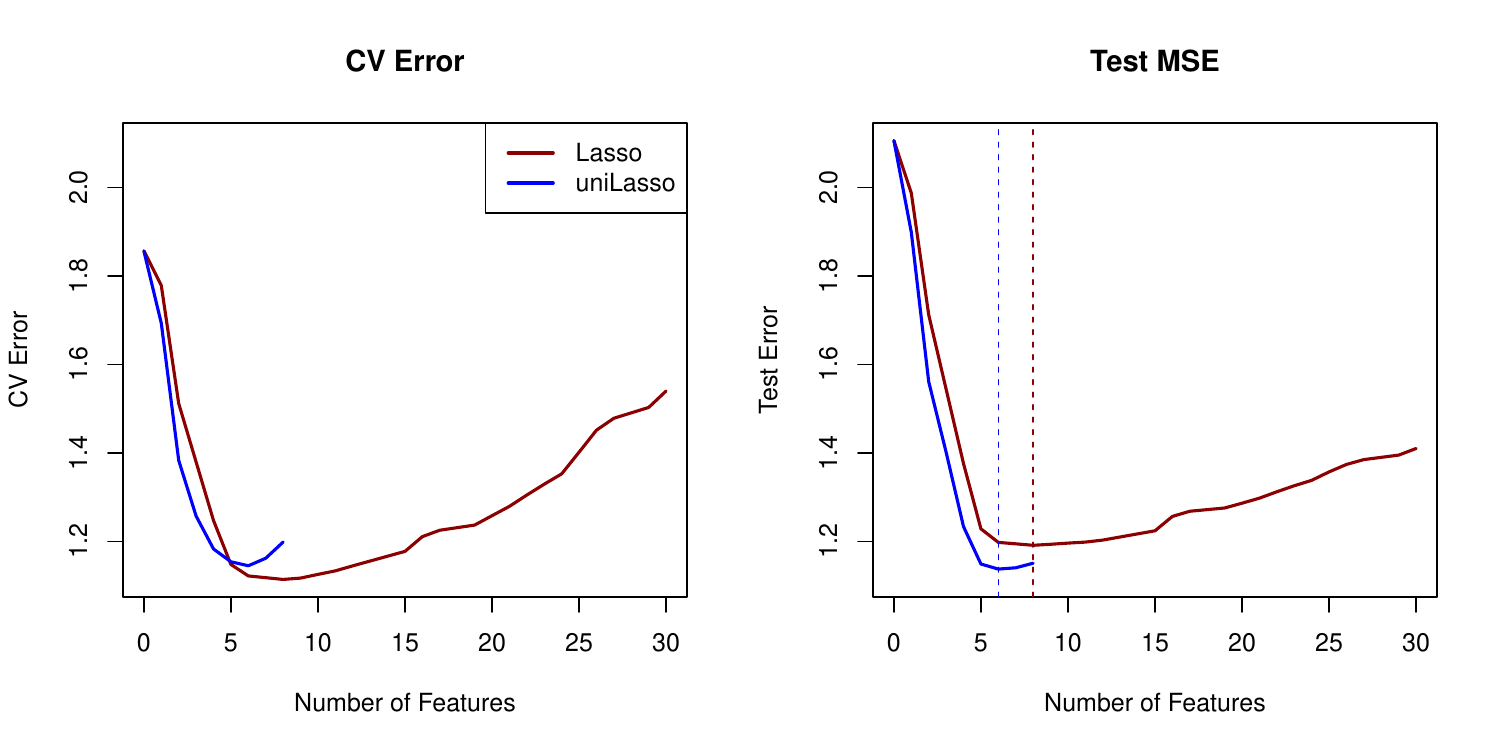}
\end{center}
\caption{\em Results  for homecourt example: CV and test set
  prediction error. The dashed vertical lines for each method correspond to the models
 chosen by CV.}
\label{fig:fig1}
\end{figure}

%Figure 1;
%the two quantities differ in expectation by $\sigma^2$
%Notice that UniLasso has much higher TPR and much lower FPR than lasso.

\begin{table}[hbtp]
\begin{center}
\begin{tabular}{rrrrr}
  \hline
 & MSE-lasso & MSE-uniLasso & Support-lasso & Support-uniLasso \\ 
  \hline
Mean & 1.098 & 1.077 & 7.660 & 4.790 \\ 
  se & 0.005 & 0.005 & 0.309 & 0.142 \\ 
   \hline
  \hline
 & TPR-lasso & TPR-uniLasso & FPR-lasso & FPR-uniLasso \\ 
  \hline
Mean & 0.737 & 0.700 & 0.135 & 0.025 \\ 
  se & 0.014 & 0.016 & 0.012 & 0.004 \\ 
   \hline
% from  sc300.r
% xtable(cbind(outerrmse[,1:2],out2[,1:2]))
%xtable(out3[,1:4])
%  & mse-lasso & mse-uniLasso & support-lasso & support-uniLasso \\ 
%   \hline
% mean & 19999.07 & 14000.44 & 10.00 & 6.00 \\ 
%   se & 1546.34 & 1096.42 & 0.42 & 0.31 \\ 
%    \hline
%   \hline
%  & TPR.lasso & FPR.lasso & TPR.uni & FPR.uni \\ 
%   \hline
% mean & 0.42 & 0.58 & 0.65 & 0.35 \\ 
%   se & 0.02 & 0.02 & 0.02 & 0.02 \\ 
%    \hline
   \end{tabular}
   \end{center}
   \caption{\em Test MSE, support, TPR (True Positive Rate) and FPR (False Positive Rate) for 100 simulations from  the setting of Figure~\ref{fig:fig1}.}
   \label{tab:tab1}
\end{table}
Figure~\ref{fig:fig1}
shows the CV and test error curves for the lasso and uniLasso.
% As expected, uniLasso has lower MSE, smaller support, and better true positive
%  and false positive rates.
We see that {\tt  uniLasso}  has test error a little below that of
{\tt lasso}, with a smaller active set.
%Table \ref{tab:tab1}. NOTE
Table~\ref{tab:tab1}
shows the result of 100 simulations from this setup.
We see that the same pattern emerges, and uniLasso exhibits a slightly
lower true positive rate and a much lower false positive rate than the lasso.

{\bf Counter-example.}  Here we take $n=100$, $p=20$, $x_1 \sim
N(0,1)$, $x_2=x_1+N(0,1)$, $\beta=(1,-.5, 0,0,\ldots 0)$, and $\mbox{error SD} =0.5$.
The remaining 18 features are standard normal.
This is an ``Achilles heel'' for uniLasso, as the (negative) sign of the population  coefficient for $x_2$ differs from its (positive) univariate sign.
As a result, the test MSE for uniLasso is about twice  that of lasso (detailed results are given in Section \ref{sec:sim}).
Clearly uniLasso fails badly here, but its high  CV error will alert
the user to this.  Motivated by this kind of example, we discuss a post-processing (``polish'') for
uniLasso in Section~\ref{sec:polish} that remedies this problem.

% \iffalse
% \begin{figure}[hbtp]
% \begin{center}
% \includegraphics[width=5in]{ex2_TH}

% \includegraphics[width=5in]{ex2a_TH}

% \end{center}
% \label{fig:fig2}

% \caption{\em Results for  simulated high SNR example, where the true
% number of nonzero coefficients is 100 and $p=1000$. Top panel: CV and test set prediction error. The dotted vertical lines shows the values of 
% $\lambda$,
% chosen by CV for each model.  Bottom panels:  true positives (green)
% and false positives (red).
% %Horizontal dotted lines: mean value of true nonzero coefficients for the chosen support set.
% }
% \end{figure}
% \fi
% %\newpage

\section{Relationship of uniLasso to the adaptive lasso}
\label{sec:adapt}

The adaptive lasso \citep{zou2006a} is defined by

\begin{equation}
\hat\gamma^*={\rm argmin} =\frac{1}{2} \sum_{i=1}^n (y_i-\gamma_0-\sum_{j=1}^p x_{ij}\gamma_j )^2 +\lambda \sum_{j=1}^p w_j|\gamma_j|
\end{equation}
where 
$w_j =1/|\hat \gamma_j|^\nu$. Here the vector $\hat\gamma$ is any root-$n$ consistent estimate of $\gamma$,
for example the least squares estimates. 

The original paper assumed $n>p$; with $p>n$, one cannot use the least-squares estimates.
One possibility is to use an initial estimate (from say ridge or lasso), solve the adaptive lasso,
and iterate. Another option is to use the univariate least-squares estimates; \citep{huang2008} consider a combination of these to achieve some theoretical guarantees. 

Let $\{\hat\beta_j\}_1^p$ be the univariate least squares coefficients. Consider the adaptive lasso with weights $w_j=1/|\hat\beta_j|,\;j=1,\ldots,p$. Then it is easy to show that this procedure is equivalent to
uniLasso with $\hat\eta_j^i= \hat\beta_{0j}+\hat\beta_jx_{ij}$ replacing their LOO versions
$\hat\eta_j^{-i}=\hat\beta_{0j}^{-i} +\hat\beta_j^{-i}x_{ij}$ in Step~2, and removing the non-negativity constraint.

 This procedure does not share some of the properties of
uniLasso. In particular, the final coefficients may not have the same signs as the univariate coefficients.
Importantly, in our simulations of Section~\ref{sec:sim}, it tends to
yield less sparse models and sometimes much higher MSE.

Alternatively, one could include the non-negativity constraints in this version of the adaptive lasso.
The following result shows that we obtain a procedure equivalent to  uniLasso, except for the use of LOO estimates in step 1.

{\noindent\bf Proposition 1.} Let $(\hat\beta_{0j},\hat\beta_j)$ be the
univariate least squares coefficients for variable $j$ in a $p$
variable linear model with data $\{(x_i,y_i)\}_1^n$. Let $\eta_j^i =\hat\beta_{0j}+\hat\beta_jx_{ij} $ be
the univariate linear fit for variable $j$ and observation $i$.

Then the following two problems are equivalent:

\begin{equation}
\label{eq:1}
\min_{\theta} \sum_{i=1}^n (y_i - \theta_0
-\sum_{j=1}^p\eta_j^i\theta_j)^2 + \lambda\sum_{j=1}^p |\theta_j|\quad
\mbox{ s.t. $\theta_j \geq 0\;\forall j$} 
\end{equation}

\begin{equation}
  \label{eq:2}
\min_{\gamma} \sum_{i=1}^n (y_i - \gamma_0
-\sum_{j=1}^px_{ij}\gamma_j)^2 + \lambda\sum_{j=1}^p\frac{ |\gamma_j|}{|\hat\beta_j|}\quad
\mbox{ s.t. $\sign(\gamma_j)=\sign(\hat\beta_j)\;\forall j$}
\end{equation}

The exact equivalence is obtained with
$\hat\gamma_j=\hat\theta_j\hat\beta_j$, and
$\hat\gamma_0=\hat\theta_0+\sum_{\ell=1}^p\hat\theta_\ell\hat\beta_{0\ell}$.

The proof is in the Appendix.
However our experiments with this version of adaptive lasso produced
models that were not nearly as sparse as the ones using LOO. In fact, in
Table~\ref{tab:ingredients} we see that enforcing sign constraints
led to less sparse models than  not enforcing sign constraints! 

% \iffalse

% Here is an example.
% Figure \ref{fig:adapt} shows the results from one realization of the ``low-SNR" setting of Section \ref{sec:sim}.
%  We see that 
% both versions of the adaptive lasso overfit, and cross-validation fails to detect this. 
% On the other hand, uniLasso chooses a null model and achieves a much lower MSE.
% The use of  the usual (non-LOO) univariate coefficients in the adaptive lasso causes the problem: it allows the procedure to overfit and cross-validation fails.
% The reason is that for computational efficiency, cross-validation is not wrapped around both steps (estimation of univariate coefficients and the lasso fit) but is only applied to the lasso fit.

% By using LOO fits, uniLasso avoids this problem, because step 1 is internally validated.
% We note that this poor performance of adaptive lasso does not always occur in this setting, but occurs often enough to hurt  its average performance, especially in support size.

% \begin{figure}[hbtp]
% \begin{center}
% \includegraphics[width=7in]{ex-adapt}
% \hskip 1in

% \end{center}
% \label{fig:adapt}
% \caption{\em CV curves from a low-SNR simulation with $n=300, p=1000$.
% The test MSE for each chosen model is given above each panel. The numbers across the top of each panel indicate the model size at each value of $\lambda$.
% \end{figure}
% \fi

Finally we note that \cite{candes2008} propose an iterated version of the adaptive lasso, in which the current solutions are used to define weights for the next iteration. They show that the method can enhance sparsity,
especially for signal recovery.

Next we examine the relative performance of uniLasso when we modify
the three main aspects of its design:
\begin{enumerate}[label=(\alph*)]
\item the use of LOO features $\hat\eta_j^{-i}$ versus non-LOO
  features $\hat\eta_j^i$,
\item the sign constraints, and
\item the scaling in Step~2 by the magnitude of the univariate
  coefficients, rather than just by their signs.
\end{enumerate}  
Note that the adaptive lasso is equivalent to uniLasso using non-LOO
estimates and no sign constraints.

Table~\ref{tab:ingredients} shows the results of 50 simulation runs with $n=300,\; p=1000$ and SNR=1, our ``medium-SNR''
setting detailed in Section~\ref{sec:sim}.

\begin{table}[ht]
\centering

\begin{tabular}{rrrrr}
\multicolumn{5}{c} {\bf LOO} \\
  \hline
 & lasso & uniLasso & uniLasso-noSign & uniLasso-noMag \\ 
  \hline
MSE & 0.55 & 0.59 & 0.61 & 0.60 \\ 
  se & 0.03 & 0.03 & 0.04 & 0.04 \\ 
   \hline
  \hline
Support & 53.78 & 15.32 & 66.38 & 40.12 \\ 
  se & 4.41 & 1.19 & 12.92 & 6.47 \\ 
   \hline
\end{tabular}
\centering
\begin{tabular}{rrrrr}
\multicolumn{5}{c} {\bf No LOO} \\
  \hline
 & lasso & uniLasso & uniLasso-noSign & uniLasso-noMag \\ 
  \hline
MSE & 0.55 & 0.61 & 0.57 & 0.61 \\ 
  se & 0.03 & 0.04 & 0.03 & 0.05 \\ 
   \hline
  \hline
Support & 53.78 & 36.84 & 24.48 & 48.98 \\ 
  se & 4.41 & 6.34 & 2.71 & 9.66 \\ 
   \hline
\end{tabular}
\caption{\em Results for 50 simulations from the  medium-SNR
  setting. The last two columns use variations of uniLasso: {\em
    noSign} removes the non-negative constraint, while {\em noMag}
  uses just the sign of the univariate estimates in Step 2 of uniLasso (not the magnitude). The top table uses LOO univariate coefficients (as in uniLasso) while the bottom table uses the usual (non-LOO) versions.}
\label{tab:ingredients}
\end{table}

We see that there is not much difference in test error, but uniLasso
produces the sparsest models by a large  margin. We also see the
curious result that with no LOO, the uniLasso {\em without} sign
constraints selects sparser models than {\em with} sign constraints.
\section{Why does the uniLasso produce such sparse solutions? }
\label{sec:sparsity}
As we saw in the previous section, the combination of LOO univariate estimates in Step (1)  and nonnegative constraints in step(2) seem to be the key to the strong
sparsity delivered by uniLasso. What is the explanation for this?
%\footnote{We thank Chris Habron for his contribution to this analysis.}
\footnote{We thank Chris Habron for  this analysis.}

The correlation between the responses $y_i$ and the univariate fitted
values $\hat{\beta}_{0j}+\hat\beta_jx_{ij}$ is always positive, and is the absolute value of the correlation between $y$ and the $j$th feature. 
When this correlation is large, then the correlation between $y_i$ and
the LOO fitted values $\hat\beta_{0j}^{-i} +\hat\beta_j^{-i}x_{ij}$ will also tend to
be positive.  But if the first correlation is positive  and small then the second correlation  is often negative.  
This could be explained by the fact that leaving an observation out ``pushes'' the fitted regression line away from that point, and if there wasn't much correlation to start with that will be enough to tip the correlation of the LOO feature to be negative.\footnote{In the extreme case of zero correlation between $y$ and the $j$th feature, the univariate fit is $\hat\beta_{0j}=\bar{y}$, and one can show that the the correlation between $y$ and the LOO version of $\bar{y}$ is -1!} 

Figure~\ref{fig:chris} shows an example from our ``medium-SNR'' setting with $n=300, p=1000$.
The correlations between $y$ and each feature needs to be larger than about 0.08 in order for the correlation between $y$ and each  LOO feature to also be positive. 

\begin{figure}
\begin{center}
  \includegraphics[width=3in]{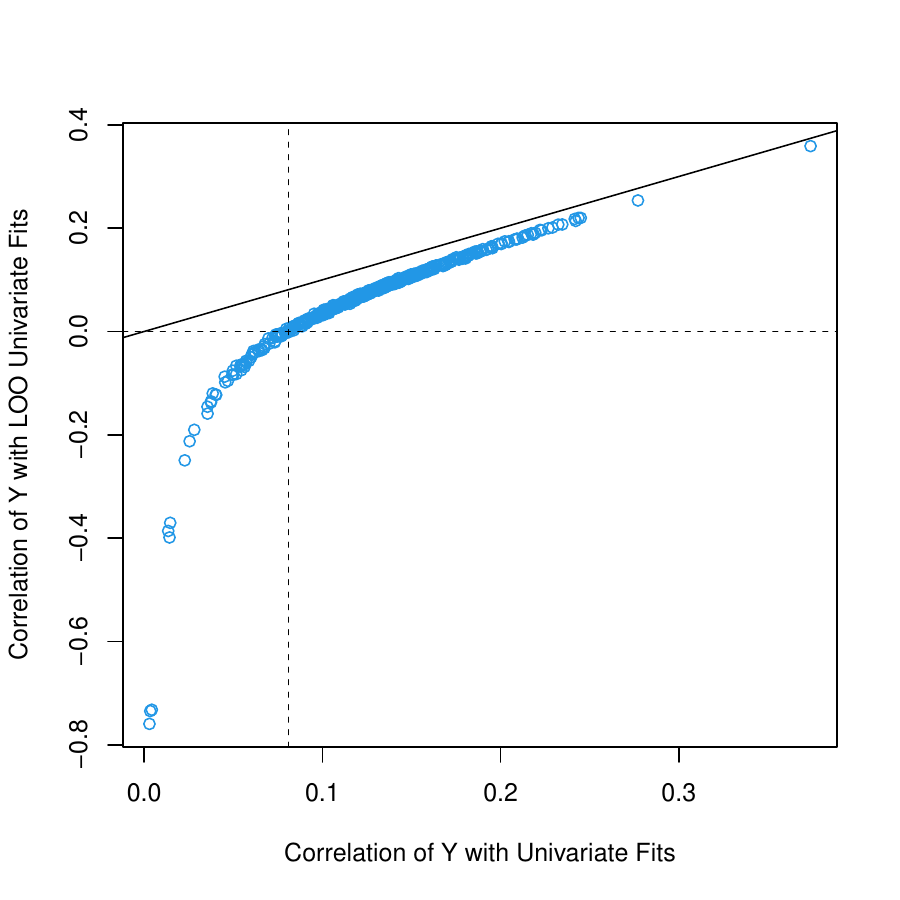}
  \end{center}
  \caption{\em Medium-SNR setting: correlation between the response and the  univariate LOO features, versus 
  the correlation between the response and the  usual  univariate
  (non-LOO) features. The solid line represents equality.}
  \label{fig:chris}
  \end{figure}
 In Appendix~\ref{sec:deriv-supp-sect} we show that if
 the absolute correlation between $y$ and a feature is smaller than
 approximately $\sqrt{2/n}$, then the correlation with the LOO fit
 will be negative. Since we have $n=300$ in this example, this
 evaluates to $0.082$.  Asymptotically, this means that the two-sided $p$-value for
 testing $\rho=0$ has to greater than about 0.16.

 In the second step of the uniLasso algorithm in
 Section~\ref{sec:our-proposal}, we fit a multivariate lasso model
 using all the LOO features, but with a positivity constraint on their
 coefficients. So even though marginally some LOO features are
 negatively correlated with the response, this does not guarantee that
 they will be omitted; however,  in practice they do tend to be omitted.

\section{Application to the   car prices data}
\label{sec:car}
This dataset consists of 25 features for predicting car prices for 205
cars, taken from
\url{https://www.kaggle.com/datasets/hellbuoy/car-price-prediction}
Make of car (22 levels) was first fit to the data and we modeled the residual on the remaining 26 predictors, 11 of which were categorical (and one-hot encoded).

\begin{table}[hbt]
\centering
\begin{tabular}{rrrrr}
  \hline
 & Univariate & Lasso & UniLasso & Polish \\ 
  \hline
body.style-3 & -4.768 & -0.224 & 0.000 & 0.000 \\ 
  wheel.base & 0.781 & 0.035 & 0.000 & 0.000 \\ 
  width & 3.082 & 0.475 & 0.378 & 0.000 \\ 
  height & 0.326 & -0.154 & 0.000 & 0.000 \\ 
  curb.weight & 0.014 & 0.004 & 0.002 & \tblue{-0.001} \\ 
  engine.type-7 & -0.652 & \tblue{3.303}& 0.000 & 0.000 \\ 
  num.of.cylinders-2 & 9.867 & \tblue{-2.546} & 0.000 & 0.000 \\ 
  num.of.cylinders-3 & -13.854 & 0.000 & -0.545 & 0.000 \\ 
  num.of.cylinders-4 & 10.543 & \tblue{-0.667} & 0.000 & 0.000 \\ 
  num.of.cylinders-7 & -0.652 & \tblue{0.009}& 0.000 & 0.000 \\ 
  engine.size & 0.167 & 0.064 & 0.048 & 0.000 \\ 
  fuel.system-2 & -9.143 & \tblue{ 0.208} & 0.000 & 0.000 \\ 
  fuel.system-5 & -0.694 & -1.248 & 0.000 & 0.000 \\ 
  bore & 17.571 &\tblue{ -3.729} & 0.000 & 0.000 \\ 
  horsepower & 0.169 & 0.017 & 0.042 & 0.000 \\ 
  peak.rpm & -0.001 & 0.000 & 0.000 &   \tblue{0.001}\\ 
   \hline
\end{tabular}
\caption{\em Coefficients from various models fit to the car-price data. Coefficients with sign changes relative to the univariate fits are marked in blue.}
\label{tab:car}
\end{table}

Table~\ref{tab:car} shows the univariate  least squares coefficients on the left, and the lasso and uniLasso coefficients in the middle columns.
The right column shows the ``uniLasso polish''  estimates described in Section \ref{sec:polish}. They result from a post-processing of uniLasso in which the lasso is
applied to the uniLasso residuals. Only features having a non-zero coefficient in at least one of the three rightmost columns are shown.
We see that uniLasso produces a much sparser model than the lasso.

For  Figure~\ref{fig:carbox} we took 50 random $2/3- 1/3$  train-test splits, and computed 4 summaries of model performance. 
We see that the test errors  of all methods are similar, while the support size and number of sign-violations are quite different, as expected. The stability figure is most interesting:
for each of the ${50\choose 2}$ pairs of  models, we defined the ``stability'' as ratio of the number of chosen features in common, divided by the
number of unique features in the union of the pair. We see a clear advantage for uniLasso over lasso, where the median proportion of common features
is over 80\%, versus 60\% for lasso. The uniLasso ``polish'', described in Section \ref{sec:polish}  gives a further improvement in stability, while increasing the support
and number of sign violations.

\begin{figure}[hbt]
  \begin{center}
  \includegraphics[width=.8\textwidth]{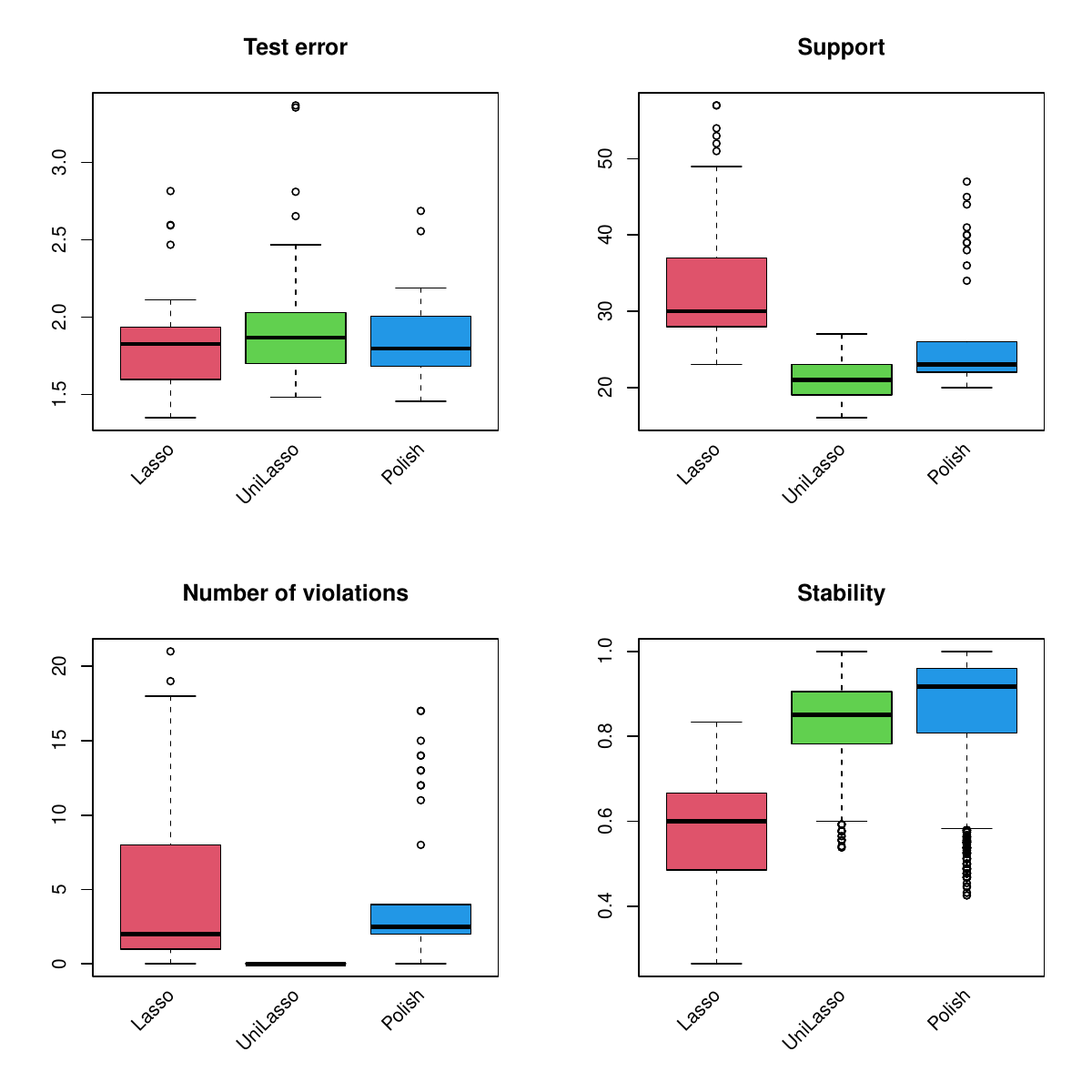}
  \end{center}
  \caption{\em Results for car prices data over 50 train-test splits.
  Shown are test set error, size of the chosen model (support), number
  of sign change violations relative to the univariate signs, and the
  stability --- the average proportion of
   common features over all  model pairs  for each method.}
  \label{fig:carbox}
  \end{figure}

\section{Analysis  of uniLasso with orthogonal features}
\label{sec:ortho}
In this section we derive explicit formula for the uniLasso
coefficients in the special case of orthonormal features.
Figure~\ref{fig:figorth} shows the lasso and uniLasso paths for a simulated example with an orthonormal feature matrix.
They look somewhat different, with the uniLasso path being sparser at
any stage along the path (as measured by the  $\ell_1$ norm of the coefficients).
\begin{figure}[hbtp]
  \includegraphics[width=\textwidth]{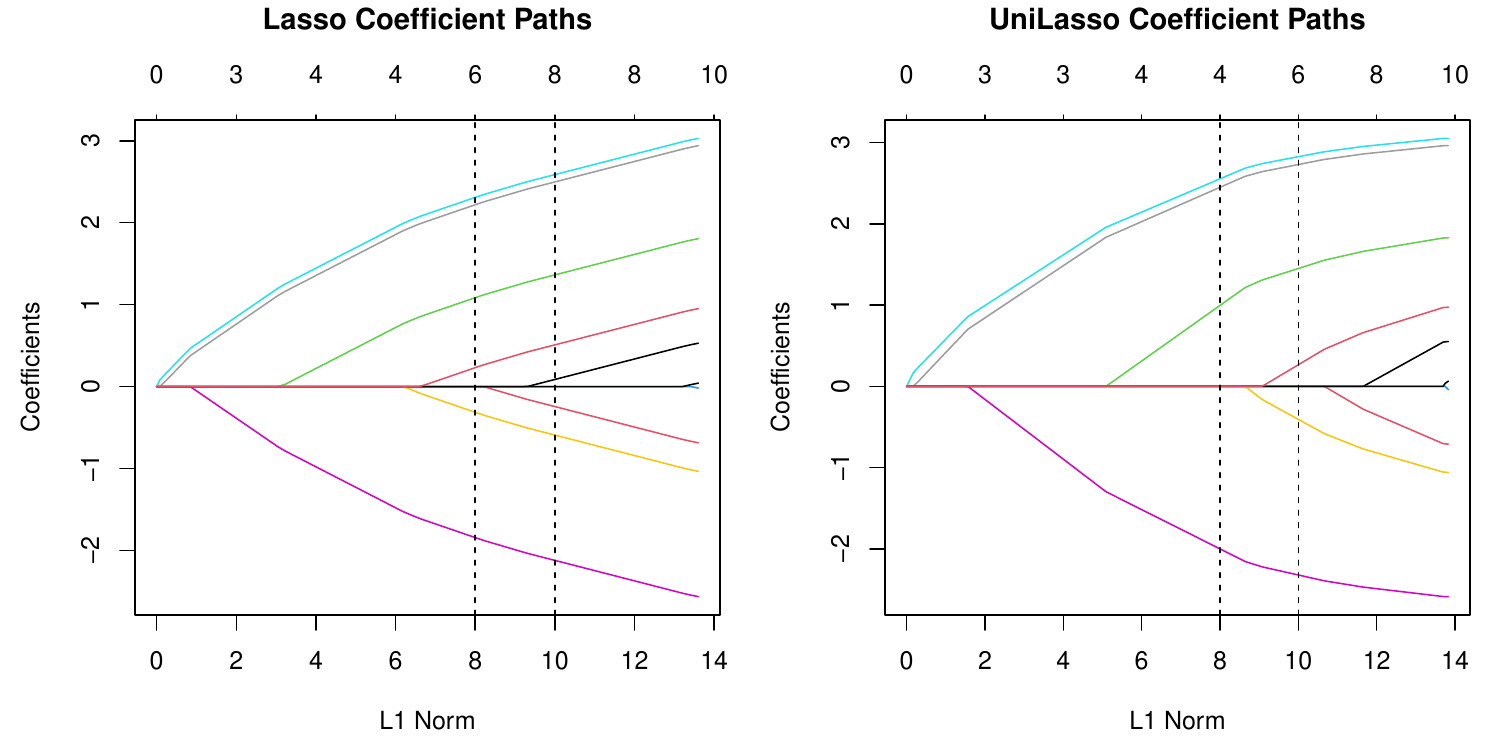}
  \caption{\em Coefficient paths for lasso and uniLasso, in  a
    simulated example with 10 orthonormal features. The end of the
    path represents the least squares fit, and in this case also the
    univariate coefficients. The paths for features with large
    absolute univariate coefficients look very similar in both plots. Those with small
    univariate features have a delayed entry in the right plot.}
  \label{fig:figorth}
  \end{figure}
  
 It is well-known that in this setting, the lasso coefficients are simply soft-thresholded versions of the univariate least squares
 estimates. This is (approximately) true of uniLasso, but with a different thresholding function, as we now show. 
 
Suppose \( X \) is orthonormal so that each column \( x_j \) satisfies \( \|x_j\|_2^2 = 1 \) and \( x_i^T x_j = 0 \) for \( i \neq j \).
Assume also that \( \bar{y}=0\) and \( \bar{x}_j= 0 \)  for each
$j$. Note that the least squares coefficients are
$\hat{\beta}_{0j}=0$ and $\hat{\beta}_{j}=x_j^T y$.

Here we use the actual fits $\hat\eta_j^i$ rather than the LOO fits
$\hat\eta_j^{-i}$ in the second stage --- an approximation that
simplifies the derivation, and often gives very similar solutions to uniLasso.
In this case we have $\hat\eta_j^i=\hat\beta_j x_{ij}$, since $\hat\beta_{0j}=0$.
We can ignore \( \theta_0 \), which will be zero since \( \bar{y} = 0 \) and all the
$\hat\eta_j=\hat\beta_j x_{j}$ have means zero.

The coefficients \( \hat{\theta}_j \) are determined by solving:

\[
\min_{\theta_j}\left\{ \frac{1}{2} \| y - \sum_{\ell=1}^p \theta_\ell {\hat\beta_\ell} x_\ell \|_2^2 + \lambda \| \theta \|_1\right\}
\]

We have the derivative w.r.t. \( \theta_j \):

\[
-{\hat\beta_j} x_j^\top \left( y - \sum_{\ell=1}^p \theta_\ell {\hat\beta_\ell} x_\ell \right) + \lambda \cdot \text{sign}(\theta_j) = 0
\]

Since the $x_j$ are orthogonal and unit norm, and using $x_j^\top y=\hat\beta_j$, we get

$
{\hat\beta_j}^2 \theta_j = {\hat\beta_j^2} - \lambda \cdot
\text{sign}(\theta_j),
$
and hence
\[
\theta_j = \left( 1 - \frac{\lambda}{{\hat\beta_j}^2} \right)_+.
\]

Hence the final coefficients for $x_j$ are
\begin{equation}
  \label{eq:orth}
\hat\gamma_j=\hat\theta_j\cdot\hat\beta_j={\hat\beta_j} \left( 1 -
  \frac{\lambda}{{\hat\beta_j}^2} \right)_+=\mbox{sign}(\hat\beta_j)\left(|\hat\beta_j|-\frac{\lambda}{|\hat\beta_j|}\right)_+
\end{equation}
The last expression can be compared with the similar expression for
the lasso in this situation:
\begin{equation}
  \label{eq:orth2}
\mbox{sign}(\hat\beta_j)\left(|\hat\beta_j|-\lambda \right)_+.
\end{equation}
Figure~\ref{fig:figshrink} shows the shrinkage  functions for ridge regression, lasso, and  uniLasso.
Ridge uses proportional shrinkage, while lasso translates all  coefficient to zero by the same amount.
uniLasso is similar to lasso, except that  the larger coefficients are shrunk less than the smaller ones.

\begin{figure}
 \includegraphics[width=6in]{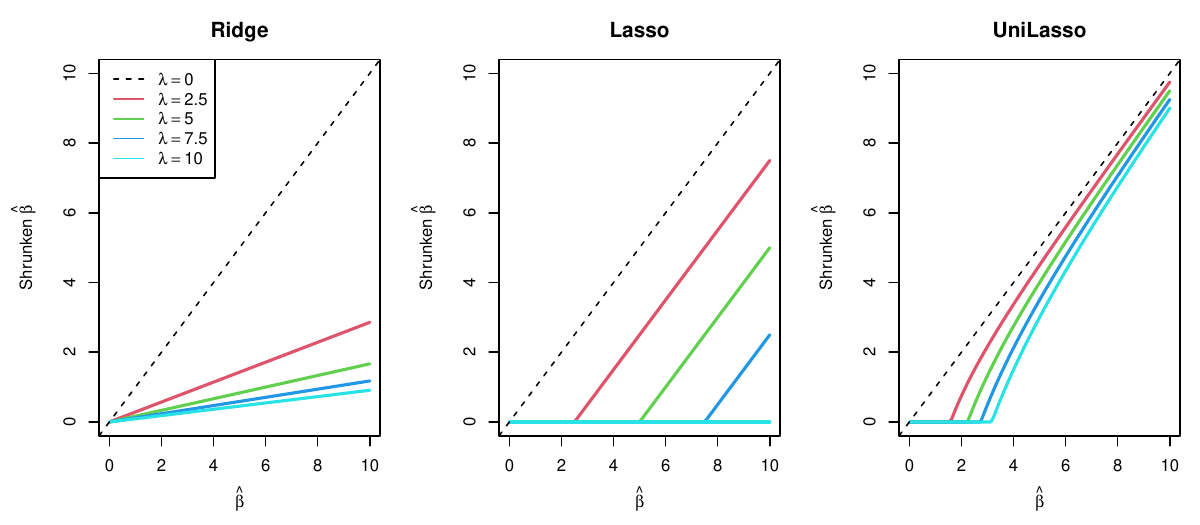}
  \caption{\em Shrinkage functions for ridge regression, lasso and  uniLasso, in  a simulated example with 10 orthonormal features.} 
  \label{fig:figshrink}
 \end{figure}
  
  The SparseNet procedure \citep{MFH2010} uses a thresholding function
  that has a roughly similar shape to that of uniLasso, but its objective is not
  convex. This shrinkage pattern is somewhere between lasso and
  $\ell_0$ or best subset selection.  UniLasso achieves this without
  losing convexity, a significant computational advantage.

\section{Theoretical analysis}
\label{sec:theory}
In this section we study the support recovery and mean-squared error properties of uniLasso.
Let $X_1,\ldots,X_p$ be square-integrable random variables with nonzero variance, defined on the same probability space, and let 
\[
Y = \gamma_0 + \sum_{j\in S} \gamma_j X_j + \epsilon,
\]
where $S$ is a subset of $\{1,\ldots,p\}$, $\epsilon$ is a mean zero random variable that is independent of the $X_j$'s, and the $(\gamma_j)_{j\in S}$ are nonzero coefficients. We will refer to $S$ as the {\it support}. Assume that $Y$ also has nonzero variance. Our data consists of i.i.d.~random vectors $(Y_i, X_{i,1},\ldots,X_{i,p})$, $i=1,\ldots,n$ (where $n\ge 2$), each having the same distribution as $(Y,X_1,\ldots,X_p)$.

The uniLasso algorithm with penalty parameter $\lambda >0$ goes as follows. For each $1\le i\le n$ and $1\le j\le p$, let 
\[
 \hat\beta^{-i}_j := \frac{\frac{1}{n-1}\sum_{k\ne i}Y_k X_{k,j}-(\frac{1}{n-1} \sum_{k\ne i}Y_k)(\frac{1}{n-1} \sum_{k\ne i} X_{k,j})}{\frac{1}{n-1}\sum_{k\ne i}X_{k,j}^2 - (\frac{1}{n-1}\sum_{k\ne i}X_{k,j})^2}
\]
be the regression coefficient from the univariate regression of $Y$ on $X_j$ omitting observation $i$, and let
\[
\hat{\alpha}^{-i}_j := \frac{1}{n-1}\sum_{k\ne i} Y_k - \frac{ \hat\beta^{-i}_j}{n-1}\sum_{k\ne i} X_{k,j}
\]
be the intercept term. Note that $ \hat\beta^{-i}_j$ is a consistent estimate of 
\[
\beta_j := \frac{\cov(Y, X_j)}{\var(X_j)}, % = \mathrm{Corr}(\tY, \tX_j) = \E(\tY\tX_j),
\]
and $\hat{\alpha}^{-i}_j$ is a consistent estimate of
\[
\alpha_j := \E(Y) -\beta_j \E(X_j).
\]
%where 
%\[
%\tY := \frac{Y-\E(Y)}{\sqrt{\var(Y)}}, \ \ \ \tX_j := \frac{X_j - \E(X_j)}{\sqrt{\var(X_j)}}
%\]
%are the standardized versions of $Y$ and $X_j$. 
Then, let 
\[ 
\hat{Y}_{i,j} := \hat{\alpha}^{-i}_j +  \hat\beta^{-i}_j X_{i,j}
\]
be the predicted value of $Y_i$ from this univariate regression. Next, obtain the estimates $\hat{\theta}_j$, $j=0,\ldots,p$, by minimizing
\[
L(\theta_1,\ldots,\theta_p) = \frac{1}{n}\sum_{i=1}^n (Y_i - \theta_0 - \theta_1\hat{Y}_{i,1}- \cdots- \theta_p\hat{Y}_{i,p})^2 +\lambda \sum_{j=1}^p \theta_j
\]
subject to the constraint that $\theta_j\ge 0$ for each $1\le j\le p$. Finally, define 
\[
\hat{\gamma}_j :=\hat{\theta}_j \hat{\beta}_j, % \frac{\hat{\theta}_j}{n}\sum_{i=1}^n  \hat\beta^{-i}_j
\]
to be the uniLasso estimate of $\gamma_j$ for $1\le j\le p$, where
\[
\hat{\beta}_j := \frac{\frac{1}{n}\sum_{k=1}^nY_k X_{k,j}-(\frac{1}{n} \sum_{k=1}^nY_k)(\frac{1}{n} \sum_{k=1}^n X_{k,j})}{\frac{1}{n}\sum_{k=1}^nX_{k,j}^2 - (\frac{1}{n}\sum_{k=1}^nX_{k,j})^2}
\]
is the regression coefficient from the univariate regression of $Y$ on $X_j$, and let 
\[
\hat{\gamma}_0 := \hat{\theta}_0 + \sum_{j=1}^p\hat{\theta}_j \hat{\alpha}_j,
\]
where
\[
\hat{\alpha}_{j} := \frac{1}{n}\sum_{k=1}^n Y_k - \frac{\hat{\beta}_{j}}{n}\sum_{k=1}^n X_{k,j}
\]
is the intercept term from the univariate regression of $Y$ on $X_j$.

The following theorem shows, roughly speaking, that if (1) $\sign(\gamma_j)=\sign(\beta_j)$ for each $j\in S$, (2) the penalty parameter $\lambda$ is bigger than $|\beta_j|$ for all $j\notin S\cup \{0\}$, and (3) both $\log p$ and $\log n$ are small compared to $n\lambda^2$, then with high probability, $\hat{\gamma}_j = 0$ for all $j\notin S\cup \{0\}$ and $\hat{\gamma}_j = \gamma_j + O(\lambda)$ for all $j\in S\cup \{0\}$.
\begin{thm}\label{mainthm}
Suppose that:
\begin{enumerate}
\item $\gamma_j\beta_j> 0$ for each $j\in S$.
\item The covariance matrix of $(X_j)_{j\in S}$ is nonsingular with minimum eigenvalue $\eta$.
\item There is a positive constant $C_0$ such that $\var(Y)\ge C_0$ and $\var(X_j)\ge C_0$ for each $j\in S$.
\item There are positive constants $C_1$ and $C_2$ such that for each $t\ge 0$ and $1\le j\le p$, $\P(|Y|\ge t)$, $\P(|\epsilon|\ge t)$ and $\P(|X_j|\ge t)$ are bounded above by $C_1e^{-C_2 t^2}$.
\end{enumerate}
Let $M_1 := \max_{j\in S\cup\{0\} } |\gamma_j|$, $M_2 := \min_{j\in S} |\beta_j|$ and $M_3 := \max_{j\in S}|\beta_j|$.  Then there are positive constants $K_1,K_2,K_3,K_4, K_5$ depending only on $C_0, C_1, C_2, \eta, M_1, M_2, M_3$ and $|S|$ such that if
\[
K_1\max_{j\notin S\cup \{0\}} |\beta_j | \le \lambda \le K_2, % \min_{j\in S} |\beta_j|,
\]
then %for any $t\ge 0$, %for any $\delta >0$,
\begin{align*}
&\P(\hat{\gamma}_j = 0 \textup{ for all } j\notin S\cup \{0\} \textup{ and } |\hat{\gamma}_j-\gamma_j|\le K_3\lambda   \textup{ for all } j \in S\cup \{0\})\\
&\ge 1 - K_4pne^{-K_5n\lambda^2}.
\end{align*}
\end{thm}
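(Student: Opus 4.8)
The natural route is a primal--dual (``oracle / KKT witness'') argument built on top of uniform concentration, together with the change of variables behind Proposition~1. I would first isolate a single ``good event'' on which every empirical quantity is close to its population counterpart, uniformly over $1\le i\le n$ and $1\le j\le p$. Under the sub-Gaussian tail bound in hypothesis~(4), the sample means $\frac1n\sum_k X_{k,j}$, $\frac1n\sum_k Y_k$, the sample second moments $\frac1n\sum_k X_{k,j}^2$ and $\frac1n\sum_k Y_kX_{k,j}$, and all their leave-one-out versions, are averages of sub-exponential variables, so each concentrates around its mean at rate $e^{-cn\delta^2}$; a union bound over the $\lesssim np$ of them costs a factor $np$. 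Dividing, and using hypothesis~(3) to keep the denominators $\widehat{\var}(X_j)$ bounded away from $0$ for $j\in S$ (and to keep the out-of-support $\hat\beta_j$ near $\beta_j$), gives $|\hat\beta_j-\beta_j|$, $|\hat\beta^{-i}_j-\beta_j|$, $|\hat\alpha_j-\alpha_j|$, $|\hat\alpha^{-i}_j-\alpha_j|\le\delta$, and the empirical Gram matrix of $(X_j)_{j\in S}$ within $\delta$ of the population covariance, hence with minimum eigenvalue $\ge\eta/2$. The crucial consequence is that on this event the LOO predictors $\hat Y_{i,j}$ differ from the \emph{fixed} i.i.d.\ population-fit predictors $\tilde Y_{i,j}:=\alpha_j+\beta_jX_{i,j}$ by a uniformly $O(\delta(1+|X_{i,j}|))$ amount, which removes the awkward leave-one-out dependence: everything downstream can be analyzed as if the design columns were the i.i.d.\ vectors $(\tilde Y_{i,j})_i$.

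Next I would solve the ``oracle'' restricted problem: minimize $L$ over $\theta$ with $\theta_j=0$ for all $j\notin S$. On the good event this objective is strongly convex in $(\theta_0,(\theta_j)_{j\in S})$ with modulus $\gtrsim\eta$ (via $\var(X_j)\ge C_0$ and the eigenvalue bound), so it has a unique minimizer $\hat\theta^{\mathrm{or}}$. Because $\epsilon\perp X$, the population unpenalized minimizer of this restricted risk is exactly the truth under the reparametrization $\theta_j=\gamma_j/\beta_j$ of Proposition~1; by hypothesis~(1), $\gamma_j\beta_j>0$, this point is strictly feasible, with coordinates bounded below and above in terms of $M_1,M_2,M_3$. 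A standard ``gradient is $O(\lambda)$, Hessian is $\gtrsim\eta$'' argument then yields $|\hat\theta^{\mathrm{or}}_j-\gamma_j/\beta_j|\le K\lambda$ for $j\in S$, provided $\lambda\le K_2$ is small enough that these coordinates stay strictly positive (so the non-negativity constraint is inactive on $S$). Multiplying by $\hat\beta_j=\beta_j+O(\delta)$ gives $|\hat\gamma_j-\gamma_j|\le K_3\lambda$ for $j\in S$, and the formula $\hat\gamma_0=\hat\theta_0+\sum_j\hat\theta_j\hat\alpha_j$ handles the intercept in the same way.

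The third step is the primal--dual witness: check that $\hat\theta^{\mathrm{or}}$, extended by zeros off $S$, satisfies the KKT system of the \emph{full} problem, i.e.\ that for each $j\notin S$ the stationarity inequality $\frac2n\langle \hat Y_{\cdot j},r^{\mathrm{or}}\rangle<\lambda$ holds, where $r^{\mathrm{or}}$ is the oracle residual. Writing $r^{\mathrm{or}}=\epsilon+(\text{terms of size }O(\lambda)\text{ from the }O(\lambda)\text{ bias of }\hat\theta^{\mathrm{or}}_S)+(\text{concentration error})$, and replacing $\hat Y_{\cdot j}$ by $\tilde Y_{\cdot j}=\alpha_j\mathbf 1+\beta_jX_{\cdot j}$ at cost $O(\delta)$, the only dangerous term is $\frac2n\langle\alpha_j\mathbf 1+\beta_jX_{\cdot j},\epsilon\rangle$, which concentrates around $2\alpha_j\,\bbE(\epsilon)+2\beta_j\cov(X_j,\epsilon)=0$; everything else is $O(|\beta_j|)+O(\lambda)+O(\delta)$. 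So the strict KKT inequality holds once $\lambda\ge K_1\max_{j\notin S\cup\{0\}}|\beta_j|$ with $K_1$ a large enough constant and $\delta\ll\lambda$, i.e.\ $n\lambda^2\gg\log(np)$. A strict dual inequality forces every minimizer of the (convex) full problem to vanish off $S$, and strong convexity on the $S\cup\{0\}$ block then pins down the rest; hence $\hat\theta=\hat\theta^{\mathrm{or}}$, giving $\hat\gamma_j=0$ for all $j\notin S\cup\{0\}$ and the estimation bound of step two for the genuine uniLasso solution. Finally I would collect the failure probabilities of all the concentration events (with $\delta\asymp\lambda$) into the stated $K_4\,np\,e^{-K_5n\lambda^2}$.

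I expect the main obstacle to be the third step: obtaining a sufficiently tight bound on $\langle\hat Y_{\cdot j},r^{\mathrm{or}}\rangle$ that is \emph{uniform over all} $j\notin S$ (of which there can be order $p$). This requires (a) showing the oracle residual is genuinely $\epsilon$ plus only $O(\lambda)$-plus-noise perturbations --- which leans on both the $O(\lambda)$ estimation bound and the uniform replacement of $\hat Y$ by the population fit from step one --- and (b) exploiting $\epsilon\perp X_j$ at the population level while paying only a mild concentration price to transfer it to the empirical inner product. The hypothesis $\lambda\ge K_1\max_{j\notin S}|\beta_j|$ is precisely what absorbs the one non-vanishing contribution in this calculation, namely the marginal signal $\beta_j$ that a null feature inherits through its correlation with the true predictors.
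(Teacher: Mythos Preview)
Your oracle/PDW outline is a reasonable strategy and differs from the paper's route, but step~3 as written has a genuine gap. You bound $\tfrac{2}{n}\langle\hat Y_{\cdot j},r^{\mathrm{or}}\rangle$ by ``$O(|\beta_j|)+O(\lambda)+O(\delta)$'' and then conclude the strict KKT inequality holds once $\lambda\ge K_1\max_{j\notin S}|\beta_j|$. But an $O(\lambda)$ term with an unspecified constant cannot be absorbed into the right-hand side $\lambda$; this is exactly where an irrepresentability-type constant would have to be shown to be $<1$ in a standard PDW argument, and you never do so. Concretely, the bias part of $r^{\mathrm{or}}$ is (up to lower order) $\sum_{k\in S'}(\gamma_k-\hat\gamma_k^{\mathrm{or}})X_{\cdot,k}$, and its inner product with $\alpha_j\mathbf{1}$ is $\alpha_j$ times the sample mean of the bias; since $\alpha_j=\E(Y)-\beta_j\E(X_j)$ is $O(1)$, a naive estimate gives a contribution of order $\lambda$ with a constant you do not control.

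The fix is short but essential and is exactly what the paper exploits (in a different guise): the unconstrained intercept forces $\bar r^{\mathrm{or}}=0$, so $\langle\alpha_j\mathbf{1},r^{\mathrm{or}}\rangle=0$ and the whole KKT check collapses to $\tfrac{2}{n}\beta_j\langle X_{\cdot j},r^{\mathrm{or}}\rangle$, which is $O(|\beta_j|)$ by Cauchy--Schwarz and $\tfrac{1}{n}\|r^{\mathrm{or}}\|^2\le \tfrac{1}{n}\|Y\|^2=O(1)$. With this correction your $O(\lambda)$ term disappears and the argument goes through. The paper implements this by first reparametrizing to the loss $\tilde L$ with features $\hat Y_{i,j}-\hat\alpha_j\approx\beta_jX_{i,j}$, so that for $j\notin S$ the \emph{column itself} has small norm $\approx|\beta_j|$; then a single Cauchy--Schwarz plus the crude bound $\tilde L(\hat\theta)\le\tilde L(0)=\tfrac{1}{n}\sum Y_i^2$ gives the off-support conclusion directly, with no need to decompose the residual or invoke $\epsilon\perp X_j$ at all.

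For the on-support part the paper also proceeds differently from your oracle construction: rather than first solving a restricted problem and then identifying it with the full solution, it works on the event $E=\{\hat\theta_j=0\ \forall j\notin S\}$ with the \emph{actual} minimizer, uses the sign identity $(\gamma_k-\hat\gamma_k)\beta_k^{-1}\,\tilde L_k(\hat\theta)\ge 0$ (which holds whether or not the constraint $\hat\theta_k\ge0$ is active), sums over $k\in S\cup\{0\}$, and extracts a quadratic inequality $\hat\eta\Delta^2\le K\lambda\Delta+o(1)\Delta$ for $\Delta=\max_{k\in S'}|\hat\gamma_k-\gamma_k|$. This avoids your mild circularity (``constraints are inactive because the oracle is close to truth, which we prove assuming constraints are inactive'') and delivers the $O(\lambda)$ bound in one step.
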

The above theorem is roughly comparable to the available results for the lasso. A close comparison would be, for instance, \cite[Theorem 11.3]{hastieetal15}. Like Theorem \ref{mainthm}, this theorem also assumes a lower bound on the covariance matrix of the covariates in the support, and the maximum difference between $\hat{\gamma}_j$ and $\gamma_j$ for $j\in S$ is of order $\lambda$. However, there is one key difference. The results about lasso, including the one cited above, require a condition known as {\it mutual incoherence} or {\it irrepresentability}. Roughly speaking, it means that if we regress $X_k$ for some $k\notin S$ on $(X_j)_{j\in S}$, the regression coefficients should be small. Notably, our Theorem \ref{mainthm} requires no such relation to hold between the covariates inside and outside the support. All we need is that the univariate regression coefficients of $Y$ on covariates outside the support are small.

Having said that, we make clear that assumption (1) above is a crucial one: namely that  $\gamma_j$ and $\beta_j$ have the same sign for each $j\in S$. The next result gives a natural sufficient condition under which this holds.
\begin{thm}\label{condthm}
Let $\delta_{j,k}$ denote the population value of the univariate regression of $X_j$ on $X_k$. Suppose that for every pair of distinct indices  $j,k\in S$, $\delta_{j,k}\ge 0$ if $\gamma_j$ and $\gamma_k$ have the same sign, and $\delta_{j,k}\le 0$ if $\gamma_j$ and $\gamma_k$ have opposite signs. Then $\beta_j$ is nonzero and has the same sign as $\gamma_j$ for each $j\in S$. Moreover, $|\beta_j|\ge |\gamma_j|$ for each $j\in S$.
\end{thm}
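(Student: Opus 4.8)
The plan is to write $\beta_j$ explicitly as a linear combination of the $\gamma_k$'s with coefficients given by the population univariate regression coefficients, and then read off both the sign and the magnitude statements directly from that formula together with the sign hypothesis.

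First I would exploit the independence of $\epsilon$ from the covariates. Since $\epsilon$ has mean zero and is independent of every $X_j$, we have $\cov(\epsilon,X_j)=0$, so from $Y=\gamma_0+\sum_{k\in S}\gamma_k X_k+\epsilon$ we obtain, for each $j\in S$,
\[
\cov(Y,X_j)=\sum_{k\in S}\gamma_k\,\cov(X_k,X_j).
\]
Dividing by $\var(X_j)>0$ and peeling off the $k=j$ term, for which $\cov(X_j,X_j)/\var(X_j)=1$, gives
\[
\beta_j=\gamma_j+\sum_{k\in S,\ k\ne j}\gamma_k\,\delta_{k,j},
\]
where $\delta_{k,j}=\cov(X_k,X_j)/\var(X_j)$ is the population regression coefficient of $X_k$ on $X_j$.

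Next I would multiply through by $\gamma_j$ and control each cross term. The point is that for distinct $j,k\in S$ the hypothesis forces $\gamma_j\gamma_k\,\delta_{k,j}\ge 0$: the quantities $\delta_{k,j}$ and $\delta_{j,k}$ have the same sign, since they differ only by the positive factor $\var(X_k)/\var(X_j)$, so when $\gamma_j,\gamma_k$ agree in sign we have $\gamma_j\gamma_k>0$ and $\delta_{k,j}\ge 0$, and when they disagree $\gamma_j\gamma_k<0$ and $\delta_{k,j}\le 0$; in both cases the product is nonnegative. Hence
\[
\gamma_j\beta_j=\gamma_j^2+\sum_{k\in S,\ k\ne j}\gamma_j\gamma_k\,\delta_{k,j}\ \ge\ \gamma_j^2\ >\ 0 .
\]
From this the two conclusions follow at once: $\gamma_j\beta_j>0$ shows $\beta_j\ne 0$ and $\sign(\beta_j)=\sign(\gamma_j)$, and dividing $\gamma_j\beta_j\ge\gamma_j^2$ by $|\gamma_j|>0$ yields $|\beta_j|\ge|\gamma_j|$.

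There is no real obstacle here; the argument is essentially a one-line computation once the covariance identity coming from the independence of $\epsilon$ is in place. The only thing to watch is the bookkeeping in the index order of $\delta$ — whether it denotes the regression of $X_j$ on $X_k$ or of $X_k$ on $X_j$ — but since the two orientations have the same sign, the sign hypothesis applies equally in either form, so this causes no difficulty.
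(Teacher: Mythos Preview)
Your proof is correct and follows essentially the same route as the paper: derive the identity $\beta_j=\gamma_j+\sum_{k\ne j}\gamma_k\delta_{k,j}$ from the independence of $\epsilon$, then use the sign hypothesis to see that every cross term contributes nonnegatively. The paper divides by $\gamma_j$ to obtain $\beta_j/\gamma_j\ge 1$ whereas you multiply by $\gamma_j$ to obtain $\gamma_j\beta_j\ge\gamma_j^2$, a purely cosmetic difference; your explicit remark that $\delta_{j,k}$ and $\delta_{k,j}$ share the same sign is a useful clarification that the paper leaves implicit.
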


The last result, partly  due to Ryan Tibshirani, gives a necessary and sufficient condition  for equality of signs.

\begin{thm}\label{condthm2}
Suppose that $Y = \sum_{j\in S} \gamma_j X_j +\epsilon$, where all the $\gamma_j$'s are nonzero, and $\epsilon$ has zero mean, finite variance, and is independent of $(X_j)_{j\in S}$. Let $\beta_j$ be the coefficient of $X_j$ in the univariate (population) regression of $Y$ on $X_j$. Let $\delta_{kj}$ be the coefficient of $X_j$ in the univariate (population) regression of $X_k$ on $X_j$. For each $j$, let $A_j$ be the set of $k\in S$ for which $\sign(\delta_{kj}) = \sign (\gamma_k \gamma_j)$ or $\delta_{kj}=0$.  Then $\beta_j\gamma_j \ge 0$ if and only if 
\[
\sum_{k\notin A_j} |\gamma_k \delta_{kj}|\le \sum_{k\in A_j} |\gamma_k \delta_{kj}|. 
\]
In particular, if $A_j=S$, then this holds.
\end{thm}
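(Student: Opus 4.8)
The plan is to reduce the whole statement to a single explicit formula for $\beta_j$ and then read off its sign by partitioning a finite sum. First I would use the independence of $\epsilon$ from $(X_j)_{j\in S}$ to compute the population covariance $\cov(Y,X_j)=\sum_{k\in S}\gamma_k\cov(X_k,X_j)$, and then divide by $\var(X_j)$ and recognize $\cov(X_k,X_j)/\var(X_j)=\delta_{kj}$ to get the clean identity
\[
\beta_j=\sum_{k\in S}\gamma_k\delta_{kj}.
\]
The one point worth isolating here is that $\delta_{jj}=\cov(X_j,X_j)/\var(X_j)=1$, so the $k=j$ term equals $\gamma_j$, and since $\sign(\delta_{jj})=\sign(\gamma_j\gamma_j)$, the index $j$ always lies in $A_j$; this is what makes the diagonal contribution land on the "good" side of the partition.

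Next I would multiply through by $\gamma_j$, writing $\beta_j\gamma_j=\sum_{k\in S}\gamma_j\gamma_k\delta_{kj}$, and check the sign of each summand against the definition of $A_j$. For $k\in A_j$ we have either $\delta_{kj}=0$ or $\sign(\delta_{kj})=\sign(\gamma_k\gamma_j)$, so $\gamma_j\gamma_k\delta_{kj}=|\gamma_j\gamma_k\delta_{kj}|\ge 0$; for $k\notin A_j$ we have $\delta_{kj}\ne 0$ and $\sign(\delta_{kj})\ne\sign(\gamma_k\gamma_j)$, so $\gamma_j\gamma_k\delta_{kj}=-|\gamma_j\gamma_k\delta_{kj}|<0$. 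Hence
\[
\beta_j\gamma_j=|\gamma_j|\Bigl(\sum_{k\in A_j}|\gamma_k\delta_{kj}|-\sum_{k\notin A_j}|\gamma_k\delta_{kj}|\Bigr).
\]
Since $|\gamma_j|>0$, the sign of $\beta_j\gamma_j$ equals the sign of the bracketed difference, which is exactly the claimed equivalence $\beta_j\gamma_j\ge 0\iff\sum_{k\notin A_j}|\gamma_k\delta_{kj}|\le\sum_{k\in A_j}|\gamma_k\delta_{kj}|$. The ``in particular'' claim is then immediate: if $A_j=S$ the right-hand sum is empty while the left-hand sum contains the term $k=j$, equal to $\gamma_j^2>0$, so the inequality holds (and in fact $\beta_j\gamma_j>0$).

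There is no substantial obstacle in this argument; it is a covariance expansion followed by sign bookkeeping, and the only care needed is in verifying that $A_j$ as defined is precisely the set of indices contributing a nonnegative term to $\gamma_j\beta_j$, with $j\in A_j$ accounting for the diagonal. I would also remark that this recovers Theorem~\ref{condthm}: under its hypothesis, $\sign(\delta_{kj})$ matches $\sign(\gamma_k\gamma_j)$ (or $\delta_{kj}=0$) for every pair, so $A_j=S$ for all $j$, giving $\beta_j\gamma_j>0$; and from the displayed identity, $\beta_j\gamma_j\ge|\gamma_j|\,|\gamma_j\delta_{jj}|=\gamma_j^2$, which yields $|\beta_j|\ge|\gamma_j|$ as well.
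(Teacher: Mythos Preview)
Your proof is correct and follows essentially the same route as the paper's: both expand $\beta_j=\sum_{k\in S}\gamma_k\delta_{kj}$ and then split the sum according to the sign of $\gamma_j\gamma_k\delta_{kj}$, the only cosmetic difference being that the paper divides by $\gamma_j$ to study $\beta_j/\gamma_j$ while you multiply by $\gamma_j$ to study $\beta_j\gamma_j$. One tiny slip: in your ``in particular'' paragraph, the $k=j$ term of $\sum_{k\in A_j}|\gamma_k\delta_{kj}|$ is $|\gamma_j|$, not $\gamma_j^2$ (the $\gamma_j^2$ appears only after the outer factor $|\gamma_j|$), but this does not affect the argument.
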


The proofs of these results are in the Appendix.

\vskip -2in

\section{Some simulation results}
\label{sec:sim}

Figure~\ref{fig:simplots}  shows the results of a comparative study of uniLasso with lasso and three other sparse regression methods.
Shown are  means and standard errors  over 100 simulations.
The methods are:
\begin{itemize}
\item Lasso (using {\tt glmnet}).
\item  UniLasso
\item Polish: a post-processing of the uniLasso solution, described in Section \ref{sec:polish}.
\item Adaptive lasso using penalty weights $1/|\hat\beta_j|$ using the
  univariate coefficients $\hat\beta_j$. These first three methods use  CV to estimate the value of $\lambda$ that minimizes test error.
\item Matching: a variant of the lasso, where we increase the $\lambda$ parameter from the CV-based optimal value, until the support size matches that of uniLasso.
\end{itemize}

The simulation scenarios are:
\begin{description}
\item{1. {\bf Low, medium and high SNR}}:  Here $n=300$, $p=1000$,  $N(0,1)$ features with pairwise correlation 0.5.
The nonzero regression coefficients  are $N(0,1)$; Gaussian errors with SD $\sigma$ chosen to produce low $<1$, medium ($\approx 1$) , or high ($>2)$ SNRs.
\item{2. {\bf Homecourt}}: Here $n=100$, $p=30$ with details as specified in Section~\ref{sec:twoexamples}.
\item{3. {\bf Two-class}}:   We set $n = 200$, $p = 500$ with a binary target $y$. The feature  covariance within each class is AR(1) with $\rho=0.8$,
and the first 20 features are shifted in the $y = 1$ class by 0.5 units.
\item{4. {\bf Counter-example}}: $n=100$, $p=20$, $x_1 \sim N(0,1)$,
  $x_2=x_1+N(0,1)$, $\beta=(1,-.5, 0,0,\ldots 0)$, error SD = 0.5.
\end{description}
In the first three scenarios, 10\%  of the  true coefficients are non-zero; for example, in (1), 100 of the 1000 true coefficients are non-zero.
 Figure~\ref{fig:simplots} shows the test set error  (relative to the lasso) and support size (number of non-zero coefficients).

\begin{figure}[tbp]
\includegraphics[width=6in]{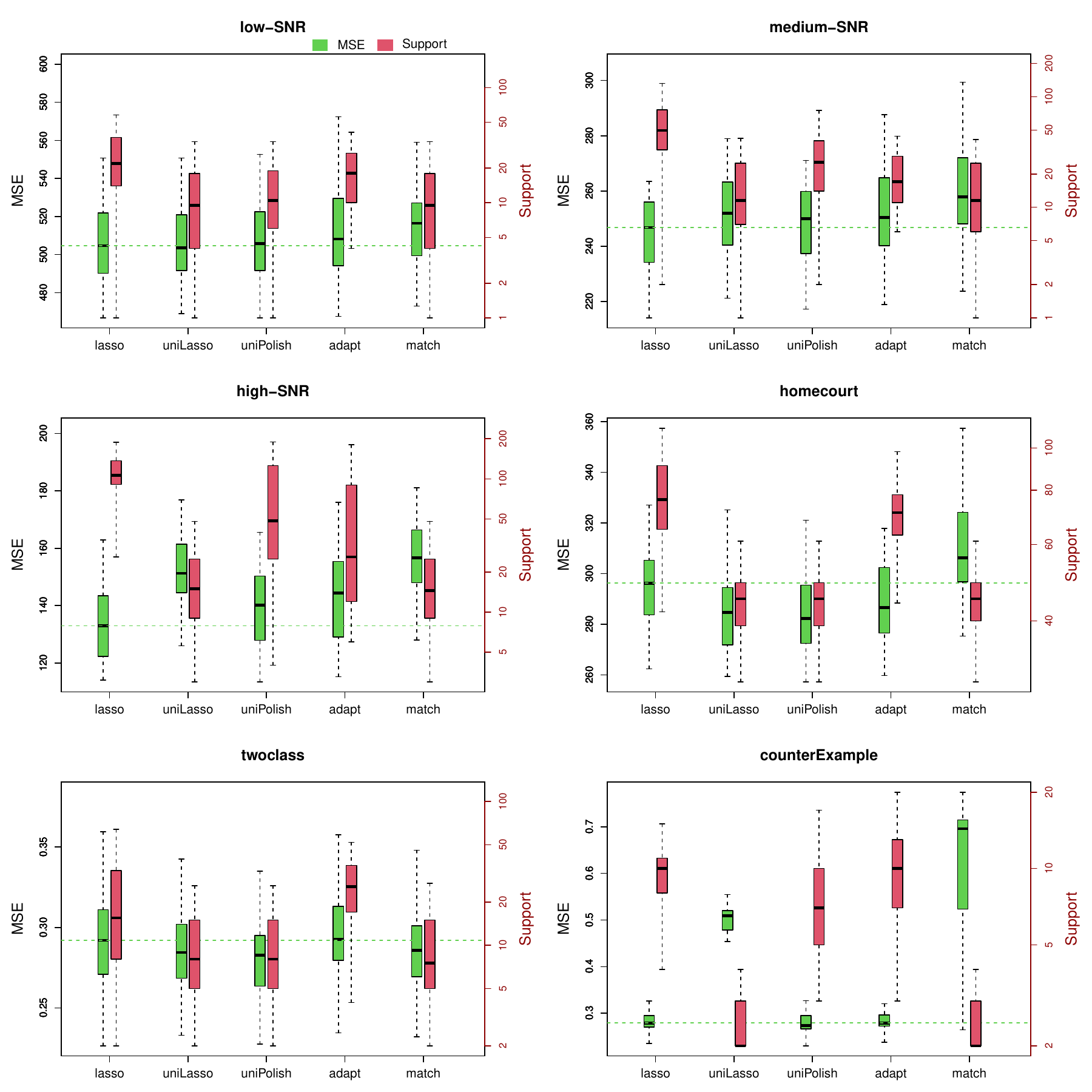}
\caption{\em $N=300$, $p=1000$: test error (Gaussian) or  misclassification rate (Binomial), along with  support size.}\label{fig:simplots}
\end{figure}

Overall we see that uniLasso shows MSE similar to that of lasso, with often a much smaller support. The same general pattern holds with the real datasets.
Exceptions to this are the high-SNR setting
where lasso wins handily and the homecourt setting designed to exploit
uniLasso's strength (recall Theorem~\ref{condthm}).  The ``counter-example'' setting is the classic case where features have positive correlation
but the true regression coefficients have opposite signs;  not
surprisingly uniLasso does poorly. The polish method and adaptive
lasso do reasonably well, while the matching method does poorly.

Of course in practice one has cross-validation to help determine which
method is preferred in a given example. 

Figure~\ref{fig:simplots2} shows the corresponding results for the
simulated examples with $N=300$ and $p=100$.
\begin{figure}[H]
\includegraphics[width=\textwidth]{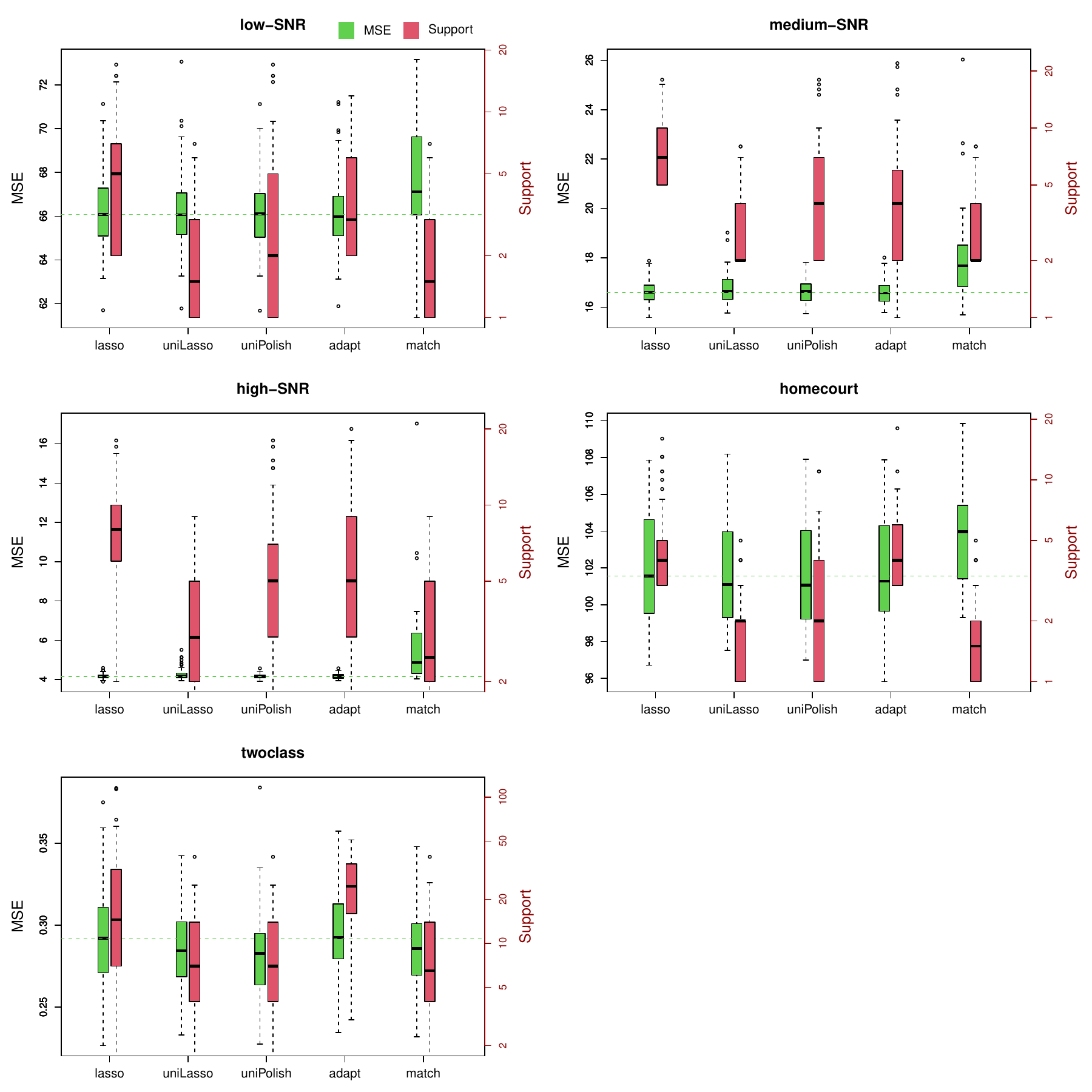}
\caption{ \em $N=300$$ p=100$: Test error (Gaussian) or misclassification rate (Binomial), along with  support size.}
\label{fig:simplots2}
\end{figure}

\section{The unregularized setting: uniReg}
\label{sec:unireg}
Here we consider the case where $\lambda=0$, so that there is no $\ell_1$ penalty,
as detailed in equation (\ref{eqn:unireg}). While the main use case has $n>p$, we note that uniReg can be defined for $p>n$ by taking $\lambda\rightarrow 0$ in uniLasso.
Computationally, this will approximately give the solution with
minimum $\ell_1$ norm, in the same way that the limiting lasso fit gives the least squares solution with minimum $\ell_1$ norm.
Conveniently  even without the $\ell_1$ penalty, the non-negativity constraint  still promotes sparsity in the solution.

In this section we demonstrate the performance of uniReg on simulated
and real data.
\subsection{Simulated data}
In Table~\ref{tab:unireg} 
we compare least squares with uniReg in a subset of the simulated settings defined in Section \ref{sec:sim}.
UniReg performs remarkably well, often winning in MSE and yielding consistently sparser models.

\begin{table}[H]
\begin{small}
\centerline{\bf N=300, p=30}
\centering
\smallskip
\begin{tabular}{lllllll}
  \hline
  Setting & p & SNR & MSE-LS & MSE-uniReg& Supp-LS & Supp-uniReg  \\ 
  \hline
 low-SNR & 30 & 0.092 & 71.339 & 65.171& 30 & 4  \\ 
   medium-SNR & 30 & 0.369 & 17.836 & 16.489& 30 & 4  \\ 
   high-SNR & 30 & 1.475 & 4.459 & 4.178& 30 & 4  \\ 
   homecourt & 30 & 0.059 & 111.771 & 101.542& 30 & 4  \\ 
   \hline
\end{tabular}

\medskip

  \centerline{\bf N=300, p=100}
\smallskip
\begin{tabular}{lllllll}
  \hline
 Setting & p & SNR & MSE-LS & MSE-uniReg& Supp-LS & Supp-uniReg \\ 
  \hline
 low-SNR & 100 & 0.318 & 97.678 & 69.655& 100 & 8 \\ 
   medium-SNR & 100 & 1.272 & 24.428 & 18.559 & 100 & 9 \\ 
   high-SNR & 100 & 5.087 & 6.104 & 5.82& 100 & 7 \\ 
   homecourt & 100 & 0.228 & 149.683 & 104.979& 100 & 13 \\

   \hline
\end{tabular}

\medskip
\centerline{\bf N=300, p=1000}
\smallskip
\begin{tabular}{lllllll}
  \hline
  Setting & p & SNR & MSE-LS & MSE-uniReg& Supp-LS & Supp-uniReg   \\ 
  \hline
 low-SNR & 1000 & 0.578 & 842.191 & 510.837& 300 & 24.5 \\ 
   medium-SNR & 1000 & 1.48 & 374.915 & 249.076& 300 & 27 \\ 
   high-SNR & 1000 & 3.613 & 186.171 & 147.276& 300 & 27 \\ 
   homecourt & 1000 & 1.53 & 577.852 & 420.508& 300 & 135.5 \\ 
   \hline

\end{tabular}
\caption{\em Comparison of least squares with uniReg in a subset of simulated settings defined earlier.}
%\medskip
\label{tab:unireg}
\end{small}
\end{table}

\subsection{Real data}\label{sec:real-data}
We compared least squares and uniReg on the  regression datasets from
the UCI database\footnote{\url{https://archive.ics.uci.edu/datasets}}.
There were 23 datasets in all. Some had multiple targets: we treated each target separately and averaged the results.
We randomly sampled each dataset into a $70/30$ training/test split, and report the test set MSE and support in Figure~\ref{fig:UCI}.
One dataset had a small sample ($n=60$)  which produced a very large MSE from least squares; we omit this from the plot for visibility.
\begin{figure}
\begin{center}
\includegraphics[width=.8\textwidth]{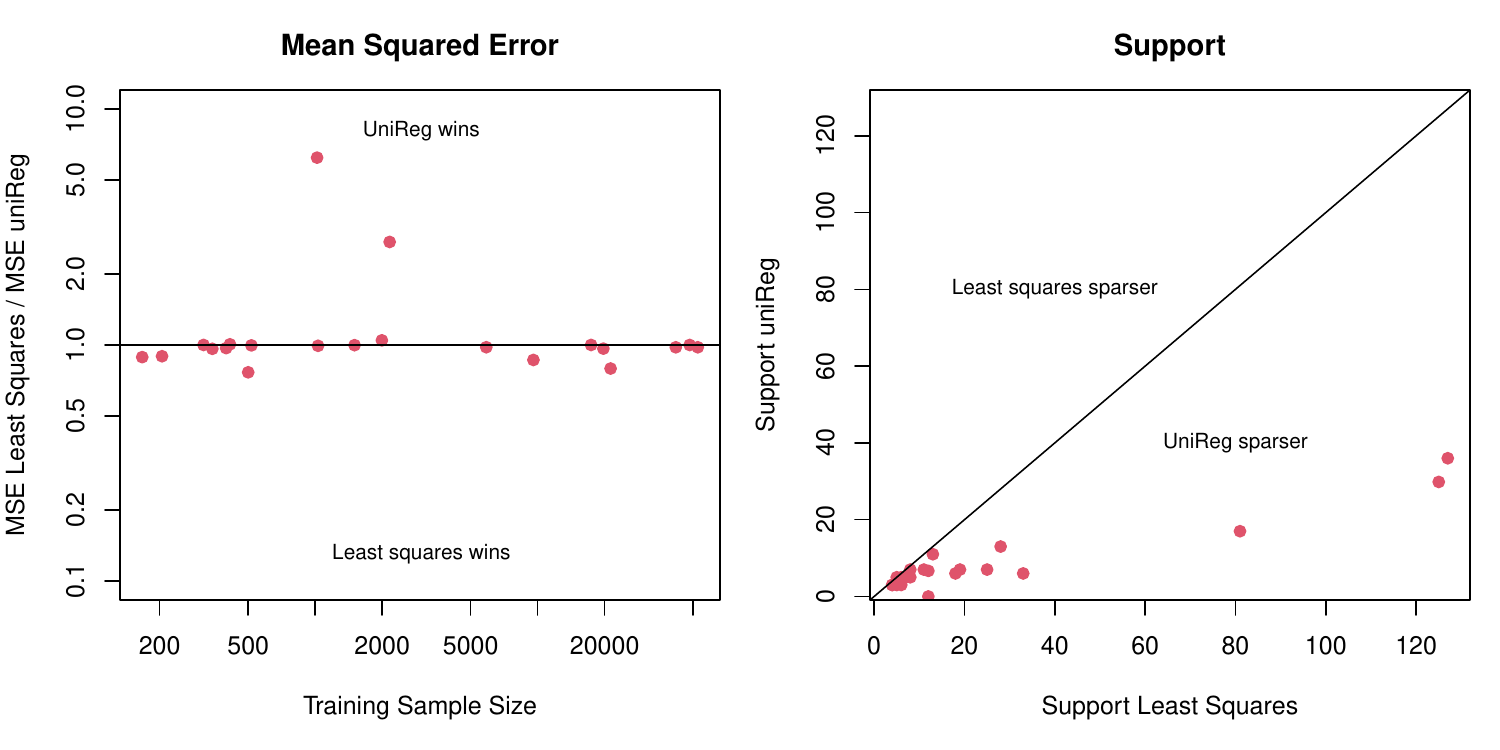}
\end{center}
\caption {\em LS vs uniReg on regression datasets from the UCI database. All datasets have $n>p$. The left plot shows the MSE ratio on the log scale.
The right panel shows the support sizes of the chosen models.}
\label{fig:UCI}
\end{figure}

The two methods perform similarly in MSE; but uniReg has much smaller support, averaging about  $1/3$ that of the least squares support (which is the number of features $p$).

\subsection{Theory for uniReg without LOO}\label{sec:theory-unir-with}
Our data consists of $(X_i, Y_i)$,  $i=1,\ldots,n$, where 
\[
Y_i = \beta_0+ X_i^T\beta +\epsilon_i, 
\]
where $\beta_0\in \R$, $\beta = (\beta_1,\ldots,\beta_p)\in \R^p$, $X_i$ are i.i.d.~$N_p(0,\Sigma)$ random vectors, where $\Sigma$ is a positive definite matrix with minimum eigenvalue $\lambda_0$ and maximum eigenvalue $\lambda_1$,  and $\epsilon_i$ are i.i.d.~$N(0,\sigma^2)$ random variables for some $\sigma>0$. Let $r := p/n$.  We assume that $r \in (0,r_0)$ for some $r_0<1$.

Let $\aols$ and $\bols$ denote the OLS estimates of $\beta_0$ and $\beta$. Let $\aur$ and $\bur$ denote the uniReg estimates of $\beta_0$ and $\beta$ {\it without leave-one-out}, defined as follows. First, compute the univariate coefficients $\buni_j$, $j=1,\ldots,p$, as the coefficient of $X_j$ when $Y$ is regressed solely on $X_j$ with an intercept term. The estimates $\aur$ and $\bur$ are obtained by minimizing
\[
\sum_{i=1}^n (Y_i - a - X_i^T b)^2
\]
over $a\in \R$ and $b\in \R^p$ subject to the constraint that $b_j\buni_j\ge 0$ for $j=1,\ldots,p$. 
Let $\beta^{\mathrm{uni}}_{0,j}$ and $\bbuni_j$ be the population univariate coefficients; that is,
\[
\E(Y_i | X_{i,j}) = \beta^{\mathrm{uni}}_{0,j} +\bbuni_j X_{i,j}.
\]
Define three vectors $\mu, \muols, \muur\in \R^n$ as 
\begin{align*}
&\mu_i := \E(Y_i|X_i) = \beta_0 + X_i^T\beta,\\
&\muols_i :=\aols + X_i^T\bols,\\
&\muur_i := \aur + X_i^T\bur,
\end{align*}
for $i=1,\ldots,n$. Let $q$ be the number of $j$ such that $\beta_j =0$ and $\bbuni_j \ne 0$. The following theorem shows that if $\beta_j$ and $\bbuni_j$ have the same sign for all $j$, then  $\E\|\mu-\muur\|^2$ is less than or equal to $\E\|\mu -\muols\|^2$ minus a constant times $q$. %, plus an exponentially negligible term.
\begin{thm}\label{uniregthm}
Suppose that $\bbuni_j \beta_j \ge 0$ for each $1\le j\le p$. Let $\delta$ be the minimum of $|\bbuni_j|$ over $1\le j\le p$ such that $\bbuni_j\ne 0$.  
Then there are positive constants $C_0, C_1$ depending only on $\lambda_0$, $\lambda_1$, $\delta$, $r_0$, and $\sigma$ such that if $n\ge C_0$, then 
\[
\E\|\mu-\muur\|^2 \le \E\|\mu-\muols\|^2 - C_1 q.
\]
\end{thm}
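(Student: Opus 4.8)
The plan is to realise uniReg as a Euclidean projection onto a convex cone that contains the truth, and then to read off the $\Theta(q)$ improvement over OLS from the active constraints of that cone's tangent cone at the truth.

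First I would condition on the design. Set $W := (\bone\ X)\in\R^{n\times(p+1)}$, $A := W^TW$, and $\beta^{\mathrm{full}} := (\beta_0,\beta)$, so that $\mu = W\beta^{\mathrm{full}}$. Since $p/n < r_0 < 1$, for $n\ge C_0$ the matrix $W$ has full column rank almost surely; then $\muols = \mu + H\epsilon$ with $H$ the projection onto $\mathrm{col}(W)$, so $\E\|\mu-\muols\|^2 = \sigma^2(p+1)$ exactly, i.e.\ the right side of the claimed inequality is $\sigma^2(p+1)-C_1 q$. Because $v\mapsto Wv$ is an isometry from $(\R^{p+1},\|\cdot\|_A)$ onto $(\mathrm{col}(W),\|\cdot\|_2)$, uniReg is $\muur = \Pi_{\mathcal C}(y)$ for the closed convex cone $\mathcal C := WK$, $K := \{v : v_j\,\sign(\buni_j)\ge 0 \text{ for all } j \text{ with } \buni_j\ne 0\}$, and since $\mathcal C\subseteq\mathrm{col}(W)$ one has $\Pi_{\mathcal C}(y) = \Pi_{\mathcal C}(\mu + H\epsilon)$.

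Next, on the event that $\sign(\buni_j) = \sign(\bbuni_j)$ for every $j$ with $\bbuni_j\ne 0$ — this is where the separation $\delta = \min_{\bbuni_j\ne0}|\bbuni_j|$ is used, as it makes this sign-recovery event have probability tending to $1$ — the hypothesis $\bbuni_j\beta_j\ge 0$ forces $\mu\in\mathcal C$. Writing $d := \muur-\mu$ and $r := (\mu+H\epsilon)-\muur$, so that $d+r = H\epsilon$, the variational inequality for $\muur = \Pi_{\mathcal C}(\mu+H\epsilon)$ tested at the feasible point $\mu$ gives $\langle r,d\rangle\ge 0$, hence $\|\mu-\muur\|^2 = \|d\|^2 \le \|H\epsilon\|^2 - \|r\|^2$. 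Moreover $\|r\| = \dist(\mu+H\epsilon,\mathcal C) \ge \dist(H\epsilon, T)$, where $T$ is the tangent cone of $\mathcal C$ at $\mu$, because $\mathcal C-\mu\subseteq T$; and the constraints of $\mathcal C$ active at $\mu$ include exactly the $q$ coordinates $j\in S_0$ with $\beta_j=0$ and $\bbuni_j\ne0$, so $T\subseteq T_0 := W\{v : v_j\sign(\bbuni_j)\ge0 \text{ for all } j\in S_0\}$ and thus $\|\mu-\muur\|^2 \le \|H\epsilon\|^2 - \dist(H\epsilon,T_0)^2$.

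It remains to show $\E\,\dist(H\epsilon,T_0)^2 \ge C_1 q$. Transporting to the $\|\cdot\|_A$-metric, $\dist(H\epsilon,T_0)^2 = \dist_A(g,K_0)^2$ where $g := \bols-\beta^{\mathrm{full}}\sim N(0,\sigma^2 A^{-1})$ given $X$ and $K_0$ is the corresponding coordinate cone; a sign flip and elimination of the unconstrained coordinates via the Schur complement reduce this to $\dist_S(g_{S_0},\R_{\ge0}^q)^2$ with $g_{S_0}\sim N(0,\sigma^2 S^{-1})$, $S := ((A^{-1})_{S_0S_0})^{-1}$, whose condition number is at most that of $A$. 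Then $\dist_S(\cdot)^2 \ge \lambda_{\min}(S)\dist(\cdot)^2$ and $\E\sum_{j\in S_0}((g_{S_0})_j)_-^2 = \tfrac{\sigma^2}{2}\sum_j(S^{-1})_{jj} \ge \sigma^2 q/(2\lambda_{\max}(S))$ give $\E[\dist(H\epsilon,T_0)^2\mid X] \ge \sigma^2 q/(2\,\mathrm{cond}(A))$, and on the standard high-probability event for a Gaussian design with aspect ratio bounded away from $1$, $\mathrm{cond}(A)$ is at most a constant times $\lambda_1/\lambda_0$, so this is $\ge C_1 q$. The remaining failure events (ill-conditioned $A$, mis-estimated or incompatible signs) are absorbed by crude $L^2$ bounds, using that $\Pi_{\mathcal C}$ is nonexpansive and $0\in\mathcal C$, once $n\ge C_0$. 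The main obstacle is precisely this last quantitative estimate: one must guarantee that the gain per active constraint stays $\Theta(\sigma^2)$ and does not dissolve as $n$ and $p$ grow, which is exactly why the two-sided eigenvalue control on $\Sigma$ (via $\lambda_0,\lambda_1$) together with sub-Gaussian concentration of $W^TW$ are needed.
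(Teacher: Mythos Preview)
Your proposal is correct and follows essentially the same route as the paper. Both arguments hinge on the projection inequality (the paper's Lemma~\ref{keylmm}): on the sign-recovery event, $\mu$ lies in the feasible cone, so the obtuse-angle condition for projections gives $\|\mu-\muur\|^2 \le \|\mu-\muols\|^2 - \|\muols-\muur\|^2$, and the task reduces to lower-bounding $\|\muols-\muur\|^2$ by a constant times $q$. The paper does this last step by the elementary observation that if $\bols_j$ and $\bur_j$ lie on opposite sides of zero then $(\bols_j-\bur_j)^2\ge(\bols_j)^2$, combined with $\lambda_{\min}$ and $\lambda_{\max}$ control on $\tfrac1n X^TX$; you reach the same place via the tangent-cone inclusion $\mathcal C-\mu\subseteq T_0$ and a Schur-complement reduction, which is a slightly more abstract packaging of the same coordinatewise bound (and indeed yields the identical term $\sum_{j\in S_0}(\bols_j)^2\,1_{\{\bols_j\bbuni_j<0\}}$ after unwinding). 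Your handling of the failure events by Cauchy--Schwarz plus the exponential tail of $E^c$ matches the paper's endgame, and the eigenvalue control you invoke is the content of the paper's Lemmas~\ref{wishlmm}--\ref{lambdalmm}.
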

The theorem is proved in Appendix~\ref{sec:proof-theor-refun}. We have not been able to prove the LOO version of uniReg (that is, the actual version that we have proposed). We leave it as an open question.

\section{Real data examples for uniLasso}
\label{sec:real}
Table \ref{tab:real}  gives a high level summary of  six  datasets that we examined in this study.
Figure  \ref{fig:realplots} shows the results:
these are test error and support size  averaged over 100 train/test  50-50 splits.
\begin{table}[hbt]
\begin{center}
\begin{tabular}{lrrrr}
Dataset & n & p &Feature type  &Outcome type\\
\hline
Spam&2301 & 57 & mixed &binary \\
NRTI& 1005 & 211 &binary mutations &  drug effectiveness (quantitative)\\
Leukemia& 72 & 7129& gene expression (quantitative) & binary\\
DLBCL&240 & 1000 &gene expression (quantitative) &survival\\
Breast Cancer & 157 & 1000& gene expression (quantitative) & survival\\
Ovarian & 190 & 2238 & Mass spec peak intensities & binary\\
\end{tabular}
\end{center}
\caption{\em Summary of datasets used in our study.}
\label{tab:real}
\end{table}

We see the same general trend as in the simulated examples: uniLasso tends to give test error similar to that of lasso, with often smaller support size.
\begin{figure}[hbt]
\includegraphics[width=.9\textwidth]{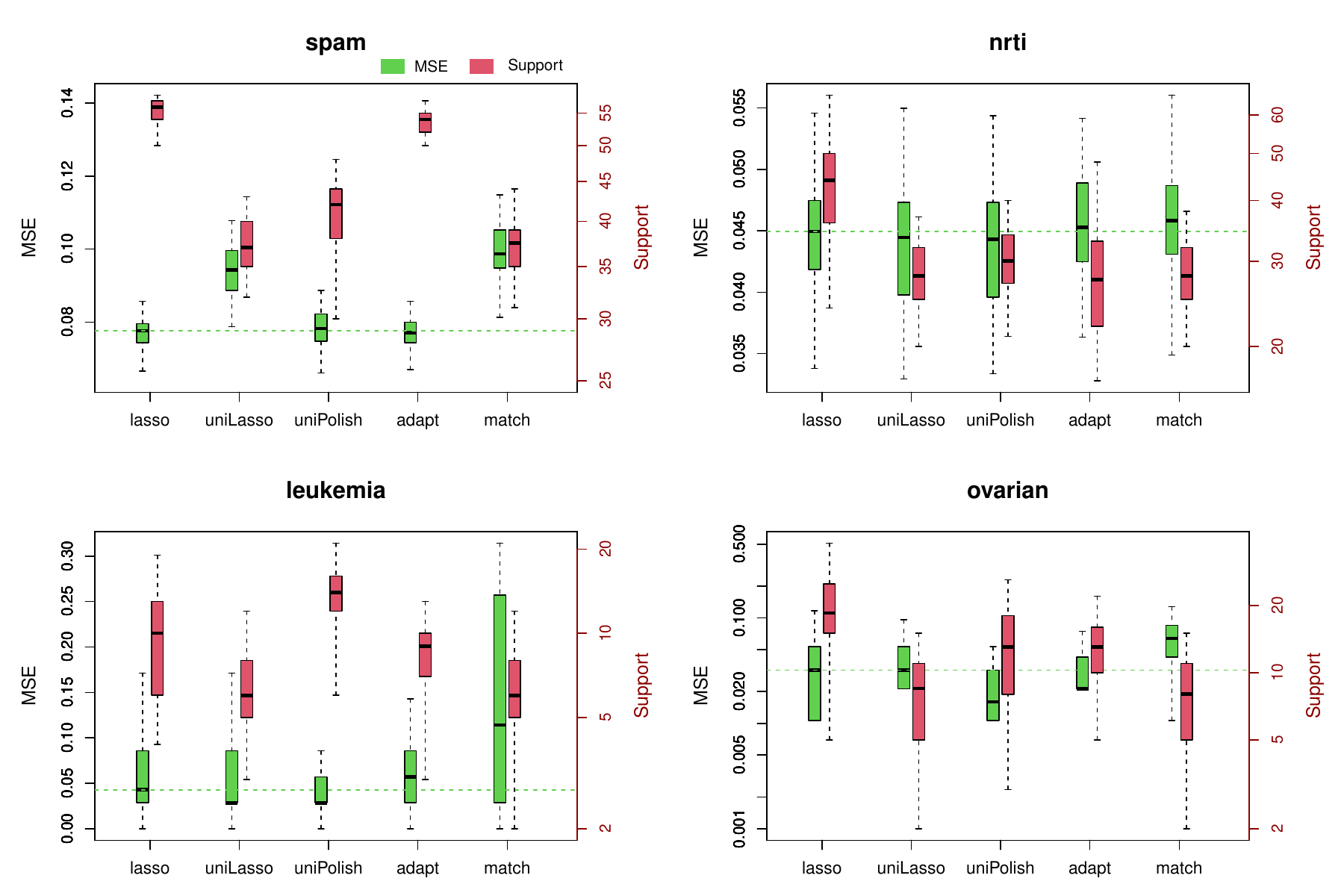}
\caption{\em Real data examples: Blue lines: relative test error (Gaussian)  misclassification rate (Binomial) or deviance (Cox), with support size superimposed in gold.}
\label{fig:realplots}
\end{figure}
\section{Multiclass models}
\label{sec:multiclass}
Here we consider a classification problem with more than two classes.
The multinomial model is natural for the multiclass problem, but does
not generalize easily to uniLasso. The reason is that for each feature
there are coefficients for each class, so it is not clear how to
replace the feature with a LOO version. Furthermore the coefficients
per class are only determined up to a shift,
and hence a non-negativity constraint does not make sense.
Instead we take a ``one-versus rest'' (OVR) approach, where the
uniLasso binary classifier is used to classify each class from the
others.
Hence each class gets a predicted probability, and  
with OVR one classifies to the class with the highest probability as the predicted class.

We applied this to a dataset on childhood cancer, with 63 training and  25 test samples in 4 classes
\citep{Khanetal2001}.
  With the original train/test split we get  zero test errors with \texttt{glmnet}.
We repeated the train/test split 50 times and obtained these results:

\begin{table}[ht]
\centering
\begin{tabular}{lrrrr}
  \hline
 & Ave \# Errors& SD & Ave Support & SD\\ 
  \hline
Lasso multinomial& 0.96 & 0.03 & 18.68 & 0.08 \\ 
  UniLasso OVR & 0.18 & 0.01 & 27.02 & 0.05 \\ 
Lasso OVR  &0.22 &  0.01 &     58.08 & 0.08\\
   \hline
\end{tabular}
\begin{tabular}{lrrrrrr}
  \hline
Number of Errors & 0 & 1 & 2 & 3 & 4 & 7 \\ 
  \hline
Lasso  multinomial&  27 &   9 &   8 &   4 &   1 &   1 \\ 
UniLasso  OVR&  41 &   9 \\ 
Lasso OVR & 39 & 11\\
   \hline
\end{tabular}
\caption{\em Results for the four-class cancer example. Overall
  uniLasso OVR makes the fewest misclassification errors averaged over the
  50 train-test splits.  Of the 50  train(63)/test(25)  splits, 41
  made zero errors on the 25  test data points, and 9 made 1 error,
  averaging 0.18. Lasso OVR is sligthly worse, and lasso multinomial
  makes almost one error on average per train/test split. On the other hand lasso
  multinomial yields the sparsest model, since the other two select
  features separately for each class.
}
\end{table}

% \subsection{A GWAS example}

% I SUGGEST THIS EXAMPLE BE  REMOVED

% \label{sec:GWAS}
% The data here come from the Kaggle site 

% \url{https://www.kaggle.com/datasets/seascape/snp-dataset-for-gwas}

% It is described as a simulated dataset comparable to the Illumina 650K human array, for SNP genotyping.
% There are 1000 individuals, 
% and Genotypes (0, 1 or 2) for 482,906 markers 
% For illustration, we selected the 10,000 markers with highest variance and divided the data into training and test sets each of size 500.
% The target measure is quantitative. 
% \begin{figure}
%  \includegraphics[width=6in]{GWAS.pdf}
%  \label{fig:GWAS}
%  \caption{\em GWAS example: CV and test set R-squared curves}
%  \end{figure}
%  In Figure \ref{fig:GWAS} we see that uniLasso delivers a model with slightly higher test $R^2$ using fewer SNPs.
%  Table \ref{tab:GWAS} summarizes the signs of the lasso and uniLasso coefficients in final chosen models.
%  Interestingly, ignoring the zeroes, there are no sign disagreements!
 
%  \begin{table}[ht]
% \centering
% \begin{tabular}{rrrr}
%   \hline
%  & uni$<$0 & uni=0 & uni$>$0 \\ 
%   \hline
% lasso$<$0 &  33 &  13 &   0 \\ 
%   lasso=0 &   1 & 9901 &   0 \\ 
%   lasso$>$0 &   0 &  13 &  39 \\ 
%    \hline

% \end{tabular}
%    \caption{\em GWAS example: signs of lasso and uniLasso coefficients in final chosen models.}
%  \label{tab:GWAS}
% \end{table}

\section{The use of external univariate scores}
\label{sec:external}

Consider the setting where we have  our training set \texttt{T}  and
also external data \texttt{E}  from the same domain (e.g. disease), We
assume that  \texttt{E}
does not contain  raw data  but only summary results.  Specifically, \texttt{E} contains just univariate coefficients and standard errors for each feature.
This setting occurs fairly often in biomedical settings where investigators are not willing to share their raw data, but do publish and share summary results.

We demonstrate here how uniLasso can make  productive use of external scores. 
The idea is to use  the univariate coefficients  from \texttt{E}, rather than computing the  LOO estimates from the training set. 
Then in step 2 of uniLasso, we proceed as usual.

To investigate this scheme, we generated data \texttt{T}
with $n=300$, $p=1000$, $\mbox{SNR}=1.5$,  with feature  covariance  AR(1), with $\rho=0.8$  and the non-zero regression coefficients distributed as $U(0.5,2)$, and  positioned on every other feature $(1,3,\ldots 99)$.
We also generated 600 extra samples from the same distribution to
serve as an
external data set  \texttt{E}, and in addition a large test set.
Figure~\ref{fig:extra}
shows the test-set MSE over 200 simulations from the following strategies:

\begin{enumerate}
\item  lasso and uniLasso, both applied just to the training set;
\item uniLasso+$m$ (green)  where the univariate scores are derived
  from $m$ extra samples from \texttt{E}, and  are used in place of the LOO estimates;
\item unilLasso (red) on all $300+m$  samples; 
\end{enumerate}

\begin{figure}[hbt]
\begin{center}
\includegraphics[width=\textwidth]{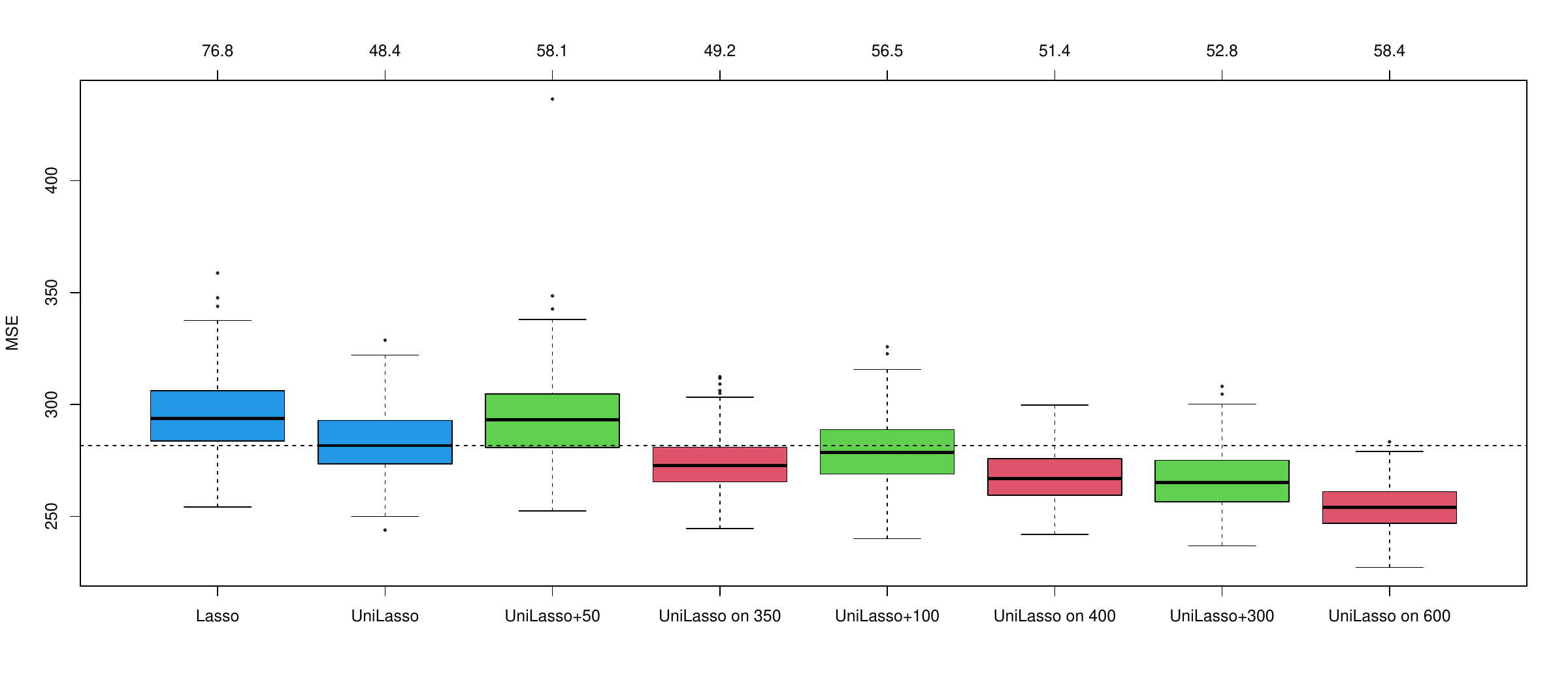}
\end{center}
\caption{\em Results from  an experiment examining the use of external data with uniLasso. The initial training set has 300 observations and we generated 600 additional samples. Shown in blue are the test set MSE for the lasso, and uniLasso,
both applied just to the training set; 
 uniLasso+$m$ (green), where the univariate scores derived from the $m$ extra samples are used in place of the LOO estimates, and unilLasso (red) on $300+m$  samples  for $m=50,100 \ldots  300$.
}
\label{fig:extra}
\end{figure}

We see that adding the scores from just 50 external observations does
slightly worse than uniLasso on the original data, but builds sparser
models than lasso at no cost in test MSE.
If instead, we use the extra 50 samples as in strategy 3, the performance
improves.

If we increase $m$ to 100, both strategies 2 and 3 improve, but uniLasso
using the augmented data performs better each time.

So it appears strategy 2 eventually (as we increase $m$) outperforms lasso and uniLasso
both fit on only the original 300 observations.
But if we have the raw external data, strategy 3 is the clear winner.

This is one simulation scenario, and it is of course possible that in
others the pattern may not be as clear.

\section{A uniLasso polish}
\label{sec:polish}

As we have seen earlier, the uniLasso procedure can sometimes perform  considerably worse than the lasso, especially in cases where its sign constraint 
is too restrictive. An example is the ``counter-example'' problem in Section~\ref{sec:sim} , where two positively correlated features have different
coefficient signs in the population  multivariate model. While problematic, the CV estimate of error will alert the user that uniLasso is not working well in a 
given problem. In this section, we propose a simple ``post-processing'' of the uniLasso solution that can help to diagnose shortcomings of uniLasso
and remedy them.

The idea is very simple: we run the usual lasso (with-cross-validation) to the residual from the uniLasso fit, and then ``stitch'' together the two solutions.
Here is the algorithm in detail, using the \texttt{glmnet} R package terminology:

\begin{enumerate}

\item Run \texttt{cv.uniLasso} on $X, y$ , and extract  the fitted linear predictor $\hat y$ at $\hat\lambda_{min}.$
\item Apply \texttt{cv.glmnet} on $X, y$ , with offset $\hat y$, giving final predictions $\hat y_g$.
\item ``Stitch'' together the two solutions to produce a single path,
  starting at the original uniLasso solution.
\end{enumerate}

Note that in the Gaussian model the use of the offset is equivalent to
applying the lasso to the residual in step 2; expressed as an offset
it allows application of the idea to other GLM families.
\begin{figure}[hbt]
\begin{center}
\includegraphics[width=\textwidth]{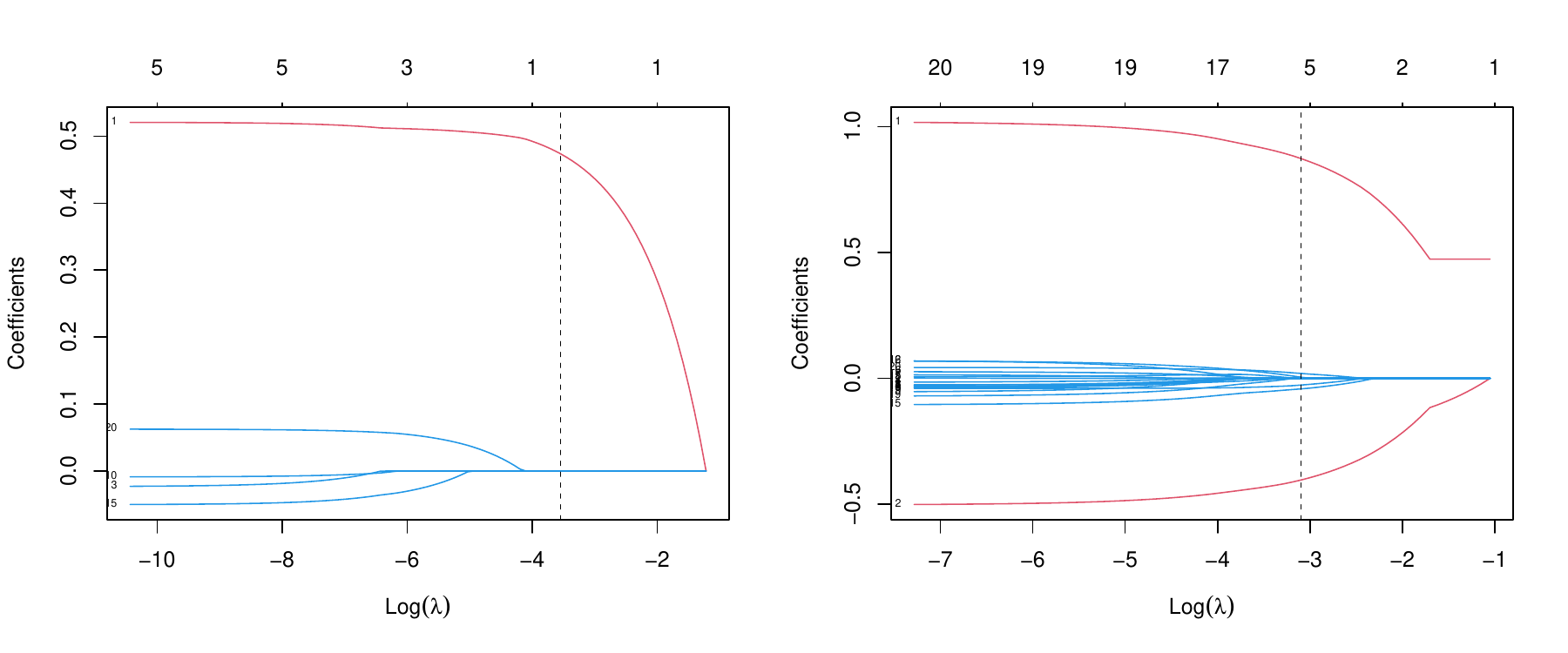}
\caption{\em UniLasso polish applied to the counter-example problem of Section~\ref{sec:sim}.
The left panel shows the {\em\texttt{uniLasso}} solution path, with the dashed
vertical line indicating the solution chosen by cross-validation. The
right panel shows the {\em\texttt{polish.uniLasso}} solution path, which
starts at the {\em\texttt{uniLasso}}, and the vertical dashed line
indicates the model chosen by cross-validation.  The red curves are the
support features $x_1$ and $x_2$.
}
\label{fig:polish}
\end{center}
\end{figure}

Figure~\ref{fig:polish} shows the results of the uniLasso polish applied to the counter-example problem of Section~\ref{sec:sim}.
There are two support features ($x_1$ and  $x_2$). 
In the left panel uniLasso enters just feature $x_1,$ because $x_2$
has  a positive univariate coefficient but a negative multivariate
coefficient. 
% The middle show the post-processing by lasso, where feature 2  is entered, followed by an additional adjustment to the coefficient of feature.
% The two paths are ``stitched'' together to form the right-hand
% panel. The test errors of lasso, uniLasso and uniLasso polish are
% 0.2, 0.51 and  0.31  respectively.
The right panel shows the ``polished'' path. Via the offset it starts
at the uniLasso solution (chosen by CV), and immediately enters $x_2$,
and eventually growing the coefficients of both $x_1$ and $x_2$. In
the process, the model lets in some of the noise variables.

In Section \ref{sec:sim} we included the uniLasso polish in all of the settings, and its performance overall is excellent.

\section{Does CV work in our two stage uniLasso  procedure?}
\label{sec:cv}
In the uniLasso algorithm we use the target $y$ in both steps, but for (significant) computational convenience, we only apply cross-validation in the 2nd step.  Thus one should be concerned
that cross-validation may not perform well here. In the simulations of Section~\ref{sec:sim} we used this form of cross-validation to choose the $\lambda$ tuning parameter, and it seems to have done a 
reasonably good job.

\begin{figure}[hbt]
\includegraphics[width=\textwidth]{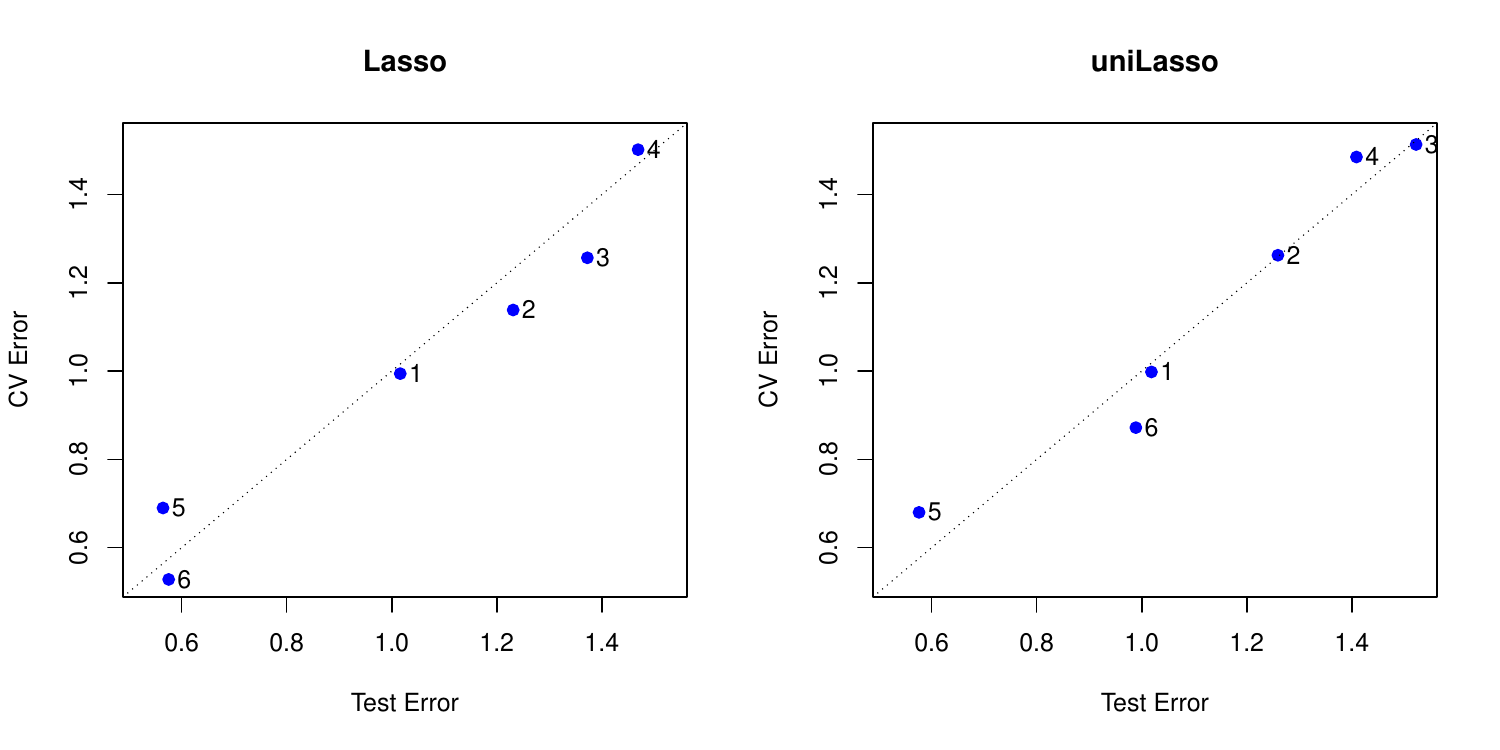}
\caption{\em  CV error at the selected value of $\lambda$ versus the test error, for the lasso (left) and  uniLasso  (right).
Each datapoint represents  one of our simulated datasets with $p>N$
studied earlier.
Both CV estimates track the test error reasonably well.
 }
\label{fig:figcv}
\end{figure}

A related question is whether the error reported by CV is a good estimate of test error, especially at the selected value of $\lambda$.
Figure \ref{fig:figcv} shows the CV error  at the selected value of $\lambda$ versus the test error, for the lasso (left) and  uniLasso  (right).
We see that both CV estimates are quite good. It seems clear that our use of cross-validation in uniLasso is not a major concern.

Finally, a referee asked whether  a reason for uniLasso's
strong performance was its ability to estimate the the CV tuning parameter $\lambda$. 
For the car price data, Figure~\ref{fig:besterrors} shows the ratio of
the achieved test error using CV divided by the minimum test error
over the $\lambda$ path, for each of the 50 train/test splits.
We see that both lasso and uniLasso do an (equally) good job of estimating the optimal $\lambda$. The average ratio  $\lambda_{cv}/\lambda_{opt}$ was 0.87 and 0.38 
for the lasso and uniLasso, respectively.

\begin{figure}
\begin{center}
\includegraphics[width=.4\textwidth]{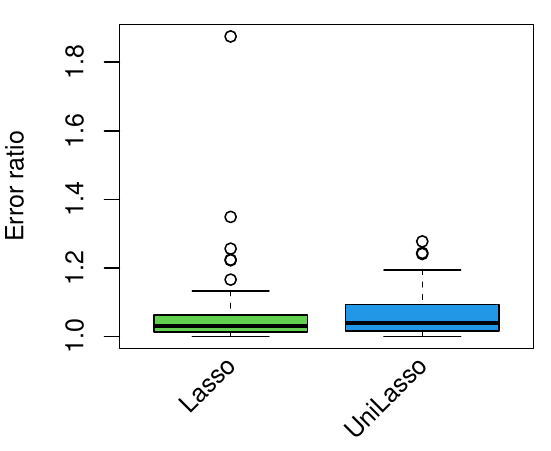}
\end{center}
\caption{\em  Car price data: ratio of the achieved test error using CV divided by the minimum test error over the $\lambda$ path.}
\label{fig:besterrors}
\end{figure}

\section{UniLasso for GLMs and the Cox survival model}
\label{sec:UniLassocox}
Our discussion of uniLasso has focussed on squared-error loss. However,
we can use the same idea in any scenario where we are able to fit a
lasso model and produce a scalar prediction function $\eta(x)$. This
includes in particular all generalized linear models, and the Cox proportional
hazards model. These models are included in \texttt{glmnet}.

The steps are almost the same as before (we will discuss for GLMs):
\begin{enumerate}
\item Fit the $p$ univariate GLMS, and produce the linear predictor
  functions $\hat\eta_j^i = \hat\beta_{0j} +\hat\beta_jx_{ij}$. Compute the LOO predictions
  $\hat{\eta}_j^{-i}$ for the $n$ training 
  observations for each of these.
\item Using these LOO predictions as features, fit a non-negative lasso GLM to the
  response, yielding coefficients $\hat\theta = (\hat{\theta}_0,\hat{\theta}_1,\cdots,\hat{\theta}_p)^\top$.
\item Return the composite estimated linear predictor
  \begin{equation}
    \label{eq:collapse2}
    \hat{\eta}(x) = \hat\theta_0 + \sum_{j=1}^p\hat\theta_j\hat\eta_j(x_j),
  \end{equation}
which collapses as before to a linear model $ \hat\eta(x)=\hat\gamma_0 + \sum_{j=1}^p\hat\gamma_j x_j$.
\end{enumerate}
Step~1 can be computationally challenging. Estimating the $p$
univariate functions $\hat\eta_j(x_j)$ is fairly straightforward,
since each require a few very simple Newton steps. The LOO estimates
are more challenging. While for squared-error loss, we have simple
formulas for computing these exactly, this is not the case for
nonlinear models. However, there are good approximations available
that are computationally manageable. We discuss all the computational
aspects in the next section.

\section{Efficient computations for uniLasso}
\label{sec:effic-comp-unin}

\subsection{Least squares}
\label{sec:least-squares}
We first discuss the computations for squared-error loss. For feature
$j$ we have to fit a univariate linear model, and then compute its LOO
predictions. We are required to fit the model
$\eta_j(x_j) = \beta_{0j}+\beta_jx_j$ using the data
$\{x_{ij},y_i\}_1^n$. This task is simplified if we standardized $x_{ij}$
to have mean zero and unit variance: $z_{ij} = (x_{ij}-\bar{x_j})/s_j$,
where $\bar{x}_j=\frac1n\sum_{i=1}^nx_{ij}$, and
$s_j^2=\frac1n\sum_{i=1}^n(x_{ij}-\bar{x}_j)^2$.
Then the least squares estimates using the $z_{ij}$ are
$(\hat{\delta}_{0j},\;\hat\delta_j)=(\bar y,\;
\frac1n\sum_{i=1}^nz_{ij}y_i)$. These are mapped back to least squares
estimates for $x_{ij}$ via
$(\hat{\beta}_{0j},\;\hat\beta_j)=(\hat{\delta}_{oj}-\hat{\delta_j}\bar{x}_j/s_j,\;\hat\delta_j/s_j)$. 

This standardized form is particularly useful for computing the LOO
fits, since the fits are invariant under affine transformations of the
features (so we can use the $z_{ij}$ instead of the $x_{ij}$). We have the classical formula for the LOO residual in linear
regression
\begin{equation}
  y_i - \hat\eta_j^{-i} = \frac{y_i-\hat\eta_j^i}{1-H_{ii}},
  \label{eqn:LOOres}
\end{equation}
where $H_{ii}$  is the $i$th diagonal entry of the \emph{hat} matrix.
It is straightforward to show that using the $z_{ij}$ gives
$H_{ii} = 1/n +z_{ij}^2/n$. Now simple algebra backs out an expression
for $\hat\eta_j^{-i}$.

These operations can be performed efficiently for all $i$ and $p$
simultaneously using matrix
operations in \texttt{R}, without the need for any loops. In
particular we can perform \emph{Hadamard} (elementwise) arithmetic
operations (eg multiplication) of like-sized matrices in single operations.

\subsection{Generalized linear models}
\label{sec:gener-line-models}
The discussion here also applies to the Cox model. Fitting a GLM by
maximum likelihood is typically done using a Newton algorithm. This
iterative algorithm amounts to making a quadratic approximation to the
negative log-likelihood at the current solution, and then solving the
quadratic problem to get the updated solution. For GLMs this can be
cast as {\em iteratively reweighted least squares} (IRLS).

Given the fitted linear predictor  vector $\eta_j^{(\ell)}$ at
  iteration $\ell$,
  one forms a \emph{working} response vector $z_j^{(\ell)}$ that depends
  on $\eta_j^{(\ell)}$ and $y$ and other properties of the particular
  GLM family, and an
  observation weight vector $w_j^{(\ell)}$, and fits an updated model
  $\eta_j^{(\ell+1)}$ by weighted least squares of $z_j^{(\ell)}$ on $x_j$
  with weights $w_j^{(\ell)}$. Usually 4 iterations are sufficient.

  Once again we can fit all the univariate GLMs using matrix
  operations at the same time --- i.e. we perform each IRLS step
  simultaneously for all $p$ univariate GLMS. The expressions are only slightly more
  complex than in the unweighted case.

  What about the LOO fits $\hat\eta_j^{-i}$ for each univariate GLM?
  Here we make an approximation, as recommended by
  \cite{10.1111/rssb.12374}, which amounts to using the final weighted
  least squares IRLS iteration when fitting the models $\hat\eta_j(x_j)$.  A simple
  formula is available there, similar to what we got before, except a
  bit more detailed to accommodate observation weights. Again we can use
  Hadamard matrix operations to do this simultaneously for all $i$ and
  $j$.

  For the Cox model, we make a further approximation, since the implied
  weight matrix in IRLS is not diagonal. We simply use the diagonal
  which leads to an approximate Newton algorithm. Note that for the
  Cox model the intercept is always zero.

  \subsection{Software in R}
  \label{sec:software}
  We provide a \texttt{R} package \texttt{uniLasso} for fitting the
  models described in this paper. It mirrors the behavior of the 
  \texttt{glmnet} package for fitting lasso models. The function \texttt{cv.uniLasso}
  has arguments that include all those for \texttt{cv.glmnet} plus a
  few extras.
  This function does the following, currently for \texttt{"binomial"},
  \texttt{"gaussian"} and \texttt{"cox"} families.
  It does all the work outlined in Sections~\ref{sec:least-squares}--\ref{sec:gener-line-models}, and performs the second stage
  \texttt{cv.glmnet} using the (approximate) LOO $\hat\eta_j^{-i}$ as features.
  A \texttt{cv.glmnet} object contains information about suggested
  values of $\lambda$, as well as a fitted \texttt{glmnet} model fit
  to all the data with
  solutions at all values of $\lambda$. This object is referenced when
  one predicts from a \texttt{cv.glmnet} object. In this case we make
  sure that the coefficients that are stored on this objects for each $\lambda$ are the
  collapsed versions as in (\ref{eq:collapse2}).

  Hence \texttt{cv.uniLasso} returns a bona-fide \texttt{cv.glmnet}
  object, for which there are a number of methods for plotting,
  printing and making predictions.

  \texttt{cv.uniLasso} has three important arguments:
  \begin{enumerate}
  \item \texttt{loo = TRUE}: by default it uses the $\hat\eta_j^{-i}$
    as features. If this is set to \texttt{FALSE} it will use the
    univariate fitted values $\hat\eta_j^i$ instead.
  \item \texttt{lower.limits = 0}: this default choice guarantees
    that the $\hat\theta_j$ are all non-negative. We
    recommend this in order to get a sparser and more interpretable model.
  \item \texttt{standardize = FALSE}: this default is the most
    sensible to use, since part of the point here is to boost strong
    variables through their scale. Standardizing would undo that. 
  \end{enumerate}

  In addition we have \texttt{uniReg} and \texttt{cv.uniReg} for
  fitting the uniReg model, as well as \texttt{ci.uniReg} for
  computing bootstrap intervals for each coefficient. The function
  \texttt{polish.uniLasso} does the polishing as described in
  Section!\ref{sec:polish}. We also provide simulation functions for
  generating the data as used in this paper.

\section{Discussion}
\label{sec:discussion}
In this paper we have introduced a novel method for sparse regression
that ``stacks'' univariate regressions using a non-negative version of the lasso.
The procedure has interesting properties, namely that the final features weights have the same signs as the univariate coefficients and tend to be larger
when the  univariate coefficients are large.  The test set MSE of the
method is often similar to that of the lasso, with substantially smaller support and lower false positive rate.

There are many possible extensions for this work,  for  example, to
random forests, gradient boosting and neural networks. These are  topics for future study.

 %Another more adventurous application would be to neural networks: with features $x_1,x_2 \ldots x_p$,  the initial layer of a simple neural network gives weight $w_{jk}$ to $x_j$ as input into hidden unit $k$. We can restrict $w_{jk}$ to have the form $\theta_{jk}\hat\beta_{j}^{-i}$ with $\theta_{jk} \geq 0$, where as before $%\hat\beta_{j}^{-i}$ are the LOO univariate
%regression coefficients.

Current versions of the uniLasso package  can be found at 

\url{https://github.com/trevorhastie/uniLasso} and CRAN  (R)\\
\url{https://github.com/sophial05/uni-lasso}   (Python, written by
Sophia Lu).
 
\subsection*{Disclosure statement}
The authors have no conflicts of interest to declare.

\subsection*{Acknowledgments}
The authors thank two referees and the Editor for helpful comments that improved this manuscript.
We also thank Jerome Friedman, Alden Green, Sam
Gross, Chris Habron, Lucas Janson, Mert Pilanci, Jonathan Taylor, Ryan
Tibshirani and Hui Zou for helpful  discussions. We especially thank
Ryan for posing and proving theorem \ref{condthm2}, and Chris Habron for his contributions of Section 3.
% , and Alden for the suggestion to investigate adversarial
% attacks in Section \ref{sec:moreexamples}.
S.C. was supported by NSF grants DMS-2113242 and DMS-2153654.   T.H.
was supported by NIH grant R01 GM134483; R.T. was supported by NIH
grant  R01 GM134483 and NSF grant 19DMS1208164.

\appendix

\section{ Proof of Proposition 1}\label{sec:proof.-prop-1}
If you substitute $\eta_j^i =\hat\beta_{0j}+\hat\beta_jx_{ij} $ in the
objective in 
(\ref{eq:1}) you get 

$$\sum_{i=1}^n (y_i - (\theta_0+ \sum_{j=1}^p\hat\beta_{0j}) 
-\sum_{j=1}^px_{ij}\hat\beta_j\theta_j)^2 + \lambda\sum_{j=1}^p |\theta_j|$$

Now you change variables to $\gamma_0 = \theta_0+
\sum_{j=1}^p\hat\beta_{0j}$ and $\gamma_j=\hat\beta_j\theta_j$.

The result is immediate, since $$|\theta_j| = \frac{
  |\gamma_j|}{|\hat\beta_j|}$$
and
$$\{\mbox{sign($\gamma_j$)=sign($\hat\beta_j$) $\forall j$ }\} \equiv
\{\mbox{ $\theta_j \geq 0\;\forall j$} \}
$$
\qed

\section{Proof of theorems in Section~\ref{sec:theory}}\label{sec:proof-theor-sect}
We first prove Theorem \ref{mainthm}.
We will need the following result (see, e.g., \cite[Theorem A.7.1]{talagrand10}).
\begin{lmm}[Bernstein's inequality]
Let $Z_1,\ldots,Z_n$ be i.i.d.~random variables such that $\E(Z_1)=0$ and $\E(e^{\beta |Z_1|}) \le 2$ for some $\beta > 0$. Then for all $t\ge 0$,
\[
\P\biggl(\biggl|\frac{1}{n}\sum_{i=1}^n Z_i \biggr|\ge t\biggr) \le 2\exp\biggl(-\min\biggl\{\frac{n \beta^2 t^2}{4}, \frac{n \beta t}{2}\biggr\}\biggr).
\]
\end{lmm}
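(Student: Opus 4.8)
The plan is to prove the lemma by the standard exponential (Chernoff) method, reducing the two-sided tail to a one-sided one and controlling the moment generating function of $Z_1$ via the sub-exponential hypothesis. First I would observe that $-Z_1$ satisfies exactly the same two hypotheses, since $\E(-Z_1)=0$ and $\E(e^{\beta|-Z_1|})=\E(e^{\beta|Z_1|})\le 2$. Hence it suffices to bound the one-sided tail $\P(\tfrac1n\sum_i Z_i\ge t)$ by $\exp(-\min\{n\beta^2t^2/4,\,n\beta t/2\})$; applying the identical bound to $-Z_1$ for the lower tail and adding the two produces the factor $2$ in the statement. For the one-sided tail, for any $s>0$ the Markov step applied to $e^{s\sum_i Z_i}$ together with independence gives
\[
\P\Bigl(\tfrac1n\textstyle\sum_{i=1}^n Z_i \ge t\Bigr) \le e^{-nst}\,\bigl(\E(e^{sZ_1})\bigr)^n,
\]
so everything reduces to an estimate of the single MGF $\E(e^{sZ_1})$ followed by an optimization over $s$.

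The crux is the MGF bound. From $\E(e^{\beta|Z_1|})\le 2$, expanding the exponential and using that every term in the series is nonnegative yields moment control $\E|Z_1|^k \le 2\,k!/\beta^k$ for each $k\ge 1$. Because $\E(Z_1)=0$ annihilates the linear term, expanding $e^{sZ_1}$ and taking expectations gives, for $0<s<\beta$,
\[
\E(e^{sZ_1}) \le 1 + \sum_{k\ge 2}\frac{s^k}{k!}\,\E|Z_1|^k \le 1 + 2\sum_{k\ge 2}(s/\beta)^k,
\]
a convergent geometric tail. I would then bound the right-hand side by $\exp(\phi(s))$ for an explicit function $\phi$ with $\phi(s)=O(s^2/\beta^2)$ near $0$, giving the sub-exponential shape $\E(e^{sZ_1})\le\exp(\phi(s))$ on $(0,\beta)$.

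Finally I would substitute this back and minimize the Chernoff exponent $-nst + n\phi(s)$ over $s$, splitting into two regimes. For small $t$ the unconstrained minimizer lies in the interior and the exponent behaves like the Gaussian term $-n\beta^2t^2/4$; for large $t$ the optimizer is pushed toward the boundary $s\uparrow\beta$, where the exponent becomes the linear term $-n\beta t/2$. Taking the governing regime in each case produces exactly the stated $\min\{n\beta^2t^2/4,\,n\beta t/2\}$.

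The hard part is purely quantitative: arranging the MGF estimate so that the optimization delivers precisely the constants $4$ and $2$. The crude Taylor/geometric bound above is valid but degrades as $s\uparrow\beta$, and is not by itself tight enough to recover those constants; the one technical point requiring care is therefore sharpening $\phi$ (equivalently, the moment estimate) on the full range $0<s\le\beta$, so that a bound of the form $\E(e^{sZ_1})\le\exp(s^2/\beta^2)$ holds throughout. Once such a bound is available, the two-regime optimization is a one-line computation matching the claim, and the remaining ingredients — the symmetrization, the Chernoff inequality, and the case analysis in $s$ — are entirely routine.
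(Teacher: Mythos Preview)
The paper does not prove this lemma at all: it is stated with a citation to \cite[Theorem~A.7.1]{talagrand10} and then used as a black box throughout Appendix~B. There is therefore no in-paper argument to compare your proposal against.

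That said, your outline is exactly the standard Chernoff-method proof of such a Bernstein-type inequality: symmetrize to a one-sided tail, apply Markov to $e^{s\sum Z_i}$, extract moment bounds $\E|Z_1|^k\le 2k!/\beta^k$ from the hypothesis $\E(e^{\beta|Z_1|})\le 2$, feed these into the Taylor expansion of $\E(e^{sZ_1})$ (using $\E Z_1=0$ to kill the linear term), and then optimize over $s$ in two regimes. You also correctly identify the only delicate point, namely arranging the MGF estimate so that the optimization returns precisely the constants $4$ and $2$. One caution: your target bound $\E(e^{sZ_1})\le\exp(s^2/\beta^2)$ on the whole range $0<s\le\beta$ does not follow from the geometric-tail estimate you wrote down, since $1+2(s/\beta)^2/(1-s/\beta)$ blows up as $s\uparrow\beta$; the usual fix is to restrict $s$ to a bounded fraction of $\beta$ and match the constants by a more careful choice. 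This is a routine bookkeeping issue rather than a conceptual gap, and since the paper simply quotes the result, your proposal goes well beyond what the authors provide.
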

Define random variables
\begin{align*}
A &:= \frac{1}{n}\sum_{i=1}^n Y_i^2, \ \  B_j := \frac{1}{n}\sum_{i=1}^n X_{i,j}^2, \ \  D_j := \frac{1}{n}\sum_{i=1}^n Y_i X_{i,j}.
\end{align*}
As a corollary of Bernstein's inequality, we obtain the following.
\begin{cor}\label{boundcor}
There is a positive constant $C_3$ depending only on $C_1$ and $C_2$ such that for all $t\ge 0$, 
\[
\P(|A-\E(A)|\ge t) \ge 2e^{-C_3n \min\{t^2, t\}}, 
\]
ad the same bound holds for $\P(|B_j-\E(B_j)|\ge t)$ and $\P(|D_j-\E(D_j)|\ge t)$ for all $j$. 
\end{cor}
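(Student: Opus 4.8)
The plan is to apply the Bernstein inequality stated above to the centered summands. For the first bound, write $A - \E(A) = \frac1n\sum_{i=1}^n Z_i$ with $Z_i := Y_i^2 - \E(Y_i^2)$; these are i.i.d., mean zero, so it suffices to exhibit a single $\beta>0$, depending only on $C_1$ and $C_2$, for which $\E(e^{\beta|Z_1|})\le 2$. The same reduction handles $B_j$ (take $Z_i := X_{i,j}^2 - \E(X_{i,j}^2)$) and $D_j$ (take $Z_i := Y_iX_{i,j} - \E(Y_iX_{i,j})$), and the key point throughout is that assumption (4) supplies the \emph{same} tail constants $C_1,C_2$ for every $X_j$, so the $\beta$ we choose can be taken uniform in $j$.

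First I would record that the raw products are sub-exponential with rates controlled by $C_1,C_2$. Since $\P(Y^2\ge s) = \P(|Y|\ge\sqrt s)\le C_1e^{-C_2 s}$, integrating gives $\E(Y^2)\le C_1/C_2$, and likewise $\E(X_j^2)\le C_1/C_2$ for every $j$; for the cross term the bound $|Y X_j|\le\tfrac12(Y^2+X_j^2)$ together with a union bound yields $\P(|YX_j|\ge s)\le \P(Y^2\ge s)+\P(X_j^2\ge s)\le 2C_1e^{-C_2 s}$ and hence $\E|YX_j|\le 2C_1/C_2$. In each of the three cases the centering constant $m$ (the mean of the raw term) is therefore at most $2C_1/C_2$, and from $|Z_1|\le(\text{raw term})+m$ one gets $\P(|Z_1|\ge s)\le 2C_1e^{-C_2 s/2}$ for $s\ge 2m$, while for $s<2m\le 4C_1/C_2$ one bounds the probability by $1\le e^{2C_1}e^{-C_2 s/2}$. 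Thus there is a constant $C'$, depending only on $C_1,C_2$, with $\P(|Z_1|\ge s)\le C'e^{-C_2 s/2}$ for all $s\ge 0$, uniformly over the three quantities and over $j$.

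Next I would convert this tail bound to the required moment generating function bound via the layer-cake identity $\E(e^{\beta|Z_1|}) = 1 + \beta\int_0^\infty e^{\beta s}\,\P(|Z_1|\ge s)\,ds \le 1 + \beta C'\int_0^\infty e^{(\beta-C_2/2)s}\,ds = 1 + \frac{\beta C'}{C_2/2-\beta}$, valid for $\beta<C_2/2$; choosing $\beta := \min\{C_2/4,\ C_2/(4C')\}$, which depends only on $C_1,C_2$, makes the right-hand side at most $2$. Applying the Bernstein lemma with this $\beta$ then gives $\P(|\tfrac1n\sum Z_i|\ge t)\le 2\exp(-\min\{n\beta^2 t^2/4,\ n\beta t/2\})\le 2\exp(-C_3 n\min\{t^2,t\})$ with $C_3 := \min\{\beta^2/4,\ \beta/2\}$, which again depends only on $C_1,C_2$; the identical computation produces the bounds for $B_j$ and $D_j$. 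The only point requiring care—the ``main obstacle,'' such as it is—is the bookkeeping that keeps every intermediate constant ($\E(Y^2)$, $\E(X_j^2)$, $\E|YX_j|$, the centering $m$, and hence $C'$ and $\beta$) controlled purely by $C_1,C_2$ and independent of the unknown distribution and of $j$; the rest is the routine tail-integral-to-MGF estimate plus a direct invocation of the stated Bernstein inequality.
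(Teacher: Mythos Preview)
Your proof is correct and follows essentially the same approach as the paper: center the summands, use the sub-Gaussian tail assumption (4) to deduce a sub-exponential tail for $Y^2$, $X_j^2$, and $YX_j$, convert this to the MGF condition $\E(e^{\beta|Z_1|})\le 2$ via the layer-cake/integration-by-parts identity, and invoke Bernstein's inequality. If anything, your version is slightly more careful than the paper's in two places---you include the ``$1+$'' in the layer-cake formula for $\E(e^{\beta|Z_1|})$ and you explicitly treat the small-$s$ regime when passing from the raw tail to the centered tail---but these are cosmetic differences, not a different method.
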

\begin{proof}
Let $Z_i := Y_i^2 -\E(A)$. Then $\E(Z_i)=0$, and for any $t\ge 0$, 
\[
\P(|Z_1| \ge t) \le \P(Y_1^2 \ge t) \le C_1 e^{-C_2t}. 
\]
Thus, for any $\beta \in (0,C_2)$, 
\begin{align*}
\E(e^{\beta |Z_1|}) &=\int_0^\infty \beta e^{\beta t} \P(|Z_1|\ge t)dt\\
&\le \int_0^\infty \beta e^{\beta t} C_1 e^{-C_2 t} dt\\
&= \frac{C_1 \beta }{C_2 - \beta}. 
\end{align*}
Choosing $\beta$ sufficiently small makes the upper bound $\le 2$, allowing us to apply Bernstein's inequality. A similar argument works for $B_j$. For $D_j$, take $Z_i := Y_i X_{i,j}-\E(D_j)$. Then note that $\E(Z_i)=0$, and for any $t\ge 2|\E(D_j)|$,
\begin{align*}
\P(|Z_1|\ge t) &\le \P(|Y_1X_{1,j}| \ge t/2) \\
&\le \P(Y_1^2 \ge t/2) + \P(X_{1,j}^2 \ge t/2)\\
&\le 2C_1 e^{-C_2 t/2}. 
\end{align*}
Since $|\E(D_j)|$ can be bounded above by a number that depends only on $C_1$ and $C_2$, this shows that there are constants $C_3$ and $C_4$ depending only on $C_1$ and $C_2$ such that for all $t\ge 0$,
\[
\P(|Z_1|\ge t) \le C_3 e^{-C_4 t}. 
\]
The proof is now completed by proceeding as before.
\end{proof}

We also obtain the following second corollary.
\begin{cor}\label{boundcor2}
There are positive constants $C_5$, $C_6$ and $C_7$ depending only on $C_0$, $C_1$ and $C_2$ such that for any $t\in [0, C_5]$ and any $i$ and $j$, $\P(| \hat\beta^{-i}_j-\beta_j |\ge t) $ and $\P(|\hat{\alpha}^{-i}_j-\alpha_j |\ge t) $ are bounded above by $ C_6 e^{-C_7n t^2}$. 
\end{cor}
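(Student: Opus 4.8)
\textbf{Proof proposal for Corollary \ref{boundcor2}.}

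The plan is to write $\hat\beta^{-i}_j$ as a ratio of two empirical quantities — a leave-one-out sample covariance in the numerator and a leave-one-out sample variance in the denominator — and to show that each concentrates around its population value, then control the ratio. Introduce the leave-one-out moments
\[
\bar Y^{-i} := \tfrac{1}{n-1}\sum_{k\ne i} Y_k,\quad \bar X^{-i}_j := \tfrac{1}{n-1}\sum_{k\ne i}X_{k,j},\quad \overline{YX}^{-i}_j := \tfrac{1}{n-1}\sum_{k\ne i}Y_kX_{k,j},\quad \overline{X^2}^{-i}_j := \tfrac{1}{n-1}\sum_{k\ne i}X_{k,j}^2,
\]
so that $\hat\beta^{-i}_j = \bigl(\overline{YX}^{-i}_j - \bar Y^{-i}\bar X^{-i}_j\bigr)\big/\bigl(\overline{X^2}^{-i}_j - (\bar X^{-i}_j)^2\bigr)$ and $\hat\alpha^{-i}_j = \bar Y^{-i} - \hat\beta^{-i}_j \bar X^{-i}_j$. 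The population limits are $\cov(Y,X_j)$, $\var(X_j)\ge C_0$, $\E(Y)$, $\E(X_j)$ respectively.

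Each of the four quantities above is an average of $n-1\ge n/2$ i.i.d.\ terms, so exactly the Bernstein argument used to prove Corollary \ref{boundcor} applies verbatim with $n$ replaced by $n-1$ (costing only a factor $2$ in the exponent): for $\bar Y^{-i},\bar X^{-i}_j$ this uses the sub-Gaussian tails of assumption (4) directly, and for $\overline{YX}^{-i}_j,\overline{X^2}^{-i}_j$ it uses the same bound on $\E(e^{\beta|\cdot|})$ obtained there. Restricting to $t\le 1$ turns $\min\{t^2,t\}$ into $t^2$, and a union bound over these four events produces a single event $G$ of probability at least $1-8e^{-cnt^2}$, for a constant $c$ depending only on $C_1,C_2$, on which all four empirical moments lie within $t$ of their population values.

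On $G$ I would propagate these bounds through the algebra. Assumption (4) gives $\E(Y^2),\E(X_j^2)\le C$ for a constant $C=C(C_1,C_2)$, hence $|\E Y|,|\E X_j|\le\sqrt C$ and, by Cauchy–Schwarz, $|\cov(Y,X_j)|\le C$. Therefore, choosing $C_5=C_5(C_0,C_1,C_2)\le 1$ small enough, for $t\le C_5$ the numerator of $\hat\beta^{-i}_j$ is within $O(t)$ of $\cov(Y,X_j)$ and the denominator is within $O(t)$ of $\var(X_j)$, in particular $\ge C_0/2$. The elementary estimate $\bigl|\tfrac{a}{b}-\tfrac{a'}{b'}\bigr|\le \tfrac{|a-a'|}{|b|}+\tfrac{|a'|\,|b-b'|}{|b|\,|b'|}$ then gives $|\hat\beta^{-i}_j-\beta_j|\le C' t$ on $G$. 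For the intercept, bound $|\hat\alpha^{-i}_j-\alpha_j|\le |\bar Y^{-i}-\E Y| + |\hat\beta^{-i}_j\bar X^{-i}_j - \beta_j\E X_j|$ and split the last term by adding and subtracting $\hat\beta^{-i}_j\E X_j$, using the bounds just derived, to get $|\hat\alpha^{-i}_j-\alpha_j|\le C'' t$. Finally, replacing $t$ by $t/\max\{C',C''\}$ and relabelling constants yields the stated bound $C_6 e^{-C_7 n t^2}$ for $t\in[0,C_5]$.

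The only delicate point is keeping the denominator bounded away from zero: this is precisely why the corollary is stated only for $t$ in a bounded interval, since one needs $t\le C_5$ with $C_5$ depending on $C_0$ so that $\var(X_j)-O(t)\ge C_0/2$. Beyond that, everything is a union bound plus the two-term ratio estimate, with assumption (4) doing double duty — supplying both the concentration of the four moments and the a priori upper bounds on $\E(Y^2),\E(X_j^2)$ used in the ratio estimate.
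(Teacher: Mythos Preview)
Your proposal is correct and follows essentially the same route as the paper: define the four leave-one-out empirical moments, apply the Bernstein argument of Corollary~\ref{boundcor} to each (with $n-1\ge n/2$), use $\var(X_j)\ge C_0$ to keep the denominator bounded below, and propagate through the ratio. The only cosmetic difference is that the paper writes $\hat\beta^{-i}_j-\beta_j$ as a single fraction and bounds the numerator $Q_1-Q_2Q_3-(Q_4-Q_3^2)\beta_j$ directly (noting its population value is zero), whereas you use the generic $|a/b-a'/b'|$ estimate; both require the same bound on $|\beta_j|$ via $C_0,C_1,C_2$ and lead to the same conclusion.
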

\begin{proof}
Fix $i,j$. Let
\begin{align*}
&Q_1:= \frac{1}{n-1}\sum_{k\ne i}Y_k X_{k,j}, \ \ Q_2:= \frac{1}{n-1} \sum_{k\ne i}Y_k, \\
&Q_3:= \frac{1}{n-1} \sum_{k\ne i} X_{k,j}, \ \ Q_4 := \frac{1}{n-1}\sum_{k\ne i}X_{k,j}^2.%, \ \ Q_5:= \frac{1}{n-1}\sum_{k\ne i}X_{k,j}.
\end{align*}
Using the same approach as in Corollary \ref{boundcor} and the fact that $n-1 \ge n/2$ (because $n\ge 2$), we deduce the concentration inequality
\begin{align}\label{qlineq}
\P(|Q_l - \E(Q_l)| \ge t) \le 2e^{-Kn\min\{t^2,t\}}
\end{align}
for each $1\le l\le 4$. Now, note that $\E(Q_4) = \E(X_j^2)$ and $\E(Q_3) = \E(X_j)$. Moreover, recall that $\var(X_j) = \E(X_j^2)-(\E(X_j))^2 \ge C_0>0$. Combining these observations with the above inequality, we see that there are positive constants $K_1$ and $K_2$ such that 
\begin{align}\label{q4q3}
\P(|Q_4 -Q_3^2| < C_0/2) \le K_1 e^{-K_2n}.
\end{align}
Since 
\[
 \hat\beta^{-i}_j = \frac{Q_1 -Q_2Q_3}{Q_4 - Q_3^2}, 
\]
the inequality \eqref{q4q3} gives 
\begin{align}
\P(| \hat\beta^{-i}_j-\beta_j|\ge t) &= \P\biggl(\biggl|\frac{Q_1-Q_2Q_3 - (Q_4-Q_3^2)\beta_j}{Q_4-Q_3^2}\biggr|\ge t\biggr)\notag \\
&\le \P(|Q_1-Q_2Q_3 - (Q_4-Q_3^2)\beta_j|\ge C_0 t/2) +  K_1 e^{-K_2n}.\label{thetaineq}
\end{align}
Now take any $s>0$, and suppose that $|Q_l-\E(Q_l)|< s$ for $1\le l\le 4$. Then we also have 
\[
|Q_3^2 - (\E(Q_3))^2| \le |Q_3- \E(Q_3)|^2 + 2|\E(Q_3)||Q_3-\E(Q_3)| < s^2 + K_3s,
\]
where $K_3 $ depends only on $C_1$ and $C_2$. Similarly,
\[
|Q_2Q_3 - \E(Q_2)\E(Q_3)| < K_4s,
\]
where $K_4$ depends only on $C_1$ and $C_2$. Thus, we have
\begin{align*}
&|(Q_1-Q_2Q_3 - (Q_4-Q_3^2)\beta_j) - (\E(Q_1)-\E(Q_2)\E(Q_3) - (\E(Q_4)-(\E(Q_3))^2)\beta_j)|\\
&< K_5s+K_6s^2,
\end{align*}
where $K_5$ and $K_6$ depend only on $C_0$, $C_1$ and $C_2$. But
\begin{align*}
&\E(Q_1)-\E(Q_2)\E(Q_3) - (\E(Q_4)-(\E(Q_3))^2)\beta_j\\
&= \E(YX_j) - \E(Y)\E(X_j) - (\E(X_j^2) - (\E(X_j))^2) \frac{\E(YX_j)-\E(Y)\E(X_j)}{\E(X_j) -(\E(X_j))^2} = 0.
\end{align*}
Plugging this into the previous display shows that
\begin{align}\label{qineq}
\P(|Q_1-Q_2Q_3 - (Q_4-Q_3^2)\beta_j|\ge K_5s + K_6s^2)&\le \sum_{l=1}^4 \P(|Q_l- \E(Q_l)|\ge s)
\end{align}
Choosing $s$ to solve $K_5s + K_6s^2 = C_0t/2$, and combining equations \eqref{qlineq}, \eqref{thetaineq} and \eqref{qineq} yields the proof of the claimed concentration inequality for $ \hat\beta^{-i}_j$. The proof for $\hat{\alpha}^{-i}_j$ is similar. We omit the details.
\end{proof}
We are now ready to prove Theorem \ref{mainthm}.

\begin{proof}[Proof of Theorem \ref{mainthm}]
Throughout this proof, $K_1,K_2,\ldots$ will denote positive constants that may depend only on $C_0,C_1,C_2,\theta$ and $|S|$ and nothing else, whose values may change from line to line. Also, we will denote by $o(1)$ any random variable $Z$ such that 
\[
\P(|Z|\ge t) \le K_1 ne^{-K_2n t^2}
\]
for all $t\in [0,K_3]$ (for some $K_1,K_2,K_3$ according to the above convention), and by $O(1)$ any random variable $Z$ such that $|Z-K_1|=o(1)$ for some $K_1$. 
%For $1\le j\le p$, let
%\[
%\hat{\alpha}_j := \frac{1}{n}\sum_{i=1}^n \hat{\alpha}^{-i}_j. 
%\]
Define 
\begin{align*}
\tL(\theta_0,\theta_1,\ldots,\theta_p) :=  \frac{1}{n}\sum_{i=1}^n (Y_i - \theta_0 - \theta_1(\hat{Y}_{i,1}-\hat{\alpha}_1)- \cdots- \theta_p(\hat{Y}_{i,p}-\hat{\alpha}_p))^2 +\lambda \sum_{j=1}^p \theta_j. 
\end{align*}
Note that $(\hat{\theta}_0',\hat{\theta}_1,\ldots,\hat{\theta}_p)$ minimizes $\tL$ under $\hat{\theta}_j \ge0$ for all $1\le j\le p$ if and only if $(\hat{\theta}_0,\hat{\theta}_1,\ldots,\hat{\theta}_p)$ minimizes $L$ under the same constraint, where
\[
\hat{\theta}_0 = \hat{\theta}_0' - \sum_{j=1}^p \hat{\theta}_j \hat{\alpha}_j. 
\]
Note that this means $\hat{\gamma}_0=\hat{\theta}_0'$. 
We will henceforth consider this modified optimization problem. Note that for each $1\le j\le p$, and for any $\theta_0\in  \R$ and $\theta_1,\ldots,\theta_p\ge 0$, 
\begin{align}
\tL_j &:= \frac{\partial \tL}{\partial \theta_j } \notag \\
&= - \frac{2}{n}\sum_{i=1}^n (Y_i -\theta_0- \theta_1(\hat{Y}_{i,1}-\hat{\alpha}_1)- \cdots- \theta_p(\hat{Y}_{i,p}-\hat{\alpha}_p))(\hat{Y}_{i,j}-\hat{\alpha}_j) +\lambda.\label{ljform}
\end{align}
By an application of the Cauchy--Schwarz inequality, we get 
\begin{align*}
&\biggl|\frac{1}{n}\sum_{i=1}^n (Y_i - \theta_0 - \theta_1(\hat{Y}_{i,1}-\hat{\alpha}_1)- \cdots- \theta_p(\hat{Y}_{i,p}-\hat{\alpha}_p))(\hat{Y}_{i,j}-\hat{\alpha}_j)\biggr|\\
&\le \biggl(\frac{1}{n}\sum_{i=1}^n (Y_i -\theta_0- \theta_1(\hat{Y}_{i,1}-\hat{\alpha}_1)- \cdots- \theta_p(\hat{Y}_{i,p}-\hat{\alpha}_p))^2\biggr)^{1/2}\biggl(\frac{1}{n}\sum_{i=1}^n(\hat{Y}_{i,j}-\hat{\alpha}_j)^2\biggr)^{1/2}\\
&\le \sqrt{\tL(\theta_0,\ldots,\theta_p)} \biggl(\frac{1}{n}\sum_{i=1}^n(\hat{Y}_{i,j}-\hat{\alpha}_j)^2\biggr)^{1/2}.
\end{align*}
Now, note that 
\begin{align*}
\frac{1}{n}\sum_{i=1}^n(\hat{Y}_{i,j}-\hat{\alpha}_j)^2 &\le \frac{2}{n}\sum_{i=1}^n(\hat{Y}_{i,j}-\hat{\alpha}^{-i}_j)^2 +\frac{ 2}{n}\sum_{i=1}^n(\hat{\alpha}^{-i}_j-\hat{\alpha}_j)^2\\
&\le \frac{2}{n}\sum_{i=1}^n\hat{\beta}^2_{i,j}X_{i,j}^2 + \frac{4}{n}\sum_{i=1}^n (\hat{\alpha}^{-i}_j -\alpha_j)^2 + 4(\hat{\alpha}_j - \alpha_j)^2 \\
&\le \biggl(\frac{2}{n}\sum_{i=1}^n X_{i,j}^2\biggr) \max_{1\le i\le n}  (\hat\beta^{-i}_j)^2 + 4\max\{(\hat{\alpha}_j - \alpha_j)^2, \max_{1\le i\le n}(\hat{\alpha}^{-i}_j-\alpha_j)^2\}. 
\end{align*}
Let $M^\beta_j$ and $M^\alpha_j$ denote the two maxima on the right. Combining the previous three displays, we get that for $\theta_0\in \R$ and $\theta_1,\ldots,\theta_p\ge 0$, 
\begin{align}\label{ljineq}
\tL_j(\theta_0,\ldots, \theta_p) &\ge \lambda - \sqrt{\tL(\theta_0,\ldots,\theta_p)(2B_jM^\beta_j + 4M^\alpha_j )}.
\end{align}
Take any $j\in \{1,\ldots,p\} \setminus S$. Suppose that $\hat{\gamma}_j\ne 0$. Then $\hat{\theta}_j >0$. This implies that 
\[
\tL_j(\hat{\gamma}_0, \hat{\theta}_1,\ldots,\hat{\theta}_p) =0,
\]
because otherwise, we can slightly perturb $\hat{\theta}_j$ while maintaining the non-negativity constraint and decreasing the value of $L$. By inequality \eqref{ljineq}, this implies that if $\hat{\theta}_j >0$, then we must have
\[
\tL(\hat{\gamma}_0,\hat{\theta}_1,\ldots,\hat{\theta}_p)(2B_j M^\beta_j + 4M^\alpha_j ) \ge \lambda^2.
\]
But note that
\begin{align*}
\tL(\hat{\gamma}_0,\hat{\theta}_1,\ldots,\hat{\theta}_p) &\le \tL(0,0,\ldots,0) = A. 
\end{align*}
Thus, we conclude that if $\hat{\theta}_j >0$, then 
\[
\lambda^2 \le A(2B_j M^\beta_j + 4M^\alpha_j).
\]
Thus, 
\begin{align*}
\P(\hat{\theta}_j >0)&\le \P(A > 2\E(A)) + \P(B_j> 2\E(B_j)) \\
&\qquad + \P\biggl(M_j^\beta \ge \frac{\lambda^2}{16\E(A)\E(B_j)}\biggr) + \P\biggl(M_j^\alpha \ge \frac{\lambda^2}{16\E(A)}\biggr).
\end{align*}
By Corollary \ref{boundcor}, the first two probabilities on the right are bounded above by $K_1 e^{-K_2 n}$. Next, note that if 
\begin{align}\label{lambdacond}
K_3 |\beta_j|\le \lambda \le K_4,
\end{align}
for some sufficiently small $K_3$ and $K_4$, then by Corollary \ref{boundcor2},
\begin{align*}
\P\biggl(M_j^\beta \ge \frac{\lambda^2}{16\E(A)\E(B_j)}\biggr) &\le n  \P\biggl(\hat{\beta}_{1,j}^2\ge \frac{\lambda^2}{16\E(A)\E(B_j)}\biggr)\\
&\le n \P(|\hat{\beta}_{1,j} - \beta_j| \ge K_5 \lambda) \le K_6n e^{-K_7n\lambda^2}.
\end{align*}
Similarly, the same bound holds for the tail probability of $M_j^\alpha$. Combining, we get that under the condition \eqref{lambdacond},
\[
\P(\hat{\theta}_j > 0) \le K_8 ne^{-K_9n\lambda^2},
\]
and thus, 
\begin{align}\label{pebound}
\P(\hat{\gamma}_j\ne 0 \text{ for some } j\notin S\cup\{0\})&\le \P(E^c) \le K_8pn e^{-K_9n\lambda^2},
\end{align}
where $E$ denotes the event that $\hat{\theta}_j = 0$ for all $j\notin S\cup \{0\}$. Suppose that $E$ happens. Take any $k\in S$. If $\hat{\theta}_k \ne 0$, then $\tL_k(\hat{\gamma}_0,\hat{\theta}_1,\ldots,\hat{\theta}_p)=0$, whereas if  $\hat{\theta}_k = 0$, then $\tL_k(\hat{\gamma}_0,\hat{\theta}_1,\ldots,\hat{\theta}_p)\ge 0$ and $(\gamma_k - \hat{\gamma}_k)/\beta_k = \gamma_k/\beta_k > 0$. Thus, in either case, we have
\[
\frac{\gamma_k - \hat{\gamma}_k}{\beta_k}\tL_k(\hat{\gamma}_0,\hat{\theta}_1,\ldots,\hat{\theta}_p) \ge 0.
\]
By the formula \eqref{ljform} for $\tL_j$, this shows that 
\begin{align}
\frac{\lambda (\gamma_k - \hat{\gamma}_k)}{2\beta_k}&\ge \biggl(\frac{\gamma_k - \hat{\gamma}_k}{\beta_k}\biggr)\frac{1}{n}\sum_{i=1}^n\biggl(Y_i - \hat{\gamma}_0 - \sum_{j\in S} \hat{\theta}_j (\hat{Y}_{i,j} - \hat{\alpha}_j)\biggr)(\hat{Y}_{i,k} - \hat{\alpha}_k)\notag \\
&= \frac{\gamma_k - \hat{\gamma}_k}{n\beta_k}\sum_{i=1}^n \biggl[\gamma_0 - \hat{\gamma}_0 + \sum_{j\in S}(\gamma_j X_{i,j}   -  \hat{\theta}_j (\hat{Y}_{i,j} - \hat{\alpha}_j))\biggr] (\hat{Y}_{i,k} - \hat{\alpha}_k) \notag \\
&\qquad \qquad + \frac{\gamma_k - \hat{\gamma}_k}{n\beta_k}\sum_{i=1}^n \epsilon_i (\hat{Y}_{i,k} - \hat{\alpha}_k). \label{mainid}
\end{align}
%For each $j$, let
%\[
%\hat{\beta}_j := \frac{1}{n}\sum_{i=1}^n  \hat\beta^{-i}_j,
%\]
Define  
\[
M := \max_{1\le i\le n, \, j\in S} (| \hat\beta^{-i}_j - \beta_j| + |\hat{\beta}_j - \beta_j|). %\sqrt{M_j^\beta}.
\]
Then 
\begin{align}
|(\hat{Y}_{i,k} -\hat{\alpha}_k) - \beta_k X_{i,k} | &= |\hat{\beta}_{i,k} - \beta_k ||X_{i,k}|\le M|X_{i,k}|,\label{mainid2}
\end{align}
and since $\hat{\gamma}_j = \hat{\theta}_j \hat{\beta}_j$, 
\begin{align}
|\hat{\theta}_j (\hat{Y}_{i,j} -\hat{\alpha}_j) -\hat{ \gamma}_j X_{i,j} | &= |\hat{\theta}_j  \hat\beta^{-i}_jX_{i,j} - \hat{\theta}_j\hat{\beta}_j X_{i,j} | \notag \\
&= | \hat\beta^{-i}_j - \hat{\beta}_j||\hat{\theta}_jX_{i,j}|\le M|\hat{\theta}_j| |X_{i,j}|.\label{mainid3}
\end{align}
By \eqref{mainid2} and \eqref{mainid3}, we have
\begin{align}\label{mainid4}
&|\hat{\theta}_j (\hat{Y}_{i,j} - \hat{\alpha}_j)(\hat{Y}_{i,k} - \hat{\alpha}_k) - \hat{\gamma}_j \beta_k X_{i,j}X_{i,k}| \notag \\
&\le |\hat{\theta}_j (\hat{Y}_{i,j} -\hat{\alpha}_j) -\hat{ \gamma}_j X_{i,j} | |\hat{Y}_{i,k} - \hat{\alpha}_k| + |\hat{ \gamma}_j X_{i,j} ||(\hat{Y}_{i,k} -\hat{\alpha}_k) - \beta_k X_{i,k} |\notag \\
&\le M|\hat{\theta}_j| |X_{i,j}|(|\beta_k X_{i,k}| + M|X_{i,k}|) + M |\hat{\gamma}_j X_{i,k}X_{i,j}|
\end{align}
Let $S' := S\cup \{0\}$. Define $X_0\equiv 1$ and $X_{i,0} \equiv 1$ for $1\le i\le n$. Combining equations \eqref{mainid}, \eqref{mainid2}, \eqref{mainid3} and \eqref{mainid4}, we get
\begin{align}
&\frac{\lambda(\gamma_k-\hat{\gamma}_k)}{2\beta_k} -\biggl[ (\gamma_k - \hat{\gamma}_k) \sum_{j\in S'} (\gamma_j-\hat{\gamma}_j) \biggl(\frac{1}{n}\sum_{i=1}^n X_{i,j} X_{i,k}\biggr) + (\gamma_k - \hat{\gamma}_k)\frac{1}{n}\sum_{i=1}^n \epsilon_i X_{i,k}\biggr]\notag\\
&\ge - K_5 (M+M^2)\biggl|\frac{\gamma_k - \hat{\gamma}_k}{n\beta_k}\biggr|\sum_{i=1}^n \biggl(|\gamma_0-\hat{\gamma}_0||X_{i,k}|\notag \\
&\hskip2in +\sum_{j\in S}(1+ |\hat{\theta}_j| + |\hat{\gamma}_j|) (|X_{i,j}|+|\epsilon_i|)| X_{i,k}|\biggr). \label{mainid5}
\end{align}
Next note that $\tL_0(\hat{\gamma}_0,\hat{\theta}_1,\ldots,\hat{\theta}_p)=0$, where $\tL_0$ denotes the derivative of $\tL$ is the zeroth coordinate. This means that 
\begin{align*}
0 &= \frac{1}{n}\sum_{i=1}^n \biggl(Y_i -\hat{\gamma}_0-\sum_{j\in S} \hat{\theta}_j(\hat{Y}_{i,j}-\hat{\alpha}_j)\biggr)\\
&= \frac{1}{n}\sum_{i=1}^n \biggl[\gamma_0 - \hat{\gamma}_0 + \sum_{j\in S}(\gamma_j X_{i,j}   -  \hat{\theta}_j (\hat{Y}_{i,j} - \hat{\alpha}_j))\biggr] + \frac{1}{n}\sum_{i=1}^n \epsilon_i.
\end{align*}
Proceeding as before (and recalling that $X_{i,0}=1$), this gives
\begin{align}
&(\gamma_0 - \hat{\gamma}_0) \sum_{j\in S'} (\gamma_j-\hat{\gamma}_j) \biggl(\frac{1}{n}\sum_{i=1}^n X_{i,j} X_{i,0}\biggr) + (\gamma_0 - \hat{\gamma}_0)\frac{1}{n}\sum_{i=1}^n \epsilon_i X_{i,0} \notag \\
&\le \frac{M |\gamma_0-\hat{\gamma}_0|}{n}\sum_{j\in S} |\hat{\theta}_jX_{i,j}|. \label{mainid6}
\end{align} 
Let $\Delta := \max_{j\in S'}|\hat{\gamma}_j -\gamma_j|$. For each $j\in S$, let $\theta_j := \gamma_j/\beta_j$. Recall the quantity $M_2$ from the statement of the theorem. Note that if 
\begin{align}\label{mcond}
M < \frac{1}{2}M_2,
\end{align}
we get  that for each $j\in S$, 
\begin{align*}
|\hat{\theta}_j -\theta _j| &= \biggl|\frac{\hat{\gamma}_j}{\hat{\beta}_j}-\frac{\gamma_j}{\beta_j}\biggr|\\
&\le \frac{|\hat{\gamma}_j-\gamma_j|}{|\hat{\beta}_j|} + \gamma_j \biggl|\frac{1}{\hat{\beta}_j} - \frac{1}{\beta_j}\biggr|\\
&\le K_6 (\Delta + M).
\end{align*}
Combining this observation with the inequalities \eqref{mainid5} and \eqref{mainid6}, and summing over $k\in S'$, we get that 
\begin{align}\label{deltaineq}
\sum_{j,k\in S'} \hat{\sigma}_{j,k} (\gamma_j-\hat{\gamma}_j)(\gamma_k - \hat{\gamma}_k)  &\le K_7 \lambda \Delta + K_8 \Delta Q_1 + K_9 (M+M^2) \Delta(\Delta + M)Q_2, 
\end{align}
where 
\[
\hat{\sigma}_{j,k} := \frac{1}{n}\sum_{i=1}^n X_{i,j} X_{i,k}, \ \ Q_1 := \max_{j\in S'} \biggl|\frac{1}{n}\sum_{i=1}^n \epsilon_i X_{i,j}\biggr|, \ \  Q_2 := \max_{j\in S'} \frac{1}{n}\sum_{i=1}^n X_{i,j}^2.
\]
Let $\hat{\eta}$ be the smallest eigenvalue of the positive semidefinite matrix $(\hat{\sigma}_{j,k})_{j,k\in S'}$, so that
\[
\sum_{j,k\in S'} \hat{\sigma}_{j,k} (\gamma_j-\hat{\gamma}_j)(\gamma_k - \hat{\gamma}_k) \ge \hat{\eta} \sum_{j\in S'} (\hat{\gamma}_j -\gamma_j)^2 \ge \hat{\eta} \Delta^2.
\]
Combining this with equation \eqref{deltaineq} and rearranging, we get
\begin{align*}
(\hat{\eta} - K_9(M+M^2)Q_2)\Delta^2\le (K_7 \lambda + K_8  Q_1+ K_9 (M+M^2)MQ_2)\Delta 
\end{align*}
If the coefficient of $\Delta^2$ on the left is positive, this gives
\[
\Delta \le \frac{K_7 \lambda + K_8  Q_1+ K_9 (M+M^2)MQ_2}{\hat{\eta} - K_9(M+M^2)Q_2}. 
\]
Now, using Bernstein's inequality, it is easy to show that $Q_1 = o(1)$ and $Q_2 = O(1)$ (following the conventions introduced at the beginning of the proof).  By Corollary \ref{boundcor2}, we know that $M = o(1)$. Again by Bernstein's inequality, $\hat{\sigma}_{j,k} = \E(X_jX_k)+o(1)$ for each $j,k\in S$. From this, it is easy to deduce via standard matrix inequalities that $\hat{\eta}=\eta_0 + o(1)$, where $\eta_0$ is the minimum eigenvalue of $(\E(X_jX_k))_{j,k\in S'}$. Recalling that $X_0=1$, we see that this is equal to the minimum eigenvalue $\eta$ of the covariance matrix of $(X_j)_{j\in S}$. Combining all of these, it is now straightforward that under the condition~\eqref{lambdacond}, 
\[
\P(E \cap \{\Delta > K_{10} \lambda\})\le K_{11} n e^{-K_{12}n\lambda^2}. 
\]
Together with equation \eqref{pebound}, this completes the proof.
\end{proof}

\begin{proof}[Proof of Theorem \ref{condthm}]
Note that since $Y= \gamma_0 + \sum_{j\in S} \gamma_j X_j + \epsilon$, and $\epsilon$ is independent of the $X_j$'s, we have 
\begin{align*}
\beta_j &= \frac{\cov(Y, X_j)}{\var(X_j)} \\
&= \frac{\sum_{k\in S} \gamma_k\cov(X_k, X_j)}{\var(X_j)}\\
&= \gamma_j  + \sum_{k\in S\setminus\{j\}}\gamma_k \frac{\cov(X_k,X_j)}{\var(X_j)} = \gamma_j +\sum_{k\in S\setminus\{j\}} \gamma_k \delta_{k,j}. 
\end{align*}
Since $\gamma_j\ne 0$, we may divide throughout by $\gamma_j$ and get
\[
\frac{\beta_j}{\gamma_j} = 1 +\sum_{k\in S\setminus\{j\}} \frac{\gamma_k}{\gamma_j} \delta_{k,j}. 
\]
Now, by assumption, $\delta_{k,j}\ge 0$ whenever $\gamma_k/\gamma_j >0$, and $\delta_{k,j}\le 0$ whenever $\gamma_k /\gamma_j <0$. Thus,
\[
\sum_{k\in S\setminus\{j\}} \frac{\gamma_k}{\gamma_j} \delta_{k,j} \ge 0.
\]
This shows that $\beta_j/\gamma_j \ge 1$. In particular, $\beta_j$ is nonzero and has the same sign as $\gamma_j$.
\end{proof}

\begin{proof}[Proof of  Theorem~\ref{condthm2}]
Note that 
\begin{align*}
\beta_j &= \frac{\cov(Y, X_j)}{\var(X_j)}\\
&= \sum_{k\in S} \gamma_k \frac{\cov(X_k,X_j)}{\var(X_j)}\\
&= \sum_{k\in S}\gamma_k \delta_{kj}.
\end{align*}
Thus,
\begin{align*}
\frac{\beta_j}{\gamma_j} &= \sum_{k\in S} \frac{\gamma_k}{\gamma_j} \delta_{kj}\\
&= \sum_{k\in A_j} \biggl|\frac{\gamma_k}{\gamma_j} \delta_{kj}\biggr| - \sum_{k\notin A_j } \biggl|\frac{\gamma_k}{\gamma_j} \delta_{kj}\biggr|\\
&= \frac{\sum_{k\in A_j}|\gamma_k \delta_{kj}| - \sum_{k\notin A_j} |\gamma_k \delta_{kj}|}{|\gamma_j|}. 
\end{align*}
Since $\beta_j \gamma_j \ge 0$ if and only if $\beta_j/\gamma_j\ge 0$, this proves the claim.
\end{proof}

%%%%%%%%%%%%%

\section{Proof of  Theorem \ref{uniregthm}}\label{sec:proof-theor-refun}
Let $\gamma := (\beta_0, \beta)\in \R^{p+1}$ and define $\gols$ and $\gur$ similarly. Let $X$ denote the $n\times (p+1)$ matrix whose  columns are indexed from $0$ to $p$; $X_{i,0}=1$ for each $i$, and for $1\le j\le p$, $X_{i,j}$ is as before. Let $Y\in \R^n$ be the vector whose $i^{\mathrm{th}}$ component is $Y_i$, and define $\epsilon$ similarly. Note that $Y=X\gamma+\epsilon$, and that $\muols$ is the Euclidean projection on $Y$ on the column space $\mc$ of $X$. Let 
\[
\md:= \{X\theta:\theta = (\theta_0,\ldots,\theta_p)\in \R^p,\, \theta_j \buni_j\ge 0  \text{ for all } 1\le j\le p\},
\]
so that $\md$ is a random closed convex subset of $\mc$, and $\muur$ is the Euclidean projection of $Y$ on $\md$. Let $E$ be the event that $\buni_j \bbuni_j\ge 0$ for each $1\le j\le p$. Note that if $E$ happens, then $\mu\in \md$.

\begin{lmm}\label{elmm}
There is a positive constant $C_0$ depending only on $\lambda_0$, $\delta$ and $\sigma$ such that 
\[
\P(E) \ge 1- pe^{-C_0n}.
\]
\end{lmm}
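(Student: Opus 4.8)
The plan is to bound $\P(E^c)$ by a union bound over coordinates. Write $E^c = \bigcup_{j=1}^p \{\buni_j \bbuni_j < 0\}$; for $j$ with $\bbuni_j = 0$ this event is empty, and for $j$ with $\bbuni_j \ne 0$ the event $\{\buni_j\bbuni_j<0\}$ forces $|\buni_j - \bbuni_j| > |\bbuni_j| \ge \delta$ (by definition of $\delta$). So it suffices to prove, for each fixed $j$ with $\bbuni_j\ne 0$, a bound of the form $\P(|\buni_j - \bbuni_j| \ge \delta) \le K e^{-cn}$ with $K$ an absolute constant and $c$ depending only on $\lambda_0,\lambda_1,\delta,\sigma$; summing over the at most $p \le r_0 n$ relevant coordinates and slightly decreasing the constant to absorb this polynomial factor (legitimate once $n$ exceeds a constant of the allowed form) then gives $\P(E^c)\le pe^{-C_0 n}$.

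For the per-coordinate estimate I would condition on the $j$-th covariate column and exploit joint Gaussianity. Decompose $Y_i = \beta^{\mathrm{uni}}_{0,j} + \bbuni_j X_{i,j} + \xi_i$ with $\xi_i := Y_i - \E(Y_i\mid X_{i,j})$; since $(Y_i,X_{i,j})$ is jointly Gaussian, the $\xi_i$ are i.i.d.\ $N(0,\tau_j^2)$ with $\tau_j^2 := \var(Y) - \bbuni_j^2\var(X_j)$, and the vector $(\xi_1,\dots,\xi_n)$ is independent of $(X_{1,j},\dots,X_{n,j})$. Since $\buni_j$ is the OLS slope of $Y$ on $X_j$, a short computation gives $\buni_j - \bbuni_j = \bigl(\tfrac1n\sum_i (X_{i,j}-\bar X_j)\xi_i\bigr)/\hat v_j$, where $\bar X_j := \tfrac1n\sum_i X_{i,j}$ and $\hat v_j := \tfrac1n\sum_i (X_{i,j}-\bar X_j)^2$; conditionally on the column this is a centered Gaussian with variance $\tau_j^2/(n\hat v_j)$, so the Gaussian tail bound gives $\P(|\buni_j-\bbuni_j|\ge \delta \mid (X_{i,j})_i) \le 2\exp(-\delta^2 n\hat v_j/(2\tau_j^2))$. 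I would then intersect with the event $\{\hat v_j \ge \lambda_0/2\}$ — whose complement has probability at most $e^{-c'n}$ for an absolute constant $c'$, by a routine $\chi^2$ concentration bound together with $\var(X_j)\ge \lambda_0$ — and, taking expectations, obtain $\P(|\buni_j-\bbuni_j|\ge\delta)\le 2\exp(-\delta^2\lambda_0 n/(4\tau_j^2)) + e^{-c'n}$.

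The main obstacle is the remaining ingredient: an upper bound $\tau_j^2\le\bar\tau^2$ with $\bar\tau^2$ of the allowed form, so that the exponent $\delta^2\lambda_0/(4\tau_j^2)$ is bounded below by a constant depending only on $\lambda_0,\lambda_1,\delta,\sigma$. One has $\tau_j^2 \ge \sigma^2$ trivially and $\tau_j^2 \le \var(Y) = \beta^\top\Sigma\beta + \sigma^2$, so it comes down to controlling $\var(Y)$, which cannot be extracted from generic concentration and must use the model structure: via the identity $(\Sigma\beta)_j = \bbuni_j\var(X_j)$, the factorization $\beta = \Sigma^{-1}(\Sigma\beta)$, and $\lambda_0 I\preceq\Sigma\preceq\lambda_1 I$, a bound on $\var(Y)$ reduces to an $\ell_2$ bound on the population univariate coefficients $(\bbuni_j)_j$; this is the point at which the standing assumptions of the section must be invoked. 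Granting such a bound, one sets $C_0$ to a suitably small multiple of $\min\{c',\,\delta^2\lambda_0/(4\bar\tau^2)\}$, and the union bound of the first paragraph closes the argument.
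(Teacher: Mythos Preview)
Your approach mirrors the paper's: a union bound over $j$ with $\bbuni_j\ne 0$, then condition on the design and apply a Gaussian tail for $\buni_j$, together with a $\chi^2$ bound for the sample variance of the column. The difference is that you condition on the $j$-th column alone and obtain the correct conditional law $\buni_j\sim N\bigl(\bbuni_j,\tau_j^2/(n\hat v_j)\bigr)$, whereas the paper conditions on the full matrix $X$ and writes $\buni_j\sim N\bigl(\bbuni_j,\sigma^2/\sum_i(X_{i,j}-\bar X_j)^2\bigr)$. Under full conditioning on $X$ the variance $\sigma^2$ is right, but the mean is not: it is the \emph{sample} regression slope of $X\beta$ on column $j$, a random quantity, not the population quantity $\bbuni_j$. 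That slip is exactly what lets the paper bypass the obstacle you flag.

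The obstacle is genuine and cannot be closed from the stated hypotheses. Take $\Sigma=I_p$ (so $\bbuni_j=\beta_j$ and the sign hypothesis $\bbuni_j\beta_j\ge 0$ of Theorem~\ref{uniregthm} holds automatically) and set $\beta=(M,\delta,0,\ldots,0)$ with $M$ large. Then $\tau_2^2=\var(Y)-\delta^2=M^2+\sigma^2$, and conditionally on the second column $\P(\buni_2<0)=\Phi\bigl(-\delta\sqrt{n\hat v_2/(M^2+\sigma^2)}\bigr)\to\tfrac12$ as $M\to\infty$ for each fixed $n$. Hence $\P(E^c)$ cannot be bounded by $pe^{-C_0n}$ with $C_0$ depending only on $\lambda_0,\delta,\sigma$ (or even only on $\lambda_0,\lambda_1,\delta,\sigma,r_0$). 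Your instinct to reduce the problem to a bound on $\var(Y)$ is the right diagnosis; there is no standing assumption of the section that supplies it, and the honest fix is simply to allow $C_0$ to depend on $\var(Y)$ (equivalently on $\|\beta\|$) as well, after which your argument goes through cleanly.
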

\begin{proof}
If $\bbuni_j =0$, then $\buni_j \bbuni_j\ge 0$ anyway. Thus,
\begin{align*}
\P(E^c) &= \P(\buni_j \bbuni_j < 0 \text{ for some } j \text{ such that } \bbuni_j \ne0)\\
&\le \sum_{j\, : \, \bbuni_j \ne 0} \P(\buni_j \bbuni_j < 0). 
\end{align*}
Take any $j$ such that $\bbuni_j \ne 0$. Then by standard results, conditional on $X$, 
\begin{align*}
\buni_j \sim N\biggl(\bbuni_j, \frac{\sigma^2}{\sum_{i=1}^n (X_{i,j} - \overline{X}_j)^2}\biggr),
\end{align*}
where $\overline{X}_j := \frac{1}{n}\sum_{i=1}^n X_{i,j}$. Recall that $|\bbuni_j|\ge \delta$ and $\var(X_{i,j})\ge \lambda_0$. From this, it is easy to deduce that there is a positive constant $C_0$ depending only on $\lambda_0$, $\delta$ and $\sigma$ such that $\P(E^c)\le p e^{-C_0n}$. This completes the proof.
\end{proof}

\begin{lmm}\label{keylmm}
If the event $E$ happens, then
\begin{align*}
\|\mu - \muur\|^2 &\le \|\mu - \muols\|^2 -  \|\muols-\muur\|^2.
\end{align*}
\end{lmm}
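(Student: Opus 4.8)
The plan is to use the elementary geometry of Euclidean projections, exploiting that $\md$ is a closed convex subset of the \emph{linear subspace} $\mc$. Recall that $\muols$ is the projection of $Y$ onto $\mc$, that $\muur$ is the projection of $Y$ onto $\md$, and that on the event $E$ we have $\mu\in\md$ (as already noted in the set-up). The first step is to record the relevant orthogonality relations. Since $\muols=\Pi_\mc(Y)$, the residual $Y-\muols$ is orthogonal to every vector of $\mc$; in particular, as $\muols,\muur\in\md\subseteq\mc$ and $\mu=X(\beta_0,\beta)\in\mc$, both $\muols-\muur$ and $\mu-\muur$ lie in $\mc$, so $\langle Y-\muols,\ \muols-\muur\rangle=0$ and $\langle Y-\muols,\ \mu-\muur\rangle=0$.

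The second step is to bring in the variational inequality characterizing the projection onto the closed convex set $\md$: for $\muur=\Pi_\md(Y)$ and every $d\in\md$ one has $\langle Y-\muur,\ d-\muur\rangle\le 0$. I would apply this with $d=\mu$ (legitimate precisely because $\mu\in\md$ on $E$), and then split $Y-\muur=(Y-\muols)+(\muols-\muur)$. By the orthogonality relation from the first step, the $(Y-\muols)$ term annihilates $\mu-\muur$, leaving
\[
\langle \muols-\muur,\ \mu-\muur\rangle\le 0 .
\]
Equivalently, this says $\muur$ is also the Euclidean projection of $\muols$ onto $\md$, which is the usual ``project onto the subspace first'' phenomenon, valid here because $\md\subseteq\mc$ and $\mc$ is a subspace.

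The final step is a Pythagorean expansion: writing $\mu-\muols=(\mu-\muur)-(\muols-\muur)$ gives
\[
\|\mu-\muols\|^2=\|\mu-\muur\|^2+\|\muols-\muur\|^2-2\langle\mu-\muur,\ \muols-\muur\rangle\ \ge\ \|\mu-\muur\|^2+\|\muols-\muur\|^2 ,
\]
where the inequality is exactly the bound from the second step, and rearranging yields the claimed inequality $\|\mu-\muur\|^2\le\|\mu-\muols\|^2-\|\muols-\muur\|^2$.

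I do not anticipate a serious obstacle: the argument is a standard two-stage projection identity. The only points that need care are (i) that the projection onto $\md$ is well defined and obeys the obtuse-angle inequality, which rests on $\md$ being closed and convex (already asserted in the set-up, $\md$ being a polyhedron as the image of an orthant under a linear map), and (ii) that one must be on the event $E$, so that $\mu\in\md$, in order to legitimately test the variational inequality at the point $\mu$.
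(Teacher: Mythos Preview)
Your proof is correct and follows essentially the same route as the paper's: both use that $Y-\muols\perp\mc$, then obtain the obtuse-angle inequality $\langle \muols-\muur,\ \mu-\muur\rangle\le 0$ from convex projection onto $\md$ (the paper derives it by first noting $\muur$ is the projection of $\muols$ onto $\md$ and differentiating $\|\muols-(t\mu+(1-t)\muur)\|^2$ at $t=0$; you cite the variational inequality for $\Pi_\md(Y)$ and split $Y-\muur$), and both finish with the same Pythagorean expansion of $\|\mu-\muols\|^2$.
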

\begin{proof}
Since $\muols$ is the projection of $Y$ on $\mc$ and $\mc$ is a subspace, it follows that  $Y-\muols$ is orthogonal to $x-\muols$  for all $x\in \mc$. Since $\md \subseteq \mc$, this implies that for any $\nu\in \md$,
\begin{align*}
\|Y-\nu\|^2 &= \|Y-\muols\|^2 + \|\muols - \nu\|^2.
\end{align*}
This shows that the projection of $Y$ on $\md$ (that is, $\muur$) is the same as the projection of $\muols$ on $\md$. Since $\md$ is convex and $\mu\in \md$ (because $E$ has happened), $t\mu + (1-t)\muur\in \md$ for all $t\in [0,1]$. But then, by the preceding sentence, 
\[
\|\muols - (t\mu + (1-t)\muur)\|^2\ge\|\muols - \muur\|^2
\]
for all $t\in [0,1]$. In particular, the derivative of the left side with respect to $t$ should be nonnegative at $t=0$. This shows that 
\[
(\muols - \muur)^T (\muur - \mu)\ge 0.
\]
From this, we get
\begin{align*}
\|\muols - \mu\|^2 - \|\muur-\mu\|^2 &= \|\muols -\muur\|^2 + 2(\muols - \muur)^T(\muur - \mu)\\
&\ge \|\muols-\muur\|^2.
\end{align*}
This completes the proof.
\end{proof}
\begin{lmm}\label{wishlmm}
Let $Z$ be an $n\times p$ matrix of i.i.d.~$N(0,1)$ random variables. Let $1\in \R^n$ be the vector of all $1$'s, and let $V := Z - \frac{1}{n}11^TZ$. Let $\mu$ be the minimum eigenvalue of $\frac{1}{n}V^TV$ and $\nu$ be the maximum eigenvalue of $\frac{1}{n}Z^TZ$. There are positive constants $C_1,C_2, \mu_0,\nu_0$ depending only on $r_0$ such that 
\[
\P(\mu\ge \mu_0, \, \nu \le \nu_0) \ge 1-C_1 e^{-C_2n}.
\]
\end{lmm}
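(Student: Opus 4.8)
The plan is to deduce both inequalities from the standard non-asymptotic concentration bound for the extreme singular values of a rectangular Gaussian matrix (Gordon's inequality / the Davidson--Szarek bound): if $W$ is an $N\times m$ matrix with i.i.d.\ $N(0,1)$ entries, then for every $t\ge 0$,
\[
\P\bigl(\sqrt N-\sqrt m-t\le s_{\min}(W)\le s_{\max}(W)\le \sqrt N+\sqrt m+t\bigr)\ge 1-2e^{-t^2/2},
\]
where $s_{\min}$ and $s_{\max}$ denote the smallest and largest singular values. For the bound on $\nu$ this is immediate: since $\nu=s_{\max}(Z)^2/n$ and $p<r_0 n$, taking $t=\sqrt n$ gives $s_{\max}(Z)\le(2+\sqrt{r_0})\sqrt n$, hence $\nu\le\nu_0:=(2+\sqrt{r_0})^2$, with probability at least $1-2e^{-n/2}$.

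For $\mu$ the extra step is to recognise the centered matrix $V$ as, in distribution, a slightly smaller i.i.d.\ Gaussian matrix. Write $V=PZ$, where $P=I-\frac1n 11^T$ is the orthogonal projection onto the hyperplane $1^\perp\subseteq\R^n$, of dimension $n-1$. Choosing $Q\in\R^{n\times(n-1)}$ with orthonormal columns spanning $1^\perp$, we have $P=QQ^T$ and $V^TV=(Q^TZ)^T(Q^TZ)$, so $\mu$ equals the minimum eigenvalue of $\frac1n\widetilde Z^T\widetilde Z$ with $\widetilde Z:=Q^TZ$. By rotational invariance of the standard Gaussian, each column $Q^Tz_j$ of $\widetilde Z$ is $N(0,I_{n-1})$ and the columns are independent, so $\widetilde Z$ is an $(n-1)\times p$ matrix of i.i.d.\ $N(0,1)$ entries and $\mu=s_{\min}(\widetilde Z)^2/n$ in distribution. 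Provided $p\le n-1$, apply the lower tail of the bound above with $t=c_0\sqrt n$, where $c_0:=(1-\sqrt{r_0})/4$: since $\sqrt{n-1}-\sqrt p\ge\sqrt n\,(\sqrt{1-1/n}-\sqrt{r_0})\ge 2c_0\sqrt n$ once $n$ exceeds a threshold $n_0$ depending only on $r_0$, we get $s_{\min}(\widetilde Z)\ge c_0\sqrt n$, hence $\mu\ge\mu_0:=c_0^2$, with probability at least $1-2e^{-c_0^2 n/2}$.

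A union bound then yields $\P(\mu\ge\mu_0,\ \nu\le\nu_0)\ge 1-4e^{-c_0^2 n/2}$ for all $n\ge n_0(r_0)$ (using $c_0\le 1$). For the finitely many $n<n_0(r_0)$ the asserted bound is vacuous as soon as $C_1$ is enlarged so that $C_1 e^{-C_2 n_0}\ge 1$; thus with $C_2:=c_0^2/2$ and $C_1:=\max\{4,e^{C_2 n_0}\}$ the conclusion holds for all $n$, and all four constants depend only on $r_0$.

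The only real work is the bookkeeping around the admissible range of $(n,p)$: one must check that $p\le n-1$ and that the gap $\sqrt{n-1}-\sqrt p$ is bounded below by a fixed positive multiple of $\sqrt n$ \emph{uniformly} over $p<r_0 n$, which is exactly why the small-$n$ cases have to be absorbed into $C_1$; and one must justify that $Q^TZ$ genuinely has i.i.d.\ standard Gaussian entries, i.e.\ that projecting away the all-ones direction leaves an honest $(n-1)\times p$ Gaussian ensemble. Both points are routine, so I do not anticipate a substantive obstacle.
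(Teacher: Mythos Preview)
Your proof is correct and follows essentially the same route as the paper. The paper's argument is a two-line appeal to the fact that $Z^TZ\sim W_p(n,I_p)$ and $V^TV\sim W_p(n-1,I_p)$, followed by a citation to Rudelson--Vershynin for the singular-value concentration; your projection argument $V^TV=(Q^TZ)^T(Q^TZ)$ with $Q^TZ$ an $(n-1)\times p$ i.i.d.\ Gaussian matrix is exactly the explicit verification of that Wishart identity, and the Davidson--Szarek bound you invoke is precisely the content of the cited results.
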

\begin{proof}
By basic facts from statistics, we know that $Z^TZ$ follows the Wishart distribution $W_p(n, I_p)$, and $V^TV \sim W_p(n-1, I_p)$. The claimed bounds now follow from known results in the literature, such as \cite[Proposition 2.4 and Theorem 3.3]{rudelsonvershynin10}.
\end{proof}
\begin{lmm}\label{lambdalmm}
Let $\hl_0$ and $\hl_1$ be the minimum and maximum eigenvalues of $\frac{1}{n} X^T X$. There are positive constants $C_3,C_4,\lambda_0', \lambda_1'$ depending only on $\lambda_0$, $\lambda_1$, and $r_0$ such that 
\[
\P(\lambda_0'\le \hl_0\le\hl_1\le \lambda_1') \ge 1-C_3 e^{-C_4n}.
\] 
\end{lmm}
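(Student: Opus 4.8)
The plan is to reduce everything to the Gaussian matrix $Z\in\R^{n\times p}$ of i.i.d.\ standard normals appearing in Lemma~\ref{wishlmm}, and then read off the conclusion from the bounds on its eigenvalues $\mu,\nu$ provided there. Write $X = \tilde X D$, where $\tilde X := [\,\mathbf{1}\ \ Z\,]\in\R^{n\times(p+1)}$ is the all-ones column adjoined to $Z$, $D := \mathrm{diag}(1,\Sigma^{1/2})$, and $Z$ is chosen so that the $n\times p$ matrix with rows $X_{i,1:p}$ equals $Z\Sigma^{1/2}$ (legitimate as a distributional identity, which is all we need for a probability bound). Then $\tfrac1n X^TX = D\bigl(\tfrac1n\tilde X^T\tilde X\bigr)D$, so by the standard congruence bounds $\lambda_{\min}(DAD)\ge\lambda_{\min}(D)^2\lambda_{\min}(A)$ and $\lambda_{\max}(DAD)\le\lambda_{\max}(D)^2\lambda_{\max}(A)$, combined with $\lambda_{\min}(D)^2=\min(1,\lambda_0)$ and $\lambda_{\max}(D)^2=\max(1,\lambda_1)$, it suffices to bound the eigenvalues of $\tfrac1n\tilde X^T\tilde X$ above and below by constants depending only on $r_0$, on an event of probability $\ge 1-Ce^{-C'n}$.

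For that, I would take a generic $\theta=(\theta_0,\tilde\theta)\in\R\times\R^p$ and decompose $\tilde X\theta$ along $\mathbf{1}$ and its orthogonal complement using the matrix $V = Z - \tfrac1n\mathbf{1}\mathbf{1}^TZ$ of Lemma~\ref{wishlmm}: setting $\bar Z := \tfrac1n Z^T\mathbf{1}$, one has $\tilde X\theta = (\theta_0+\bar Z^T\tilde\theta)\mathbf{1} + V\tilde\theta$ with $\mathbf{1}\perp V\tilde\theta$, hence
\[
\tfrac1n\|\tilde X\theta\|^2 = (\theta_0+\bar Z^T\tilde\theta)^2 + \tfrac1n\|V\tilde\theta\|^2 .
\]
The a priori facts I would use are $\|\bar Z\|^2 = \tfrac1{n^2}\mathbf{1}^TZZ^T\mathbf{1}\le\nu$ and $V^TV\preceq Z^TZ$, so that $\tfrac1n\|V\tilde\theta\|^2$ lies between $\mu\|\tilde\theta\|^2$ and $\nu\|\tilde\theta\|^2$. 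The upper bound on $\lambda_{\max}$ is then immediate from $(\theta_0+\bar Z^T\tilde\theta)^2\le 2\theta_0^2+2\nu\|\tilde\theta\|^2$, giving $\tfrac1n\|\tilde X\theta\|^2\le\max(2,3\nu)\|\theta\|^2$. For the lower bound I would split into two cases: if $|\theta_0|\le 2|\bar Z^T\tilde\theta|$, then $\theta_0^2\le 4\nu\|\tilde\theta\|^2$, so $\|\theta\|^2$ is comparable to $\|\tilde\theta\|^2$ and the term $\tfrac1n\|V\tilde\theta\|^2\ge\mu\|\tilde\theta\|^2$ already suffices; if $|\theta_0|>2|\bar Z^T\tilde\theta|$, then $|\theta_0+\bar Z^T\tilde\theta|>|\theta_0|/2$, so the first term alone controls $\theta_0^2$ while the second controls $\|\tilde\theta\|^2$. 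This yields $\lambda_{\min}(\tfrac1n\tilde X^T\tilde X)\ge\min\bigl(\tfrac14,\ \tfrac{\mu}{4\nu+1}\bigr)$.

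Finally I would intersect with the event $\{\mu\ge\mu_0,\ \nu\le\nu_0\}$ of Lemma~\ref{wishlmm}, which has probability $\ge 1-C_1e^{-C_2n}$, and collect constants, taking $\lambda_0' := \min(1,\lambda_0)\min\bigl(\tfrac14,\tfrac{\mu_0}{4\nu_0+1}\bigr)$, $\lambda_1' := \max(1,\lambda_1)\max(2,3\nu_0)$, $C_3:=C_1$, $C_4:=C_2$; these depend only on $\lambda_0,\lambda_1,r_0$ through $\min(1,\lambda_0),\max(1,\lambda_1),\mu_0,\nu_0,C_1,C_2$. There is no genuinely hard step here; the only point requiring care is the cross term created by the fact that the intercept column $\mathbf{1}$ is not centered — so $\tfrac1n\tilde X^T\tilde X$ is not block diagonal — which is exactly why the lower bound needs the two-case argument rather than a one-line appeal to Lemma~\ref{wishlmm}.
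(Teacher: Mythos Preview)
Your proof is correct and follows essentially the same strategy as the paper: both reduce to the Rayleigh quotient, decompose $\tilde X\theta$ orthogonally along $\mathbf{1}$ using $V$, get the upper bound from $(a+b)^2\le 2a^2+2b^2$, and obtain the lower bound via a two-case split according to the relative sizes of $|\theta_0|$ and the cross term, then invoke Lemma~\ref{wishlmm}. Your only organizational difference is the congruence factorization $X=\tilde X D$ with $D=\mathrm{diag}(1,\Sigma^{1/2})$, which lets you separate the deterministic $\Sigma$-dependence from the random $Z$-analysis up front; the paper instead carries $\Sigma^{1/2}\tilde y$ through the two-case argument and arrives at the analogous bound $\hat\lambda_0\ge\min\{1/8,\ \mu\lambda_0/(8\nu^2\lambda_1^2)\}$.
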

\begin{proof}
Let us write $X = [1 \, \, \, \tx]$, where $\tx$ consists of columns $1,\ldots,p$ of $X$ and $1$ denotes the first column, which has all $1$'s. Then for any $y = (y_0, \tilde{y})\in \R^{p+1}$, where $y_0$ denotes the first coordinate of $y$, we have
\begin{align}\label{xyeq}
\|Xy\|^2 = \|y_0 1 + \tx \tilde{y}\|^2.
\end{align}
Let $\Sigma^{1/2}$ denote the positive definite square-root of $\Sigma$. Then we can write $\tx = Z \Sigma^{1/2}$, where $Z$ is an $n\times p$ matrix with i.i.d.~$N(0,1)$ random variables. Let $V:= Z-\frac{1}{n}11^TZ$. Let $\mu$ be the smallest eigenvalue of $\frac{1}{n}V^TV$ and $\nu$ be the largest eigenvalue of $\frac{1}{n}Z^TZ$. By the above identity and the inequality $\|u+v\|^2\le 2\|u\|^2+2\|v\|^2$, we get
\begin{align*}
\|Xy\|^2 &= \|y_0 1 + Z \Sigma^{1/2} \tilde{y}\|^2 \\
&\le 2\|y_0 1\|^2 + 2\|Z\Sigma^{1/2}\tilde{y}\|^2\\
&\le 2y_0^2 n + 2n \nu \|\Sigma^{1/2}\tilde{y}\|^2\\
&\le 2y_0^2 n + 2n \nu \lambda_1 \|\tilde{y}\|^2 \le 2n(1+\lambda_1 \nu) \|y\|^2.
\end{align*}
This shows that 
\begin{align}\label{l1}
\hl_1 \le 2(1+\lambda_1 \nu).
\end{align}
Next, let $z$ be the projection of $Z \Sigma^{1/2} \tilde{y}$ on the span of $1$, given by
\[
z = n^{-1}11^T Z\Sigma^{1/2} \tilde{y}.
\]
Then by equation \eqref{xyeq},
\begin{align}
\|Xy\|^2 &= \|y_0 1- z\|^2 + \|\tx \tilde{y}-z\|^2\notag \\
&\ge \|\tx \tilde{y}-z\|^2 = \|Z \Sigma^{1/2} \tilde{y} - n^{-1}11^TZ\Sigma^{1/2} \tilde{y}\|^2\notag \\
&= \|V \Sigma^{1/2}\tilde{y}\|^2\ge n \mu \|\Sigma^{1/2}\tilde{y}\|^2 \ge n \mu \lambda_0 \|\tilde{y}\|^2.\label{xymain}
\end{align}
But by the first line of the above display, we also have
\begin{align}\label{xy1}
\|Xy\|^2 &\ge \|y_01 - z\|^2\ge (\|y_01\| - \|z\|)^2= (|y_0|\sqrt{n} - \|z\|)^2.
\end{align}
Now, note that 
\begin{align}\label{xy2}
\|z\| &= n^{-1} |1^T Z\Sigma^{1/2} \tilde{y}| \|1\| \le n^{-1/2} \|1\| \|Z\Sigma^{1/2}\tilde{y}\|\le \sqrt{n} \nu \lambda_1 \|\tilde{y}\|. 
\end{align}
So, if $ \nu \lambda_1 \|\tilde{y}\| \le \frac{1}{2}|y_0|$, then by \eqref{xy1} and \eqref{xy2}, 
\begin{align*}
\|Xy\|^2 &\ge \frac{1}{4}y_0^2 n = \biggl(\frac{1}{8} y_0^2 + \frac{1}{8}y_0^2\biggr) n \ge\biggl( \frac{1}{8} y_0^2 +\frac{4}{8}\nu^2 \lambda_1^2 \|\tilde{y}\|^2\biggr) n.
\end{align*}
Without loss, let us assume that $\nu$ and $\lambda_1$ were chosen so large that $4\nu^2 \lambda_1^2 \ge 1$. Then the above inequality gives
\begin{align}\label{xymin1}
\|Xy\|^2 \ge \biggl( \frac{1}{8} (\|y\|^2 - \|\tilde{y}\|^2)  +\frac{4}{8}\nu^2 \lambda_1^2 \|\tilde{y}\|^2\biggr) n\ge \frac{1}{8}\|y\|^2 n.
\end{align}
On the other hand, if $ \nu \lambda_1 \|\tilde{y}\| > \frac{1}{2}|y_0|$, then by equation \eqref{xymain},
\begin{align}
\|Xy\|^2 &\ge n\mu\lambda_0\biggl(\frac{1}{2}\|\tilde{y}\|^2 + \frac{1}{2}\|\tilde{y}\|^2\biggr)\notag\\
&\ge n\mu\lambda_0\biggl(\frac{1}{2}\|\tilde{y}\|^2 + \frac{1}{8\nu^2\lambda_1^2}y_0^2\biggr)\notag \\
&= n\mu\lambda_0\biggl(\frac{1}{2}\|\tilde{y}\|^2 + \frac{1}{8\nu^2\lambda_1^2}(\|y\|^2 - \|\tilde{y}\|^2)\biggr)\ge \frac{n\mu\lambda_0}{8\nu^2\lambda_1^2} \|y\|^2,\label{xymin2}
\end{align}
where the last inequality follows from the assumption that $4\nu^2 \lambda_1^2 \ge 1$. Combining  the inequalities \eqref{xymin1} and \eqref{xymin2}, we see that 
\begin{align}\label{l2}
\hl_0 \ge \min\biggl\{\frac{1}{8}, \, \frac{\mu\lambda_0}{8\nu^2\lambda_1^2}\biggr\}.
\end{align}
Inequalities \eqref{l1} and \eqref{l2} and Lemma \ref{wishlmm} complete the proof.
\end{proof}

We are now ready to prove Theorem \ref{uniregthm}. First, note that 
\begin{align*}
\|\muols - \muur\|^2 &= \|X\gols - X \gur\|^2 \ge n\hl_0 \|\gols - \gur\|^2\\
&=n\hl_0 (\aols - \aur)^2 +n \hl_0 \sum_{j=1}^p (\bols_j - \bur_j)^2.
\end{align*}
By the definition of $\bur$, $\bur_j \buni_j\ge 0$ for each $j$; that is, $\bur_j$ and $\buni_j$ are on the same side of zero on the real line (including the possibility that one or both may be equal to zero). Thus, if $E$ happens and $\bols_j \bbuni_j < 0$, then $\bols_j$ and $\bur_j$ are on opposite sides of zero (again, including the possibility that one or both may be equal to zero), and hence,
\[
(\bols_j - \bur_j)^2 \ge (\bols_j)^2.
\]
(Note that this does not hold if we weaken the condition to $\bols_j \bbuni_j \le 0$. This is because this weakening leaves open the possibility that $\bbuni_j = 0$ and $\bols_j,\bur_j$ are on the same side of zero.) Combining, we see that if $E$ happens, then
\begin{align}\label{mulower}
\|\muols - \muur\|^2  \ge n \hl_0 \sum_{j\, : \, \bols_j \bbuni_j < 0} (\bols_j)^2.
\end{align}
Take any $j$ such that $\beta_j=0$ and $\bbuni_j\ne 0$. Conditional on $X$,
\begin{align}\label{bolsj}
\bols_j \sim N(0,\sigma^2\theta_j),
\end{align}
where $\theta_0,\ldots,\theta_p$ are the diagonal elements of $(X^TX)^{-1}$. Since $\theta_j\ge (n\hl_1)^{-1}$, this shows that 
\begin{align}\label{mainlow}
\E((\bols_j)^2 1_{\{\bols_j \bbuni_j < 0\}}|X) = \frac{1}{2}\E((\bols_j)^2|X) \ge \frac{\sigma^2}{2n\hl_1}. 
\end{align}
Let $F$ be the event that $\lambda_0'\le \hl_0 \le \hl_1 \le \lambda_1'$, where $\lambda_0',\lambda_1'$ are the constants from Lemma~\ref{lambdalmm}, and let $H:=E\cap F$. Then by the inequality \eqref{mulower},
\begin{align}
\E\|\muols - \muur\|^2 &\ge \E(\|\muols - \muur\|^2 1_H) \notag \\
&\ge \E\biggl(n \hl_0 \sum_{j\, : \, \bols_j \bbuni_j < 0} (\bols_j)^2 1_H\biggr)\notag \\
&\ge n\lambda_0'\sum_{j\, :\,\beta_j =0, \, \bbuni_j\ne 0} \E( (\bols_j)^21_{\{\bols_j \bbuni_j < 0\}} 1_H).\label{main1}
\end{align}
Take any $j$ such that $\beta_j=0$ and $\bbuni_j\ne 0$. Note that 
\begin{align}
\E( (\bols_j)^21_{\{\bols_j \bbuni_j < 0\}} 1_H) &= \E( (\bols_j)^21_{\{\bols_j \bbuni_j < 0\}} 1_F) \notag \\
&\qquad - \E( (\bols_j)^21_{\{\bols_j \bbuni_j < 0\}} 1_{F\cap E^c}).\label{main2}
\end{align}
By Lemma~\ref{lambdalmm}, the inequality \eqref{mainlow}, and the fact that the event $F$ depends only on $X$, we get
\begin{align}
\E( (\bols_j)^21_{\{\bols_j \bbuni_j < 0\}} 1_F)  &= \E( \E((\bols_j)^21_{\{\bols_j \bbuni_j < 0\}} |X) 1_F) \notag \\
&\ge \frac{\sigma^2}{2n\lambda_1'} \P(F)\ge \frac{\sigma^2}{2n\lambda_1'} (1-C_3 e^{-C_4n}).\label{main3}
\end{align}
On the other hand, by the Cauchy--Schwarz inequality, equation \eqref{bolsj}, and Lemma \ref{elmm},
\begin{align}
\E( (\bols_j)^21_{\{\bols_j \bbuni_j < 0\}} 1_{F\cap E^c}) &\le [\E((\bols_j)^41_F) \P(E^c)]^{1/2}\notag \\
&\le [\E(\E((\bols_j)^4|X)1_F)]^{1/2} \sqrt{p}e^{-\frac{1}{2}C_0n}\notag \\
&\le \biggl(\frac{3\sigma^4}{n\lambda_0'}\biggr)^{1/2} \sqrt{p}e^{-\frac{1}{2}C_0n}\notag \\
&= \frac{\sqrt{3 r}\sigma^2}{\sqrt{\lambda_0'}}e^{-\frac{1}{2}C_0n}.\label{main4}
\end{align}
Combining equations \eqref{main1}, \eqref{main2}, \eqref{main3} and \eqref{main4}, we get
\begin{align*}
\E\|\muols - \muur\|^2 &\ge n\lambda_0' q \biggl( \frac{\sigma^2}{2n\lambda_1'} (1-C_3 e^{-C_4n}) - \frac{\sqrt{3 r}\sigma^2}{\sqrt{\lambda_0'}}e^{-\frac{1}{2}C_0n}\biggr)\\
&= \frac{\sigma^2q\lambda_0'}{2\lambda_1'} - C_3 \frac{\sigma^2q\lambda_0'}{2\lambda_1'} e^{-C_4n} - \sqrt{3r\lambda_0'} q\sigma^2 n e^{-\frac{1}{2}C_0n}. 
\end{align*}
By Lemma \ref{keylmm}, this completes the proof.

\section{Derivation supporting Section~\ref{sec:sparsity}}\label{sec:deriv-supp-sect}
We wish to establish when the correlations of the univariate LOO
fitted values with the response $y$ are likely to be positive. Thanks
to Chris Habron for outlining this analysis.

Assume we have standardized both the feature vectors $x_j$ and
response $y$ to have zero mean and unit variance. Since the analysis
focuses on individual features, we will drop the index $j$.

Formula (\ref{eqn:LOOres}) relates the LOO residuals to the OLS residuals:
\[
y_i - \hat{\eta}^{-i} = \frac{y_i - \hat{\eta}^i}{1 - H_{ii}}.
\]
Since both $x$ and $y$ are standardized, the vector of OLS fits is given by
\begin{eqnarray}
  \hat{\eta} &=& x\hat{\beta}\nonumber\\
  &=& x(x^\top y)/n.\label{eg:etahat}
\end{eqnarray}

With
$D=\mbox{diag}(\frac1{1-H_{11}},\frac1{1-H_{22}},\ldots,\frac1{1-H_{nn}})$
we can write the vector of LOO fits as
\begin{equation}
  \label{eq:etahatmi}
  \hat\eta^{loo} = (I-D)y+Dx(x^\top y)/n. 
\end{equation}

We look at the sample covariance between $y$ and  $\hat\eta^{loo}$
\begin{equation}
  \label{eq:scov}
\text{Cov}(y,\hat\eta^{loo}) = 1-(y^\top Dy)/n +(y^\top D
x)(x^\top y)/n^2.  
\end{equation}
We would like to know when this covariance is positive. (Note that it
is always non-negative if we replace $\hat\eta^{loo}$ with $\hat\eta$).
Since $x$ is standardized,
\[
H_{ii} = \frac{1}{n} + \frac{x_i^2}{n}
\]
and hence
\[
\frac{1}{1 - H_{ii}}= \frac{n}{n - 1 - x_i^2}
\]

To identify when this covariance becomes positive, we make the
approximation that $H_{ii}$ is constant, which implies $x_{i}^2=1$. This isn't exact, but for small correlations between $x$ and $y$, this introduces minimal error.
With this approximation
\[
\text{Cov}(y, \hat{\eta}^{loo}) \approx 1 - \frac{n}{n - 2} + \text{Cov}(y, x)^2 \cdot \frac{n}{n - 2}
\]

This becomes positive when
\[
\text{Cov}(y, x)^2 > \frac{2}{n}
\]

The critical value for a significance test for a correlation using the $t$-distribution is:
\[
r = \frac{t}{\sqrt{n - 2 + t^2}}
\]

We achieve this when $t = \sqrt{2}$, which corresponds to a two-sided $p$-value of 0.16 for large $n$.

\bibliographystyle{agsm}

%\bibliography{/Users/tibs/dropbox/texlib/tibs}
\bibliography{tibs}

\end{document}